\ifx\classfontsize\undefined\def\classfontsize{11pt}\fi
\newif\ifblind
\ifx\BLINDREVIEW\undefined\else\blindtrue\fi
\documentclass[\classfontsize]{article}
\usepackage{amssymb,amsmath,lscape,amsthm}
\usepackage[T1]{fontenc}
\usepackage{libertine}
\usepackage[scaled=0.83]{FiraSans}
\usepackage[scaled=0.83]{FiraMono}
\usepackage{upquote}
\usepackage[]{microtype}
\UseMicrotypeSet[protrusion]{basicmath} 
\usepackage{parskip}
\usepackage{fancyvrb}
\usepackage{hyperref}
\hypersetup{pdftitle={Detecting Where Effects Occur by Testing Hypotheses in Order},
            pdfauthor={\ifblind\else Jake Bowers; David Kim; Nuole Chen\fi},
            pdfborder={0 0 0},
            breaklinks=true}

\newcommand{\suppref}[2]{\ref{#1}}

\usepackage[margin=1.5in]{geometry}
\usepackage{graphicx,grffile}
\usepackage{float} 
\usepackage{lscape}
\usepackage{pdflscape}
\graphicspath{ {.}} 

\providecommand{\tightlist}{%
  \setlength{\itemsep}{0pt}\setlength{\parskip}{0pt}}

\usepackage[style=authoryear, backend=biber, uniquename=false, uniquelist=false]{biblatex}
\providecommand{\blindcite}[1]{\ifblind (reference omitted for anonymous review)\else\autocite{#1}\fi}

\title{Detecting Where Effects Occur by Testing Hypotheses in Order}
\ifblind\author{}\else\author{Jake Bowers
  \thanks{Political Science and Statistics, University of Illinois @ Urbana-Champaign, <jwbowers@illinois.edu>} \and
David Kim \thanks{Statistics, University of Illinois @ Urbana-Champaign, <davidk9@illinois.edu>}
  \and
  Nuole Chen
\thanks{MIT GOV/LAB,<nchen3@mit.edu>}}\fi
\date{\today}

\usepackage{booktabs}
\usepackage{multirow}
\usepackage[table]{xcolor}
\usepackage{array}
\newcolumntype{G}{@{}>{\centering\arraybackslash}p{6ex}@{}}

\def\Pr{\mathbb{P}}

\def\E{\mathbb{E}}

\newcommand{\calH}{\mathcal{H}}

\newcommand{\anc}{\text{anc}}

\DeclareMathOperator{\FWER}{FWER}

\usepackage{thm-restate} 

\declaretheorem[name=Theorem]{theorem}

\newtheorem{proposition}{Proposition}
\newtheorem{lemma}{Lemma}

\newtheorem{condition}{Condition}
\newtheorem{definition}{Definition}
\newtheorem{remark}{Remark}
\newtheorem{corollary}{Corollary}
\newtheorem*{corollary*}{Corollary}

\usepackage{tikz}
\usepackage{mathtools}
\usepackage{tikz-cd}
\usetikzlibrary{arrows,automata,positioning,trees,fit,graphs}
\usepackage{forest}

\usepackage{sparklines}

\definecolor{sparklinecolor}{named}{blue}
\definecolor{sparkrectanglecolor}{gray}{0.8}
\definecolor{sparkspikecolor}{named}{red}
\definecolor{bottomlinecolor}{gray}{0.2}
\usepackage[todonotes={textsize=footnotesize}]{changes}
\definechangesauthor{JB}

\tikzcdset{arrow style=tikz, diagrams={>=stealth}}

\newcommand{\cdboxcolor}{gray}

\newcommand{\cdbox}[1]{\fcolorbox{\cdboxcolor}{white}{#1}}

\usepackage{algorithm,algorithmic}

\providecommand{\BlockPowerAdjusted}{0.46}
\providecommand{\BlockPowerNominal}{0.79}
\providecommand{\DPPBoundFactor}{0.80}
\providecommand{\DPPBoundFWER}{0.040}
\providecommand{\DPPHighAdaptFWER}{0.003}

\providecommand{\DPPHighAdaptLeafPowerPct}{3.2}
\providecommand{\DPPHighAdaptLeafTR}{0.22}
\providecommand{\DPPHighAdaptNodeFWER}{0.019}
\providecommand{\DPPHighAdaptOverAdaptWTRRatio}{2.5}
\providecommand{\DPPHighAdaptOverBUHommelTRRatio}{2.5}
\providecommand{\DPPHighAdaptPrFWER}{0.004}

\providecommand{\DPPHighAdaptPrLeafTR}{0.28}
\providecommand{\DPPHighAdaptPrOverAdaptTRRatio}{1.3}

\providecommand{\DPPHighBUHommelFWER}{0.024}
\providecommand{\DPPHighBUHommelLeafCondMean}{1.03}
\providecommand{\DPPHighBUHommelLeafPGeOne}{0.085}
\providecommand{\DPPHighBUHommelLeafPGeOnePct}{8.5}

\providecommand{\DPPHighBUHommelLeafPGeTwoPct}{0.3}

\providecommand{\DPPHighBUHommelLeafPowerPct}{1.0}
\providecommand{\DPPHighBUHommelLeafTR}{0.09}

\providecommand{\DPPHighCollegePDetectAdaptPct}{35.5}
\providecommand{\DPPHighCollegePDetectPct}{37.4}
\providecommand{\DPPHighErrorLoad}{8.85}

\providecommand{\DPPHighInhLeafTR}{0.04}
\providecommand{\DPPHighInhNodeFWER}{0.019}

\providecommand{\DPPHighThreeRulesFWER}{0.038}
\providecommand{\DPPHighThreeRulesLeafCondMean}{2.34}
\providecommand{\DPPHighThreeRulesLeafPGeOne}{0.340}
\providecommand{\DPPHighThreeRulesLeafPGeOnePct}{34.0}

\providecommand{\DPPHighThreeRulesLeafPGeTwoPct}{24.9}

\providecommand{\DPPHighThreeRulesLeafPowerPct}{10.1}
\providecommand{\DPPHighThreeRulesLeafTR}{0.80}
\providecommand{\DPPHighThreeRulesOverBUHommelTRRatio}{9.0}
\providecommand{\DPPHighTwoCondLeafFWER}{0.038}
\providecommand{\DPPHighTwoCondNodeFWER}{0.077}

\providecommand{\DPPWorstFactor}{1.01}
\providecommand{\DPPWorstFWER}{0.0506}

\providecommand{\MDRCBUHommelBlocks}{34}
\providecommand{\MDRCInhNodes}{66}
\providecommand{\MDRCInhSingles}{32}
\providecommand{\MDRCNodesRatioMax}{6.0}
\providecommand{\MDRCNodesRatioMin}{1.7}
\providecommand{\MDRCSinglesRatioMax}{2.0}
\providecommand{\MDRCSinglesRatioMin}{1.0}
\providecommand{\MDRCTopDownNodes}{80}
\providecommand{\MDRCTopDownSingles}{42}
\providecommand{\StrongFWERABUHomLeafPGeOne}{0.026}
\providecommand{\StrongFWERABUHomLeaves}{0.03}

\providecommand{\StrongFWERAUnadjFWER}{0.048}

\providecommand{\StrongFWERAUnadjNodePGeTwo}{0.945}
\providecommand{\StrongFWERAUnadjNodes}{5.21}
\providecommand{\StrongFWERBAdaptLeaves}{1.02}
\providecommand{\StrongFWERBBUBHFWER}{0.319}

\providecommand{\StrongFWERBBUHomLeaves}{5.15}

\providecommand{\StrongFWERBUnadjFWER}{0.050}
\providecommand{\StrongFWERBUnadjLeaves}{19.55}
\providecommand{\StrongFWERBUnadjOverBUHomRatio}{3.8}
\providecommand{\StrongFWERCAdaptFWER}{0.038}
\providecommand{\StrongFWERCAdaptLeaves}{3.19}
\providecommand{\StrongFWERCAdaptNodePMajority}{0.080}
\providecommand{\StrongFWERCBUBHFWER}{0.201}

\providecommand{\StrongFWERCBUHomLeaves}{3.35}

\providecommand{\StrongFWERCPrunedFWER}{0.038}
\providecommand{\StrongFWERCPrunedLeaves}{4.39}
\providecommand{\StrongFWERCPrunedNodePMajority}{0.250}
\providecommand{\StrongFWERCUnadjFWER}{0.147}

\newcommand{\NJCSMeanDollar}{16}
\newcommand{\NJCSSdDollar}{21}

\newcommand{\NJCSOutcomeSd}{190}
\newcommand{\NJCSMeanD}{0.084}
\newcommand{\NJCSPlanningD}{0.19}
\newcommand{\NJCSNcenters}{99}
\newcommand{\NJCSNregions}{9}
\newcommand{\NJCSNtotal}{10,396}
\newcommand{\NJCSNmin}{27}
\newcommand{\NJCSNmax}{359}
\newcommand{\NJCSNsim}{10,000}
\newcommand{\NJCSSumG}{5.5}
\newcommand{\NJCSUnadjFWER}{0.362}
\newcommand{\NJCSAdaptiveFWER}{0.007}
\newcommand{\NJCSPrunedFWER}{0.058}
\newcommand{\NJCSInheritanceFWER}{0.002}
\newcommand{\NJCSHommelFWER}{0.012}
\newcommand{\NJCSPrunedLeafTR}{1.39}
\newcommand{\NJCSHommelLeafTR}{0.95}
\newcommand{\NJCSInheritanceLeafTR}{0.39}

\newcommand{\NJCSPrunedLeafPGeOne}{0.703}
\newcommand{\NJCSPrunedLeafPGeTwo}{0.438}
\newcommand{\NJCSHommelLeafPGeOne}{0.696}
\newcommand{\NJCSHommelLeafPGeTwo}{0.214}

\newcommand{\NJCSPrunedRegionTR}{1.25}
\newcommand{\NJCSPrunedRegionPGeOne}{0.760}
\newcommand{\NJCSNcentersNonnull}{69}
\newcommand{\NJCSNregionsNonnull}{9}

\newcommand{\NJCSDataCtrlMean}{197}
\newcommand{\NJCSDataCtrlSd}{189}
\newcommand{\NJCSDataCtrlCV}{0.96}
\newcommand{\NJCSDataFracZeroPct}{20}
\newcommand{\NJCSDataSkew}{2.0}
\newcommand{\NJCSDataITTwt}{17}

\newcommand{\NJCSDataNctrl}{4298}

\definecolor{corrcolor}{RGB}{160,30,30}
\DeclareRobustCommand{\corrflag}[2][item]{\textsuperscript{\textcolor{corrcolor}{[\,\textit{Correction notice, #1~#2}\,]}}}
\newenvironment{correctionbox}{%
  \par\medskip\noindent
  \begin{center}\begin{minipage}{0.95\linewidth}
  {\color{corrcolor}\hrule height 1pt}\medskip
  \small\noindent\textcolor{corrcolor}{\textbf{Correction (August 2026).}}\ }{%
  \par\medskip{\color{corrcolor}\hrule height 1pt}%
  \end{minipage}\end{center}\par\medskip}

\IfFileExists{_journal_opts.tex}{\input{_journal_opts.tex}}{}

\begin{document}
\maketitle
\begin{center}
  {\large\bfseries Correction notice (August 2026)}
\end{center}

\noindent An anonymous referee at the \emph{Annals of Applied Statistics}
identified errors in earlier versions of this paper. We
have verified each point and are preparing a fully corrected version. Until it
appears, readers should rely on this version only as annotated here: inline
markers of the form \corrflag{$n$} flag the affected claims where they occur,
and boxed notes mark the affected results.

\begin{enumerate}
\item \textbf{The reported error loads were computed incorrectly, and the
    corrected values reverse the headline empirical claim.} The computation
  behind the reported 25-study error loads of 0.025--0.051 went wrong in four
  ways. The code summed each node's path power times that node's own
  rejection probability, and it summed from depth~1. The error load sums path
  power alone, from depth~2 (Definition~\suppref{def:error-load}{2},
  Equation~\eqref{eq:fwer-natural-gating}). The code also used weight-scale
  node sizes in place of headcounts, one-tailed in place of two-tailed power,
  and the reached subtree in place of the full prespecified design tree. For
  14 of the 25 trials the reached tree was the root alone, so the reported
  load was that single test's own rejection probability rather than a sum over
  any tree, which is why those 14 values cluster just above
  $\Phi(-1.96) = 0.025$. Recomputed under the definition, on the full
  prespecified design tree, none of the 25 MDRC trials is in
  the natural-gating regime at the planning effect size $d = 0.20$, a
  conventional minimum detectable effect for postsecondary trials: every
  corrected load exceeds 1 there, and the effect sizes $d^*$ at which each
  trial's load crosses 1 range from 0.015 to 0.19. The claim that no
  multiplicity adjustment is needed in these 25 trials is therefore false, and
  the abstract has been corrected. The same function produced the error loads
  printed for the simulated Detroit Promise design of
  Section~\ref{sec:dpp_example} and for the Job Corps design of
  Section~\suppref{sec:njcs}{E}. Recomputed on the full design tree with
  two-tailed power, the Detroit Promise loads are 5.53, 19.51 and 37.34 at the
  three simulated effect sizes $d = 0.10$, $0.20$ and $0.30$, against the
  printed 1.31, 4.74 and 8.85; the Job Corps load is 38.2, against the printed
  $\NJCSSumG$. Both designs still call for adjustment, as those sections
  conclude, and by a wider margin than the paper reports. The $\alpha$-schedule the
  Detroit Promise simulation ran was built from the same reached subtree, so
  the adjusted rows of Table~\ref{tab:big_sim_xtab} report a schedule more
  liberal than the corrected load prescribes. The definition of the error load
  and its role as a design-stage diagnostic are unchanged; the computed
  values, and the conclusions drawn from them, change.
\item \textbf{Theorem~\suppref{thm:fwer_budget_pruning}{5} and
    Corollary~\suppref{cor:switching}{4} are false as stated and are
    withdrawn.} Theorem~\suppref{thm:fwer_budget_pruning}{5} (FWER control
  under branch pruning with predictable budget weights) and
  Corollary~\suppref{cor:switching}{4} (switching to nominal $\alpha$ after
  pruning) fail: an exact counterexample (binomial enumeration, no Monte
  Carlo error) attains FWER $0.063$ at $\alpha = 0.05$ with every stated
  hypothesis satisfied, including exact conditional validity. Readers should
  not rely on these two results or on any pruning-regime strong-control claim
  in this version, including the pruned rows of the simulation tables. In the
  Job Corps design of Section~\suppref{sec:njcs}{E}, the pruned variant's
  simulated FWER of \NJCSPrunedFWER{} exceeds $\alpha = 0.05$; we now read
  this as the theorem's failure appearing in operation, not as simulation
  noise. The step at which the proof fails lies outside what the Lean
  formalization covered.
\item \textbf{The reported detection advantage over the Hommel procedure does
    not stand.} We obtained the reported top-down count --- 42 blocks
  detected across the 25 studies --- by testing every reached node at nominal
  $\alpha$, which item~1's corrected loads no longer justify. Item~1 does not
  affect the bottom-up Hommel count of 34. A verified upper bound leaves at
  most 33 of the 42 surviving a corrected $\alpha$-schedule, below Hommel's
  34, so the block-level advantage may reverse. A corrected schedule tightens
  the internal-node tests as well, so the group-level counts cannot rise. We
  are recomputing the comparison. Until it is done, readers should not rely
  on the top-down detection counts or on the ratios built from them.
\end{enumerate}

\noindent\textbf{What stands, and what is open.} Weak FWER control under the
Stopping Rule and Valid Tests conditions (Theorem~\ref{thm:weakctrl}) is
unaffected: under the global null it rests on the validity of the single root
test. Theorem~\suppref{thm:fwer_budget}{4} (the static budget-weighted
schedule on the full tree) shows no violation in the same exact enumeration
that falsifies Theorem~\suppref{thm:fwer_budget_pruning}{5}. Its proof,
though, uses the step $\Pr(\text{reject } i \mid \text{reached } i) \le
\alpha$. So does the natural-gating certificate
(Proposition~\suppref{prop:strong_fwer_adj}{2}), and so does every other
load-weighted bound in the paper. That step requires null $p$-values to be
super-uniform conditionally on the algorithm's past, and marginal validity
does not imply it. The corrected version will state the condition explicitly.
Whether the randomization tests used here satisfy it is open: disjoint blocks
make a null node's statistic independent of its siblings', but every
ancestor's statistic aggregates the null node's own blocks.

\medskip

\noindent The accompanying R package \texttt{manytestsr} computes the
corrected error load as of version 0.0.4.1008. We thank the referee, whose
careful report identified these errors. A fully corrected version of this
paper is in preparation.

\clearpage

\begin{abstract}

Experimental evaluations of public policies often randomize a new intervention
within many sites or blocks. After an overall statistically significant result
is reported, the natural question from a policy maker is: \emph{where} did
effects occur? Standard adjustments for multiple testing answer this question
with little power because they ignore how the experiment is organized: blocks
nest within cohorts, sites, and districts. We organize the hypotheses in the
shape of a tree that follows this administrative structure and test them
top-down, stopping at any branch where the null is not rejected. A stopping
rule and valid tests at each node suffice for weak control of the family-wise
error rate (FWER). Whether the unadjusted procedure also controls the FWER in
the strong sense depends on an \emph{error load} computable from design
quantities before any data are tested; when the load exceeds one, an adaptive
$\alpha$-schedule, which we prove controls the FWER on regular and irregular
trees without pruning, restores control. [Correction, August 2026: an
anonymous referee identified errors in the previous version. The error loads
reported in the paper were computed incorrectly; corrected loads exceed 1 in
all 25 block-randomized MDRC education trials at the planning effect size
$d = 0.20$, so the adjustment the earlier version claimed unnecessary is
in fact required. The theorem claiming FWER control under branch pruning
and its switching corollary are false as stated and are withdrawn;
an exact counterexample attains FWER 0.063 at $\alpha = 0.05$. The detection
comparison with the Hommel procedure is under recomputation. A correction
notice on page 1 details what changes and what stands; a fully corrected
version is in preparation.]

\end{abstract}

\medskip
\noindent\textit{Key words:} Block-randomized experiment; Closed testing; Family-wise error rate; Multiple hypothesis testing; Multisite trial; Tree-structured testing; Policy evaluation; Randomization-based inference

\section{Where do block-level effects occur?}

Consider a block-randomized experiment with 44 experimental blocks such as the
Detroit Promise Program \autocite{ratledge2019path} where treatment was
assigned to students of different academic cohorts by five community colleges.
The overall test rejects the null of no effects. The policy-maker considering
changes across all of Detroit's community colleges now asks: \emph{where} did
the effects occur? Were they concentrated in a few sites, or spread evenly? The
answer determines whether to replicate the intervention everywhere or
investigate why it succeeded in some places and failed in others.

A natural statistical response --- test the null hypothesis of no effects in
each block separately, then adjust for multiplicity --- misses sites where
effects occurred. In data we simulated to follow the design of the Detroit
Promise Program, we set 9 of the 44 blocks --- all within a single college ---
to have a moderate per-block effect (Cohen's $d = 0.30$). Across 10,000
simulations of this design, the Hommel adjustment, one of the most powerful
procedures that control the family-wise error rate (FWER), detected at least
one of the nine affected blocks in only \DPPHighBUHommelLeafPGeOnePct\% of
simulations and detected two or more in \DPPHighBUHommelLeafPGeTwoPct\%. The
analyst reports ``not enough information,'' and the policy-maker gets nothing
useful.

This paper presents a different approach to FWER control.\footnote{We focus on
the family-wise error rate (FWER) rather than the false discovery rate (FDR).
The FDR controls the expected \emph{proportion} of false discoveries among
rejections, which suits settings where rejected hypotheses will be validated in
follow-up studies. The FWER controls the \emph{probability of any} false
rejection, which suits settings where rejected hypotheses lead directly to
policy action. When a school district decides to scale an intervention based on
where effects were detected, a single erroneous claim can misdirect resources.
FDR-oriented extensions of tree-structured testing are discussed in
Section~\ref{sec:limitations}.} Instead of testing every block and adjusting
afterward --- a \emph{bottom-up} strategy --- we test hypotheses \emph{top-down}
after organizing them in the shape of a tree of partially nested tests. We begin
with the overall null at the root. If rejected, we split the blocks into groups
following the administrative structure of the experiment and test the null
hypothesis of no effects within each group. If a group's null is rejected, we
split and test again and continue down to individual blocks. If a group's null is not
rejected, we stop testing on that branch. We call this procedure
\emph{tree-gated testing}: each rejected parent opens the gate to the tests
below it, and a parent that fails to reject closes it. Across the same
simulations, this top-down approach detects at least one affected block in
\DPPHighThreeRulesLeafPGeOnePct\% of simulations and two or more in
\DPPHighThreeRulesLeafPGeTwoPct\% --- against \DPPHighBUHommelLeafPGeOnePct\%
and \DPPHighBUHommelLeafPGeTwoPct\% for the bottom-up alternative --- while
controlling the FWER at the nominal level. When it detects anything it tends to
recover much of the affected group at once: \DPPHighThreeRulesLeafCondMean\ of
the nine blocks on average among simulations with any detection, against
\DPPHighBUHommelLeafCondMean\ for bottom-up.

Two conditions drive this result. First, a \emph{stopping rule}: test a child
hypothesis only after rejecting its parent. Second, \emph{valid tests}: each
individual test controls its false positive rate at level $\alpha$. We show
that these two conditions suffice for weak FWER control --- control when all
null hypotheses are true. For strong control --- false positive rate control
when one or more null hypotheses are false --- we analyze how power decay from
data splitting limits error accumulation through the tree and we develop a
method to adapt the test rejection level to the particular design of the tree
when our diagnostic indicates that the two conditions will not control the
FWER. This framework builds on the closed testing tradition of
\textcite{marcus1976closed}, \textcite{rosenbaum2008tho} and the sequential
structured testing of \textcite{goeman2010sequential},
\textcite{goeman2012inheritance}, and \textcite{meinshausen2008hierarchical},
which we extend to randomization-based inference in block-randomized
experiments.


Our extensions to this prior work are substantive rather than
cosmetic. In genomic applications of tree-structured testing, the same data are
analyzed at every node, the analyst must impose the conditions for valid
testing as assumptions, and FWER adjustments apply uniformly across the tree.
Block-randomized experiments differ in ways that existing frameworks do
not exploit. First, the experimental design guarantees valid tests at each
node: randomization-based $p$-values have exact size control by construction.
Second, testing subsets of blocks is itself a form of data splitting --- child
tests use less data and so should not produce more extreme evidence than their
parents --- so power decays through the tree as sample sizes shrink at each
split, limiting how rejection probability accumulates along each path. We derive an
explicit expression for FWER as a function of these path-wise rejection
probabilities (Supplement Proposition~\suppref{prop:fwer}{1}), which reveals that high power at
the root can inflate error rates at the next level while low power at deep
levels limits false positive errors. This expression motivates an
adaptive $\alpha$-adjustment that is stringent where power is high and relaxed
where power decay triggers the stopping rule and testing stops.

This paper makes three contributions. First, we show that the conditions for
tree-structured weak FWER control --- a stopping rule and valid tests at each
node --- are satisfied \emph{by design} in block-randomized experiments with
administrative structure, because randomization supplies exact size control at
every node. Second, and most important for practice, we describe a quantity
that governs whether the same unadjusted procedure also controls the FWER in the
strong sense, and turn it into a diagnostic: the \emph{error load} is computable
before testing from the tree's shape and the per-depth sample sizes, and it
indicates whether an analyst may keep the full nominal $\alpha$ at every node ---
preserving power --- or must adjust. When power decays fast enough
relative to branching the error load stays low and no adjustment is needed; this
holds in every one of the 25 MDRC education trials we analyze, where the top-down
procedure detects individual blocks the Hommel adjustment misses entirely and
locates higher-level groups of blocks that bottom-up testing never explores
(Table~\ref{tab:mdrc_application}).\corrflag[items]{1 and 3} Third, to make the diagnostic rigorous we
derive what the adjustment must be when the error load is high --- an adaptive
$\alpha$-schedule that we prove controls the FWER on regular, irregular, and
pruned trees (Table~\ref{tab:four_regimes}).\corrflag{2} That adaptive adjustment is not
needed for the education trials studied here; the diagnostic also tells us when it
\emph{is}. To show that it works where it is needed, the Supplement applies it to
a design calibrated to the National Job Corps Study --- a multisite trial of about
one hundred centers with a high error load --- where the
unadjusted procedure inflates the FWER, the adaptive $\alpha$-schedule restores control,
and the top-down procedure still finds more affected sites than either a bottom-up
or a hierarchical FWER adjustment (Section~\suppref{sec:njcs}{E}).\corrflag{2}
For researchers working in policy evaluation,
the practical implication is clear: in experiments with administrative
structure, a single pre-data calculation tells you whether locating effects costs
anything in multiplicity --- and often it does not.\corrflag{1}

Two traditions of research address heterogeneous treatment effects. One
estimates conditional average treatment effects as functions of observed
covariates \autocite[e.g.][]{wager2018estimation, hahn2020bayesian}. The other
studies the distribution of individual effects directly, asking how many units
benefited and by how much \autocite[e.g.][]{heckman1997, kim2025acic}. Our work
is closer to the second tradition: rather than asking which covariates moderate
effects, we ask which experimental blocks or groups of blocks show detectable
effects. Where covariate-based approaches focus on columns of the data matrix,
we focus on rows --- the sites, cohorts, and schools where effects did or did
not occur. The approaches are complementary; detections could guide subsequent
covariate-based investigation of \emph{why} effects vary. A related question
arises when sites differ not only in treatment effects but in local ecological
conditions that confound between-site comparisons. \textcite{hong2025multisite}
develop methods to isolate organizational effectiveness from such confounding
in multisite trials using parametric models and empirical Bayes style
statistical inference.

We focus on block-randomized experiments with two treatment arms and roughly
continuous outcomes using randomization-based statistical inference. The
results extend to binary outcomes and multi-arm trials --- the conditions ask
only for valid randomization-based tests at each node, which those designs also
supply --- although we do not engage them in our simulations or application. We also take
the tree structure as fixed by the experiment's administrative design --- the
nesting of blocks within cohorts, sites, and districts. Constructing a tree
from data when no such hierarchy exists is a separate problem; we have
implemented several data-based splitting algorithms but leave their theoretical
development to future work.

The paper proceeds as follows. Section~\ref{sec:topdownbotup} develops the
bottom-up and top-down testing approaches, proves weak FWER control
(Theorem~\ref{thm:weakctrl}), and illustrates the performance of the procedure
using simulations across a wide range of tree sizes --- including trees with
more than 100,000 experimental blocks. Section~\ref{sec:strong} addresses
strong FWER control by analyzing how power decay through the tree naturally
limits error accumulation, and develops an adaptive $\alpha$-adjustment for
settings where this natural gating is insufficient. We then demonstrate the
method on data simulated to follow the Detroit Promise Program design
(Section~\ref{sec:dpp_example}) and apply it to 25 block-randomized education
trials fielded by the MDRC \autocite{diamond2021mdrc}.\footnote{In the
Supplement we describe a test statistic sensitive to distributional differences
beyond mean shifts, so that effects that are positive in some blocks and
negative in others do not cancel at the root. This statistic draws on energy
statistics \autocite{szekely2013energy} and multivariate permutation tests
\autocite{strasser1999asymptotic, hothornetal:coin:2006}. Since our focus in
this paper is on the tree structure, our simulations are designed to side-step
such cancelations and enable the use of common rank-tests and t-tests for the
sake of exposition. The MDRC application in this paper likewise uses rank and
t-tests; the distribution-sensitive statistic is developed in separate
work \blindcite{bowers2026more}.} The Supplement adds a further application, to a
design calibrated to the National Job Corps Study (Section~\suppref{sec:njcs}{E}).

\section{Testing to Detect Effects in Blocks}\label{sec:topdownbotup}

Block-randomized experiments allow two styles of testing procedures that target
block-specific causal effects. In a study like that shown in
Figure~\ref{fig:struct0}, the experiment is organized into blocks; within each
block, a new policy intervention is randomly assigned to some people, leaving
the rest in the status quo. Notice the inverted tree-like shape of the experiment. The
individual blocks are on the bottom of the tree and, in this case, are nested
within administrative units like cities and states toward the top.

\begin{figure}[tb]
  \centering
  \begin{forest}
    [Overall, for tree={draw, l sep=1em, s sep=1em, anchor=center, inner sep=0.2em}
    [State$_1$, s sep=1mm, for tree=draw
    [City$_1$
    [B$_1$]
      [B$_2$]
      [B$_3$]
      [B$_4$]
    ]
    [\ldots
    [\ldots]
    [\ldots]
    ]
    ]
    [State$_2$, s sep=1mm, for tree=draw
    [\ldots
    [\ldots]
    [\ldots]
    ]
    ]
    [State$_3$, s sep=1mm, for tree=draw
    [\ldots
    [\ldots]
    ]
    [City$_{20}$
    [B$_{98}$]
      [B$_{99}$]
      [B$_{100}$]
    ]
    ]
    ]
  \end{forest}
  \caption{An administratively organized structure of blocks. A study
    randomly assigns people within offices to a new intervention. Each
    office is an experimental block; within it, some people are assigned to the
    intervention and the rest to the status quo.}
  \label{fig:struct0}
\end{figure}

\subsection{Testing in every block: the basic problem of multiple testing}

Since each block contains multiple units assigned at random to treatment, an
analyst could treat each block as a mini-experiment
\autocite[][Chap.~3]{gerber2012field} and test the null of no effect separately
in each. With 100 blocks, that is 100 tests. Run at the nominal level, such a
collection rejects far too often: if no block has an effect and the tests are
independent, the probability of at least one false rejection at $\alpha=0.05$
is $1 - 0.95^{100} \approx 0.99$. This probability --- the family-wise error
rate (FWER) --- is what an adjustment such as Bonferroni or Hommel
controls.\footnote{The FWER can be controlled in a \emph{weak} sense (when no
block has an effect) or a \emph{strong} sense (whatever the pattern of
effects); we develop that distinction in Section~\ref{sec:strong}.}

Of course, this approach makes it much harder to detect effects when they are
actually present. For example, the power of a t-test to detect an effect of 0.8
standard deviations within a block of 50 units is \BlockPowerNominal\ when
$\alpha=0.05$ and nearly half that power (power=\BlockPowerAdjusted) when
$\alpha=0.005$. Top-down testing can increase this power while
controlling the FWER.

\subsection{Sequential Structured Testing and Weak Control of the FWER}

Figure~\ref{fig:struct0} shows that each experimental block is nested within a
hierarchy. When hypotheses are nested in this way, they can be tested in a
specific order under rules that avoid the severe power loss of bottom-up
adjustments. This insight originates in the ``closed testing'' framework of
\textcite{marcus1976closed}, developed for structured experiments by
\textcite{rosenbaum2008tho,small2011structured}. We build especially on the general rules
for sequential nested testing articulated by
\textcite{goeman2010sequential, goeman2012inheritance} and \textcite{meinshausen2008hierarchical}. We
explain the reasoning behind the structured testing approach here before
turning to formal development, simulations, and application.

The general idea is to start testing hypotheses at the ``root'' of the tree ---
testing first the null hypothesis of no effects in any block. If we can reject
that hypothesis then we know either (1) that at least one block contains at
least one unit with a non-zero treatment effect or (2) that we have just made a
false positive error. The second and subsequent tests occur ``lower'' in the tree on
subsets of the data split into disjoint groups. Once a test does not reject a
hypothesis, testing stops on that branch of the tree. We prove that this
procedure controls the FWER when all null hypotheses are true
(Theorem~\ref{thm:weakctrl}) and illustrate the result via simulation across a
wide range of tree sizes (Table~\ref{tab:weak_control_sim}).

\paragraph*{On test statistics} In this section of the paper, no experimental
block experienced a causal effect --- any rejection of any null hypothesis
would be a false positive error. For this reason test statistics matter little.
That said, the point of the procedure is to detect non-null effects, and
certain test statistics have high power against certain
divergences from the null and low power against other ways that treated and
control group outcome distributions might differ due to the treatment. For
example, if the effect of the treatment were concentrated in the tails of the
distribution a t-test might not reject the null hypothesis of no effects at the
root and thus fail to detect a causal effect in a block lower in the tree. Or,
for another example, if half of the blocks had strong negative effects and half
had strong positive effects, again a t-test at the root might not reject and
would block descent into the tree to discover those effects. For this reason we
discuss test statistics in Section~\suppref{sec:test_statistic}{C} of the Supplement that are immune
from positive versus negative effect cancelation problems and are sensitive to a
wider range of alternatives than common t-tests or wilcoxon rank tests. In both
the simulations and the MDRC application of this paper we use simple wilcoxon
rank tests, generating treatment effects and outcome distributions they can
detect so that we can focus on the performance of the procedure itself. The
distribution-sensitive statistic is developed more fully in separate
work \blindcite{bowers2026more}, which builds on
\textcite{rosenbaum2012testing} and \textcite{kim2025acic}.

\paragraph*{Conditions for FWER control}

To fix ideas, consider the tree in Figure~\ref{fig:k3L3_all_null_tree}. Write
$H_i$ for the null hypothesis of no treatment effects among any of the units in
the blocks descended from node $i$. At the root, $H_1$ is the hypothesis that
no unit in any block has a treatment effect. At a leaf --- nodes 5 through 13
in the figure --- $H_i$ pertains to a single experimental block. We write $p_i$
for the $p$-value from the test of $H_i$.

\begin{figure}[tb]
\centering
  \begin{tikzcd}[cramped, sep=tiny, row sep=tiny]
    &        &  & & 1 \arrow{dlll} \arrow{d} \arrow{drrr}& & \\
    & 2 \arrow{dl} \arrow{d} \arrow{dr}  & & & 3 \arrow{dl} \arrow{d} \arrow{dr}& & & 4 \arrow{dl} \arrow{d} \arrow{dr}\\
    5 & 6 & 7 & 8 & 9 & 10 & 11 & 12 & 13
  \end{tikzcd}
  \caption{A $k$-ary tree with $k=3$ nodes per level and $L=3$ levels and $k^{L-1}=9$ terminal nodes or ``leaves" representing individual experimental blocks. The label $i$ at each node names the null hypothesis $H_i$ of no effects in the blocks descended from that node: $H_1$ at the root, $H_5$ through $H_{13}$ at the leaves.}\label{fig:k3L3_all_null_tree}
\end{figure}

We state two conditions. Together they suffice for weak FWER control
(Theorem~\ref{thm:weakctrl}). Strong control --- FWER below~$\alpha$ when
some blocks have effects --- requires two further
ingredients, introduced in \S~\ref{sec:strong}: conditional validity
of each test at a true null given that its ancestors have been rejected, and
conservative estimates of power along non-null ancestor paths.

\begin{condition}[\bf Stopping rule]\label{cond:stopping}
  A hypothesis $H_i$ is tested only if every ancestor of node $i$ has been
  rejected. If the root hypothesis is not rejected, no further tests are
  performed. If any hypothesis on a path from root to leaf is not rejected, no
  descendant on that branch is tested at all. In
  Figure~\ref{fig:k3L3_all_null_tree}, we do not test $H_5$ unless we have
  rejected \emph{both} $H_2$ and $H_1$.

  This condition distinguishes our procedure from sequential-rejection
  frameworks that test every node and then decide which rejections to keep
  \autocite{goeman2010sequential, goeman2012inheritance} and connects with
  frameworks for strictly nested testing that would focus on a single branch of
  a tree \autocite{rosenbaum2008tho}. Our procedure may test only a small
  fraction of the tree's nodes --- and each untested node is a test that cannot
  produce a false positive.

\end{condition}

\begin{condition}[\bf Valid tests]\label{cond:valid}
  Each test at a node, if executed on its own, has a false positive rate no
  greater than $\alpha$. In block-randomized experiments, randomization-based
  tests satisfy this by design: permuting the treatment assignment within
  blocks produces $p$-values with guaranteed size control
  \autocite[Chapter~2]{rosenbaum2002book}.
\end{condition}

These conditions enable us to state and prove the following theorem:

\begin{restatable}{theorem}{weakcontrol}{Conditions~\ref{cond:stopping} and~\ref{cond:valid} suffice for weak FWER control}\label{thm:weakctrl}

  A family of true null hypotheses organized on an irregular or regular $k$-ary
  tree and tested following the stopping rule (Condition~\ref{cond:stopping})
  with valid tests at each node (Condition~\ref{cond:valid}) will produce a
  family-wise error rate (FWER) no greater than $\alpha$. We call a $k$-ary tree
  ``regular'' if the number of child nodes of a parent is the same for all
  nodes in the tree. A $k$-ary tree is ``irregular'' if the number of child nodes
  of a parent node may differ within and across levels.

\end{restatable}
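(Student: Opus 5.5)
The argument reduces everything to the root test. Under weak control every hypothesis in the family is true, so any rejection at any node is a false positive. The event ``at least one false rejection'' therefore coincides with ``at least one rejection anywhere in the tree.''

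First, I show that a rejection anywhere implies a rejection of $H_1$. Let $R_i$ be the event that $H_i$ is tested and rejected. By the stopping rule (Condition~\ref{cond:stopping}), $H_i$ is tested only if every ancestor of $i$ has been rejected, and the root is an ancestor of every non-root node. So $R_i \subseteq R_1$ for every $i$, and
\[
\bigcup_i R_i \;=\; R_1 .
\]
This containment uses only the tree's ancestor relation. It does not depend on the number of children at any node, which is why the same argument covers regular and irregular $k$-ary trees.

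Second, I bound the root. Because the root is always tested, $R_1=\{p_1\le\alpha\}$. Under the global null, $H_1$ is true, and Condition~\ref{cond:valid} gives $\Pr(p_1\le\alpha)\le\alpha$. Hence
\[
\FWER \;=\; \Pr\Bigl(\bigcup_i R_i\Bigr) \;=\; \Pr(R_1) \;\le\; \alpha .
\]

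The main point to handle carefully is that the nodes below the root are tested only conditionally and on data-dependent occasions, and their validity given the history is not assumed. The argument never needs their conditional validity: their rejections are absorbed into $R_1$ by the set inclusion. Only the marginal, unconditional validity of the single root test is used, and the root test is executed unconditionally. I will also note that ``tested'' is a deterministic function of the ancestors' $p$-values, so each $R_i$ is a well-defined event and no issue arises from the randomness of which nodes are visited.
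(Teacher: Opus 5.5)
Your proof is correct and matches the paper's. The paper conditions on whether the root is rejected and uses the stopping rule to get $\Pr(V \ge 1 \mid E^c) = 0$, which is the same fact as your inclusion of every rejection event in the root's rejection event; it then bounds $\Pr(E) \le \alpha$ by the root's marginal validity. Your direct set-inclusion form reaches the same bound without the conditioning step.
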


The proof of Theorem~\ref{thm:weakctrl} is in
Section~\suppref{ap:proof_weakctrl}{A} of the Supplement. We provide some intuition here.

The procedure that we develop follows those two conditions and adds one idea: at
each step it splits a group of blocks into smaller groups. Data splitting
matters when we turn to strong control of the FWER; it is not needed for weak
control, as the proof and the simulations below show. By ``splitting'' we mean
dividing the experimental blocks into disjoint groups of blocks, not splitting
the units within a block. In the administrative-structure example of
Figure~\ref{fig:struct0}, the splits are fixed by design.


Testing proceeds down the tree of Figure~\ref{fig:k3L3_all_null_tree}. At the
root, node~1, we test the null of no effects anywhere: $H_1$ asserts that every
unit's treatment potential outcome equals its control potential outcome,
$y_{i,b,1}=y_{i,b,0}$ for every unit $i$ in every block $b$. The test returns a
$p$-value, $p_1$. If $p_1 > \alpha$, testing stops and nothing is rejected. If
$p_1 \le \alpha$, we reject $H_1$, split the blocks into the groups represented
by nodes~2, 3, and~4, and test each group's null in turn. Each branch continues
the same way: a node is tested only if its parent was rejected, and a branch
stops as soon as a test fails to reject or reaches a single block --- a leaf,
nodes~5 through~13 --- which cannot be split further.

The intuition for the proof of Theorem~\ref{thm:weakctrl} arises from the fact
that in a randomized experiment, we know that the probability of a false
positive error in a single randomization-based test is low and controlled: the
size is less than or equal to the level ($\alpha$). This is stated as a
condition of the proof in Condition~\ref{cond:valid}. Another way to say this:
if the null of no effects is true in all blocks, then we should reject the root
hypothesis with probability less than $\alpha$. An incorrect rejection thus
happens with very low probability and so rarely are other tests even done, and
when they are done their own false positive rates are less than or equal to
$\alpha$.  The simulation results in Table~\ref{tab:weak_control_sim} show
control of the FWER for a wide range of tree sizes and configurations when
there are no treatment effects.

\subsubsection{Simulation Study of Weak Control of the FWER to Illustrate the Proof}\label{sec:weak_fwer_sim}

Given a specification of a complete $k$-ary tree using $k$ nodes per level and
levels $\ell \in \{1, \dots, L\}$ we drew $p$-values from uniform distributions. The valid tests
condition (Condition~\ref{cond:valid}) implies that the distribution of $p$-values across tests of the
true null should be stochastically dominated by the uniform
\autocite[Chapter~3]{lehmann2005testing}. That is, in many tests of a true null hypothesis,
only 5\% of $p$-values should be less than $p=0.05$, 10\% should be less than
$p=0.1$, etc\ldots.

\begin{enumerate}

  \item For the root node, draw $p_1 \sim U(0,1)$; a draw which respects the
    valid test condition such that $\Pr(p_1 \le \alpha) \le
    \alpha$ in this case.

  \item If $p_1 \le \alpha$, draw an independent $U(0,1)$ $p$-value for each
    of the $k$ nodes at level 2.

  \item For each level-2 node whose $p$-value is at most $\alpha$, draw
    $U(0,1)$ $p$-values for its children; for each level-2 node whose
    $p$-value exceeds $\alpha$, stop testing the branch of the tree that would
    descend from that node. This implements the stopping rule condition
    (Condition~\ref{cond:stopping}).

  \item Continue testing, drawing from uniform distributions for
    child nodes with parents with $p \le \alpha$, descend into the
    tree towards the leaves, and stop testing in any branch when $p >
    \alpha$ or when the node tested is a leaf.

\end{enumerate}

In this simulation we did not represent the idea of data splitting or any other
adjustments.

For each tree, we repeated this procedure 10,000 times, recording whether any
of the $p$-values in the tree of true null hypotheses were $\le \alpha$. The
proportion of simulations with at least one such false rejection is a measure
of the FWER.


Table~\ref{tab:weak_control_sim} shows the results. Whether the tree has $k=2$
nodes per level or $k=100$, the conditions lead to a maximum FWER within
simulation error.\footnote{Whether we use  $2 \times \sqrt{0.05
(1-0.05)/10000}=0.004$ or $2 \times \sqrt{0.5 (1-0.5)/10000}=0.01$ we interpret
these results to show nominal FWER control.} A tree with $k$ nodes per level
and a maximum of $L$ levels has $k^{\ell-1}$ nodes at level $\ell$ and
$k^{L-1}$ leaves. The first row of the table shows binary trees ($k=2$): with 3
levels we have 4 leaves and 7 total nodes; with 19 levels we have 262,144
leaves and 524,287 total nodes. This set of simulations produces research
designs that tend to be unrealistic from the perspective of block-randomized
experiments in public policy. But these extreme examples provide what we hope
is a compelling illustration of Theorem~\ref{thm:weakctrl}.

\begin{table}[tb]
\centering
\caption{The top-down procedure controls the
  family-wise error rate in the weak sense across complete $k$-ary trees of
  widely varying size: with every null hypothesis true and $\alpha=.05$, the
  maximum FWER across 10,000 simulations stays within simulation error (about
  .01) of the nominal level. Each row summarizes trees with $k$ nodes per
  level and between Min and Max levels; the Nodes and Leaves columns give the
  resulting range of total and terminal nodes. `Max FWER' reports the largest
  proportion of simulations with any false rejection among the trees in the
  row; `Avg. Max Tests' reports the largest average number of tests the
  procedure ran before stopping.} 
\label{tab:weak_control_sim}
\begin{tabular}{rrrrrrrrr}
  \toprule
  &\multicolumn{2}{c}{Levels} &Max&Avg. Max& \multicolumn{2}{c}{Nodes} & \multicolumn{2}{c}{Leaves}\\ $k$ & Min & Max & FWER & Tests & Min & Max & Min & Max \\ \midrule
   2 &    3 &   19 & 0.052 & 1.008 &    7 & 524287 &    4 & 262144 \\ 
     4 &    3 &    9 & 0.055 & 1.015 &   21 & 87381 &   16 & 65536 \\ 
     6 &    3 &    7 & 0.054 & 1.026 &   43 & 55987 &   36 & 46656 \\ 
     8 &    3 &    7 & 0.050 & 1.032 &   73 & 299593 &   64 & 262144 \\ 
    10 &    3 &    5 & 0.049 & 1.044 &  111 & 11111 &  100 & 10000 \\ 
    20 &    3 &    5 & 0.050 & 1.202 &  421 & 168421 &  400 & 160000 \\ 
    50 &    3 &    3 & 0.048 & 1.402 & 2551 & 2551 & 2500 & 2500 \\ 
   100 &    3 &    3 & 0.050 & 2.498 & 10101 & 10101 & 10000 & 10000 \\ 
   \bottomrule
\end{tabular}
\end{table}

Table~\ref{tab:weak_control_sim} also shows that the two conditions lead the
algorithm to nearly always stop after testing the single root node: the average
number of tests is close to 1, even when the tree has hundreds of thousands of
nodes. In contrast, the bottom-up procedure runs a test in every block and then
adjusts the resulting $p$-values for multiplicity \autocite{hommel1988stagewise}.
In the largest tree simulated here, that is 262,144 randomization tests.

\subsubsection{Why might weak control suffice?}

The simulations and proof above assume that no units in any blocks have
treatment effects --- that all null hypotheses are true. This is what it means
for a procedure to control the FWER ``in the weak sense''
\autocite{hochberg1987multiple}. The next section develops conditions for strong
FWER control, which hold regardless of which hypotheses are true. Yet
Conditions~\ref{cond:stopping} and~\ref{cond:valid} alone suffice for some questions.

Weak FWER control is not a lesser guarantee. It is what the
Benjamini--Hochberg procedure \autocite{benjamini1995controlling} --- the
standard tool for false discovery rate (FDR) control --- provides when every
null hypothesis is true. The FDR is the expected proportion of false rejections
among all rejections, and Benjamini--Hochberg controls it at level $\alpha$
regardless of which hypotheses are true. Yet when all null hypotheses are true, every
rejection is a false rejection, so FDR equals FWER; the Benjamini--Hochberg procedure therefore
controls the FWER in the weak sense but not in the strong sense. Researchers
accept this because FDR control suits exploratory settings where findings will
be validated in follow-up studies. The tree-structured approach does not
control the FDR (developing such guarantees is beyond the scope of this paper),
but it shares with FDR procedures the property that the FWER is controlled at
$\alpha$ when all null hypotheses are true.

The choice among error-rate guarantees depends on the research question. The
question \emph{does this program work at all?} asks only whether any block has
an effect, not which ones. The error that matters is a false positive ---
claiming an effect when none exists --- and weak FWER control bounds its
probability. The question \emph{which individual outcomes are affected?} ---
the one that motivates this paper --- calls for strong FWER control. FDR
procedures fall between, controlling the expected proportion of false
discoveries, not the chance of even one.

For the first question, Conditions~\ref{cond:stopping} and~\ref{cond:valid}
are all the tree-structured procedure needs --- Theorem~\ref{thm:weakctrl}
shows they suffice for weak FWER control. Its overall test acts as a
\emph{gate}: by the Stopping Rule (Condition~\ref{cond:stopping}), no further
hypothesis is tested unless the overall null is rejected first. Because this
gate absorbs the entire multiplicity cost, the tests below it need no
adjustment of their own --- each uses the full nominal $\alpha$. That is power
both strong FWER and FDR procedures give up for their broader guarantees.

Researchers who are using the tree-structured approach to screen blocks for
closer examination can work with Conditions~\ref{cond:stopping}
and~\ref{cond:valid} alone. But in many applied settings, researchers want to
make claims about specific blocks with the same error-rate guarantees they would
demand of any confirmatory analysis. The next section shows how: analyzing the
power decay inherent in data splitting --- and, when needed, applying adaptive
$\alpha$-allocation --- yields strong FWER control while preserving the power
advantage of top-down testing.

\section{Strong control of the FWER}\label{sec:strong}
Conditions~\ref{cond:stopping} and~\ref{cond:valid} control false positive
errors when all null hypotheses are true. Control of the FWER when at least one block contains a non-zero treatment
effect is called strong control. It requires additional structure, and
identifying that structure is a key contribution of this paper. It rests on a fact that runs against the usual intuition about
multiple testing. Ordinarily we treat statistical power as the property we want
more of, and the risk of a false rejection comes from the size of each test, not
from its power. In a sequential tree the two are linked: the quantity that
governs the FWER is statistical power itself. The procedure can err only at a null node it actually
reaches, and it reaches that node only by first rejecting every ancestor above
it. The chance of reaching and then rejecting a null node is therefore a product
of rejection probabilities along the path from the root. High power makes that
product large, so it is high power that lets the procedure reach the nodes
where it can falsely reject. This is the relationship we exploit. Power decays as the data splits, which
limits how deep the procedure reaches. Where power stays high we adjust
$\alpha$ to compensate.

This section builds one quantity, the \emph{error load}, that an analyst
computes before testing to make a single decision: keep the nominal $\alpha$
at every node, or tighten it. In all 25 MDRC education trials the error load
is far below 1, so no adjustment is needed and the unadjusted top-down
procedure is the one we recommend there (Section~\ref{sec:mdrc}).\corrflag{1} When the
error load is high the same diagnostic calls for an adjustment, and the rest
of the section derives it --- the adaptive $\alpha$-schedule, budget weights,
branch pruning, and switching back to nominal $\alpha$ --- a case the trials
never reach but the Supplement's Job Corps design does
(Section~\suppref{sec:njcs}{E}).

Return to the $k=3$, $L=3$ tree of Figure~\ref{fig:k3L3_all_null_tree}, with 13
nodes and 9 leaves. Figure~\ref{fig:k3L3_tree} shows the same tree, now with a
treatment effect in one block --- block 5 (boxed). Because block 5 is non-null,
so are its ancestors, node 2 and the root: 3 of the 13 nodes are non-null and
10 are null. We use complete $k$-ary trees like this one for simplicity; the
theory generalizes to irregular trees, and we apply those methods to the MDRC field experiments in \S~\ref{sec:mdrc}.

\begin{figure}[tb]
\centering
  \begin{tikzcd}[cramped, sep=tiny, row sep=tiny]
    &        &  & & \cdbox{1}   \arrow{dlll} \arrow{d} \arrow{drrr}& & \\
    & \cdbox{2} \arrow{dl} \arrow{d} \arrow{dr}  & & & 3 \arrow{dl} \arrow{d} \arrow{dr}& & & 4 \arrow{dl} \arrow{d} \arrow{dr}\\
    \cdbox{5}  & 6 & 7 & 8 & 9 & 10 & 11 & 12 & 13
  \end{tikzcd}
  \caption{A $k$-ary tree with $k=3$ and $L=3$. Boxes show non-null nodes: block 5 has a treatment effect, so $H_5$ is false, and the hypotheses at its ancestors --- $H_2$ and $H_1$ --- are false as well. The hypotheses at the other ten nodes are true.}\label{fig:k3L3_tree}
\end{figure}

Notice that a false positive in this tree can occur only at a \emph{null node}
--- a node where the hypothesis of no effect is true, like node 3 or 13 --- and
only when that node is both tested and rejected.
Figure~\ref{fig:k3L3_tree} contains a tree where we \emph{want} to
reject the root node. In fact, we \emph{should} reject that node with
probability much greater than $\alpha$ if we have enough data and the causal
effect measured at that node is large.

This line of reasoning reveals that the probability of false positive errors
accumulating in tests through the tree depends on the power of the test at the
root node (we write the statistical power of a test at node $i$ as $\pi_i$).
Consider two extreme cases: (1) the overall test has very low power and (2) it
has very high power. If the overall test has very low power, say $\pi_1 \leq
\alpha$, then this test will reject very rarely, and, in turn, the tree will
show very few false positive errors. In fact, if $\pi_i \leq \alpha$, then the two conditions used above control
the FWER in the strong sense. Not rejecting at the root controls the FWER for
either of two reasons. In the first, all blocks have no effects and the test
has false positive error rate of less than $\alpha$. In the second, at least
one block has a non-zero effect but the test lacks the power to detect it
($\pi_1 < \alpha$).

What about when $\pi_1 > \alpha$? In this case, the multiple testing of null
nodes are no longer protected by non-rejection and application of the
stopping rule at the root. For example, in Figure~\ref{fig:k3L3_tree} rejection
of the root will lead to 3 tests, two of them of null nodes. We will have two
chances to falsely reject the null of no effects and, if the tree stopped with
nodes 2,3 and 4, the FWER would be greater than $\alpha$ without further
assumptions.

When the tree continues past that first level, the number of null nodes that
could be tested grows to ten --- every node in this thirteen-node tree but the
non-null root, node 2, and block 5 --- and with it the chances to err. Not
all ten need separate accounting. The Stopping Rule forces the procedure to
reject a node's ancestors before it can reject the node itself, so the first
false rejection on any path must occur at a true null directly below the
non-null part of the tree. Call such a node a \emph{boundary null}: a true null
whose parent, when it has one, is non-null. Under the Stopping Rule, a true null
is rejected somewhere in the tree if and only if some boundary null is rejected
(Lemma~\suppref{lem:boundary}{1} in the Supplement). Rejecting a boundary null
is itself a false rejection, and the first one on its path. The FWER is the probability that \emph{at least one} false rejection occurs,
not a count of how many. So once a boundary null is falsely rejected, any
further rejections the Stopping Rule carries into its descendants cannot raise
the FWER. To control false positives
when at least one block contains a non-zero effect, we therefore need to bound
only the rejection of boundary nulls --- four of the ten here. In Figure~\ref{fig:k3L3_tree}, the boundary nulls are four nodes. Nodes~3
and~4 are children of the non-null root. Nodes~6 and~7 are children of the
non-null node~2. The other nulls --- nodes~8 through~13 --- are descendants of
nulls~3 and~4: the procedure can reach and reject them only after falsely rejecting one of those boundary nulls. The next subsection turns this fact --- that only boundary nulls need bounding
--- into the error load.

\subsection{Strong control using error load and adaptive adjustment}

To bound the probability that a null node is falsely rejected, separate it into
two parts: the procedure must \emph{reach} that node by rejecting every ancestor
above it, and must then \emph{reject} the node itself. These two chances ---
reaching a node, and rejecting it once reached --- determine how
much that node contributes to the FWER. A null buried deep behind ancestors that
rarely reject contributes little, because the procedure seldom arrives there at
all. The error load defined in this subsection is the paper's central diagnostic:
the adjustment derived below matters only when the error load exceeds 1.

Let $\mathcal{H}_0$ denote the set of nodes $i$ for which the null $H_i$ is
true. Under the Stopping Rule, the test at node $j$ runs only after every
ancestor of $j$ has rejected, so we define its rejection probability given that
event: the \emph{conditional rejection probability} $\theta_j$ is $\alpha_j$
(the test size) when $H_j$ is true and $\pi_j$ (the power) when $H_j$ is false.
The conditioning matters because tests along a path share data. The
probability of rejecting every node on a path is exactly the product of these
conditional probabilities, by the chain rule. The same product of
unconditional sizes and powers would require an independence assumption that
data splitting does not deliver. Write $\anc(i)$ for the set of proper ancestors of node $i$
(excluding $i$ itself). And let $R_i = \mathbb{I}\{\text{reject } H_i\}$: the
rejection indicator at node $i$. Under the Stopping Rule, $R_i = 1$ requires
that $R_j = 1$ for every $j \in \anc(i)$: the procedure can reject node $i$ only
after it has already rejected every ancestor of $i$. Then the FWER satisfies

\begin{equation}
  \label{eq:fwer-bound1}
  \mathrm{FWER}
  = \Pr \left(\bigcup_{i \in \mathcal{H}_0} \{R_i = 1\}\right)
  \le
  \sum_{i \in \mathcal{H}_0} \alpha_i \prod_{j \in \mathrm{anc}(i)} \theta_j,
\end{equation}

where the product over an empty set of ancestors (for the root node) is taken to
be $1$; we prove this bound as Proposition~\suppref{prop:fwer}{1} in the Supplement.

Corollary~\suppref{cor:fwer_control_low_power}{2} in the Supplement shows how this expression recovers
FWER control when the root has very low power ($\pi_1 \leq \alpha$): the
root itself rarely rejects, so no descendant is tested. The expression also
reveals the problem when the root has high power.
When $\pi_1 = 1$, rejection at the root is certain, and the unadjusted FWER at
level 2 is approximately $k \cdot \alpha_{\ell=2}$, the sum of the false
positive rates across all $k$ children in that first branch of the tree from
the root, holding the test size at the same level for all $k$ children of the
root for notational simplicity. With 10 children each tested at $\alpha =
0.05$, the FWER reaches $0.5$.

To build intuition, imagine that the root has perfect power ($\pi_1 = 1$) and
that the blocks are split into $k$ equal-sized groups at each level. Because
each group contains $1/k$ of the data, the test at each child node has lower
power than its parent: if we approximate power using the Normal distribution, then $\theta_\ell
\approx \Phi\!\left(\delta \sqrt{N/k^{\ell-1}} - z_{1-\alpha/2}\right)$.\corrflag{1} Here
$\Phi$ is the standard normal CDF, $\delta$ a standardized effect size (here
$\delta = d/2$ for Cohen's $d$, the half reflecting the even split of each
node between treatment and control), $z_{1-\alpha/2}$ the two-sided critical
value, and a node at level $\ell$ holds $N/k^{\ell-1}$ of the $N$ total units
under equal splitting. Dividing the sample by $k$ at each level then reduces
power geometrically. At level
$\ell$, the tree has $k^{\ell-1}$ nodes, and each null node is reached only if
every ancestor on its path is rejected. The product of these rejection
probabilities along the path from the root determines how likely the procedure
is to reach --- and therefore to falsely reject --- each null node. For example, $H_6$ in Figure~\ref{fig:k3L3_tree} adds to the FWER only if the
procedure reaches block 6, which requires rejecting both of its ancestors, nodes
2 and 1, first.

A given \emph{configuration} is the pattern of which blocks carry effects and
which are null. Define the \emph{realized error load} as the path power summed
over the boundary nulls that configuration exposes, $\sum_i \prod_{j \in
\mathrm{anc}(i)} \theta_j$.
Because a true null is rejected only if a boundary null is rejected
(Lemma~\suppref{lem:boundary}{1} in the Supplement), the sum
in~\eqref{eq:fwer-bound1} can be restricted to the boundary nulls: summing
$\alpha_i \prod_{j \in \mathrm{anc}(i)} \theta_j$ over boundary nulls alone
still bounds the FWER. With every node tested at the nominal $\alpha$ --- the
unadjusted procedure --- this restricted sum equals $\alpha$ times the realized
error load, so that product bounds the FWER.
Group that bound by level. At most $k^{\ell-1}$ nodes sit at level $\ell$,
each reached with path power at most $\prod_{j=1}^{\ell-1} \theta_j$. Write
$G_\ell = k^{\ell-1}\prod_{j=1}^{\ell-1}\theta_j$ for this worst-case
per-level count (Section~\suppref{sec:strong_local_adj}{B} of the Supplement).
The realized error load is then at most $\sum_\ell G_\ell$, so

\begin{equation}
  \label{eq:fwer-natural-gating}
  \mathrm{FWER} \le \alpha \underbrace{\sum_{\ell=2}^{L} G_\ell}_{\text{error load}} .
\end{equation}

\noindent We call this worst-case sum $\sum_\ell G_\ell$ the \emph{error load}. It
counts every node as a potential exposed null, so it is at least the realized
error load of every configuration; we show below how far the two can diverge
when effects concentrate in a few branches.

When the realized error load is at most~1 the FWER is at most $\alpha$, so
Conditions~\ref{cond:stopping} and~\ref{cond:valid} --- the two that give weak
FWER control --- give strong control as well, with no per-level adjustment. When the error load is at most~1 the unadjusted procedure controls the FWER on
its own. We call this case the \emph{natural gating} regime, the first of the
four regimes of Table~\ref{tab:four_regimes}. The name reflects the mechanism: power
decays as data splitting reduces sample sizes at each level, and this decay
limits how many null nodes the procedure can reach. Because $\sum_\ell G_\ell$ depends only on the design, not on the
configuration (the unknown pattern of which blocks carry effects), the single
pre-data check $\sum_\ell G_\ell \le 1$ guarantees natural gating wherever the
effects fall (Supplement Proposition~\suppref{prop:strong_fwer_adj}{2}). In
the multi-site education policy trials that motivate this paper, the effect
sizes and tree sizes are moderate, with $k \approx 3$ and $L \approx 3$. Even
this conservative sum stays below~1 there, so no adjustment is needed.\corrflag{1}

The rest of this section treats the high-error-load case. It does not arise in
the 25 education trials of Section~\ref{sec:mdrc}, where the error load is far
below 1.\corrflag{1} It is the regime of the calibrated Job Corps design in
Section~\suppref{sec:njcs}{E} of the Supplement.

The realized error load exceeds~1 with wide, shallow trees or high root power.
When it does, the per-node bound~\eqref{eq:fwer-bound1} no longer holds the
FWER below $\alpha$, so the unadjusted procedure is no longer guaranteed to
control it. Restoring control then requires tightening $\alpha$ at each level, an adaptive
adjustment we derive in Section~\suppref{sec:adaptive_alpha_adj}{B.5} of the Supplement.
When the test statistic has an asymptotically Normal distribution, we can
compute approximate power at each depth \emph{before} testing begins.\footnote{
Section~\suppref{sec:test_statistic}{C} of the Supplement introduces such a statistic,
following \textcite{hothornetal:coin:2006}.} Each $\hat\theta_j$ is the power to detect an effect size we fix at the design
stage, computed at the sample size available at depth $j$. That effect size is
a Cohen's $d$ we anticipate rather than estimate from the realized outcomes. With these power estimates in hand we adjust $\alpha$ at each level
accordingly
(Supplement Remark~\suppref{thm:root_adjust}{10} and Theorem~\suppref{thm:fwer_control_adaptive}{3}) as follows:

\begin{equation}
\alpha_\ell^{\text{adj}} = \min\left\{\alpha, \frac{\alpha}{k^{\ell-1} \cdot \prod_{j=1}^{\ell-1} \hat{\theta}_j}\right\}
  \label{eq:optimal_alpha_d}
\end{equation}

Equation~\eqref{eq:optimal_alpha_d} controls the strong FWER on a regular tree
using an error load computed once, before testing: $G_\ell = k^{\ell-1}
\prod_{j=1}^{\ell-1} \hat\theta_j$, calculated from the estimated power at each depth, $\ell$.
The stopping rule still halts testing below any node that fails to reject, but
we do not recompute $G_\ell$ to remove those untested subtrees. Two
generalizations extend this adaptive testing regime. The first handles irregular trees, where the number of children varies across
nodes. For an irregular tree the depth-$\ell$ error load becomes a sum over
the actual nodes, $G_\ell = \sum_{i \in \text{level } \ell} \prod_{j \in
\anc(i)} \hat\theta_j$. The total $\alpha$ is then split across depths by
weights $\{w_\ell\}_{\ell=2}^{L}$ summing to at most one. We call this the \emph{error budget}. The level $\alpha$ is treated as a fixed
resource, and each depth is assigned a share $w_\ell$ of it, so the depths'
false-rejection rates sum to at most $\alpha$. The second adapts the
error load as testing proceeds. Whenever a
branch fails to reject, the stopping rule drops its whole subtree, so the null
nodes there can never be reached and need not be counted. After each depth,
then, the error load is recomputed on only the surviving subtree, replacing the
static $G_\ell$ with the smaller surviving error load $D_\ell$ from the branches that
did reject. We call this procedure \emph{branch pruning}.
Table~\ref{tab:four_regimes} summarizes the four resulting \emph{regimes}:
situations distinguished by the tree's shape, its error load, and whether
branches are pruned. For each regime it also gives the rule that controls the
FWER and the theorem proving it. All 25 education trials of Section~\ref{sec:mdrc} fall in Regime~1, natural
gating, where no adjustment enters.\corrflag{1} Regimes~2--4 cover the high-error-load
case.

\begin{table}[tb]
  \centering
  \caption{Four regimes for FWER control in top-down testing. $G_\ell$
  is the \emph{static} error load at depth~$\ell$ --- the expected count
  of null nodes the procedure reaches, computed once from estimated power
  before testing and not revised as branches drop out. $D_\ell$ is the
  \emph{surviving} error load computed after pruning on the
  subtree that actually rejected, so $D_\ell \le G_\ell$. The weights
  $w_\ell$ are budget shares with $\sum_\ell w_\ell \le 1$, fixed before
  each depth is tested and possibly using the testing
  history through depth $\ell - 1$.
  Regime~1 needs no $\alpha$-adjustment.}\label{tab:four_regimes}
  \begin{tabular}{llll}
    \toprule
    Regime & Tree shape & Adjusted $\alpha$ at depth $\ell$ & Theorem \\
    \midrule
    1. Natural gating ($\sum G_\ell \le 1$) & any & $\alpha$ &
      Prop.~\suppref{prop:strong_fwer_adj}{2} \\
    2. Regular, static & regular & $\alpha / G_\ell$ &
      Thm.~\suppref{thm:fwer_control_adaptive}{3} \\
    3. Irregular, static & irregular & $w_\ell \alpha / G_\ell$ &
      Thm.~\suppref{thm:fwer_budget}{4} \\
    4. Pruned & any & $w_\ell \alpha / D_\ell$ &
      Thm.~\suppref{thm:fwer_budget_pruning}{5}\corrflag{2} \\
    \bottomrule
  \end{tabular}
\end{table}

The third column of Table~\ref{tab:four_regimes} --- the adjusted $\alpha$ at depth~$\ell$, written $\alpha_\ell$ --- is the
significance level at which a node's test rejects. In every adjusted regime $\alpha_\ell$ falls inversely with the error load. The
``Regular, static'' row sets it to $\alpha/G_\ell$. The ``Irregular, static''
row multiplies that value by the depth's budget share $w_\ell$, giving
$w_\ell\alpha/G_\ell$. The ``Pruned'' row swaps the static error load $G_\ell$
for the surviving error load $D_\ell$, giving $w_\ell\alpha/D_\ell$. Only the
``Natural gating'' row keeps the nominal $\alpha$. A larger error load means a smaller $\alpha_\ell$, so a node needs a smaller
$p$-value to reject. In the ``Regular, static'' row, $G_\ell = 5$ drops $\alpha_\ell$ from $\alpha = 0.05$ to $0.01$. The error load is largest near the root and shrinks as power decays with each
split. The adjustment is therefore stringent near the root and relaxes at deeper
levels, where power decay has already gated the procedure. This resembles
$\alpha$-investing \autocite{foster2008alpha}, but here the allocation across
depths is set by the relationship between power and tree structure rather than
by observed $p$-values.

This downward flow of significance places our procedure in a tradition of
structured multiple testing developed largely for clinical trials. A parent
gates its children, and error budget passes to where rejections remain
possible.
Gatekeeping procedures make the test of a later family of hypotheses
contingent on rejections in an earlier ``gatekeeper'' family: all of its
hypotheses for a serial gatekeeper, at least one for a parallel gatekeeper
\autocite{dmitrienko2007gatekeeping, dmitrienko2009multiple}. The recycling
framework of \textcite{burman2009recycling} passes the $\alpha$ freed by a
rejected hypothesis to others by a fixed transfer rule. The graphical approach
of \textcite{bretz2009graphical} recasts such weighted-Bonferroni procedures
as directed graphs whose node and transition weights are updated by a simple
rule upon each rejection, passing freed weight to surviving nodes. Our
adaptive $\alpha$-schedule shares the logic of pre-specifying how significance
is allocated, but differs in what the data then do with it. In these procedures the allocation graph is fixed in advance, and the $\alpha$
freed by a rejection is recycled onto surviving hypotheses, relaxing their $\alpha$. Our per-depth budget is fixed in advance by design-stage power
estimates and the tree's shape, and a rejection never relaxes a sibling's $\alpha$. Branch pruning is just
stopping: when a node cannot be rejected, the subtree descending from it goes
untested, and the budget those tests would have consumed is returned in the
predictable, pre-specified way the guarantees require. The two traditions are
already connected: the sequential-rejection principle we build on
\autocite{goeman2010sequential} encompasses these gatekeeping and graph-based
procedures, and \textcite{goeman2012inheritance} casts hierarchical testing as a
special case of the graphical approach. Our contribution specializes that line
to randomization-based tests with an allocation fixed by the design rather than
read off the data.

\paragraph{Budget weights, $w_\ell$} In a regular, unpruned tree, setting each depth's $\alpha_\ell$ to
$\alpha/G_\ell$ needs no weights: across depths, each node's
false-rejection contributions already add up to at most $\alpha$ on their own.
They add up this way because of a cancellation in regular trees. Every node at
a depth shares the same path power, which cancels against $G_\ell$. The
procedure can reach only a \emph{boundary} null --- a null node whose parent
is non-null --- and each such node then contributes at most
$\alpha/k^{\ell-1}$ to the FWER. Write $m_\ell$ for
the number of non-null nodes at depth $\ell$, $e_\ell$ for the number of
boundary nulls there, and $L$ for the number of depths. Each non-null parent
has $k$ children, every one either non-null or a boundary null, so $m_\ell +
e_\ell = k\,m_{\ell-1}$ at each depth. Summing the boundary-null contributions
and substituting this identity makes consecutive terms cancel:
\[
  \sum_{\ell=2}^{L} \frac{e_\ell}{k^{\ell-1}}
  = \sum_{\ell=2}^{L}\left(\frac{m_{\ell-1}}{k^{\ell-2}}
      - \frac{m_\ell}{k^{\ell-1}}\right)
  = m_1 - \frac{m_L}{k^{L-1}} \le 1 ,
\]
since the root is the only depth-1 node, so $m_1 \le 1$. The per-depth
false-rejection rates therefore sum to at most $\alpha$ on their own
(Supplement Theorem~\suppref{thm:fwer_control_adaptive}{3}).
That telescoping fails in two cases. The first is an irregular tree, where
boundary nulls can concentrate in a high-power branch. The second is pruning,
which makes the denominators data-dependent. The fix is to give each depth an explicit share $w_\ell$ of the
error budget, with $\sum_\ell w_\ell \le 1$, so the per-depth false-rejection
rates again sum to at most $\alpha$. These shares need no tuning: equal weights
$w_\ell = 1/(L-1)$ control the FWER, as does any predictable allocation summing
to one (Supplement Theorems~\suppref{thm:fwer_budget}{4}
and~\suppref{thm:fwer_budget_pruning}{5}; we define predictability just below).
Unequal weights are optional. They gain power by spending more of the budget
on the shallow, high-power depths, or on the branches that survive pruning.
Remark~\suppref{rem:weight_choices}{13} of the Supplement catalogs the
choices.

Which $\alpha$-schedule to use follows three checks. First, check whether the condition $\sum_\ell G_\ell \le 1$ holds. This check
uses no information about the configuration, the pattern of which blocks carry
effects. If it holds, nominal $\alpha$ controls the FWER wherever the effects
fall, and no weights enter. This is the case for all 25 education trials of
Section~\ref{sec:mdrc}.\corrflag{1}
Second, if it fails, an assumption about which level of the tree the effects
cluster at can still guarantee the unweighted $\alpha$-schedule. Under that
assumption the FWER is at most $\alpha \sum_\ell S_\ell/G_\ell$, where
$S_\ell$ sums the path power of the exposed (boundary) nulls at depth $\ell$.
Whenever that sum stays below one, testing each node at the unweighted $\alpha/G_\ell$ holds the FWER at or below
$\alpha$, with more power than any
choice of weights. Third, only when neither check passes, or when we will not
commit to the level-clustering assumption, are budget weights needed, for a
guarantee that holds whatever the configuration. Section~\ref{sec:dpp_example}
demonstrates all three checks on the simulated Detroit Promise design.

\begin{correctionbox}
The FWER-control guarantees claimed in the next two paragraphs ---
Theorem~\suppref{thm:fwer_budget_pruning}{5} for branch pruning and
Corollary~\suppref{cor:switching}{4} for switching back to nominal $\alpha$
--- are false as stated and are withdrawn (Correction notice, item~2). The
descriptions remain so the reader can follow the simulations that use them;
the control claims should not be relied on.
\end{correctionbox}

\paragraph{Branch pruning} Branch pruning is Regime~4 of
Table~\ref{tab:four_regimes}. When a branch (a child node
and the subtree beneath it) fails to reject at depth $\ell$, the Stopping Rule
removes that whole subtree from further testing, so the null nodes in it can
never be reached and never contribute to the FWER. Recomputing the error load
on what remains --- the surviving error load $D_\ell \le G_\ell$ --- lets the deeper tests run at a larger $\alpha_\ell$. Theorem~\suppref{thm:fwer_budget_pruning}{5} in
the Supplement shows that this recomputed adjustment still controls the FWER at
level~$\alpha$ for any tree shape, provided the budget weights $w_\ell$ are
\emph{predictable}: each $w_\ell$ is fixed before the depth-$\ell$ tests
are seen, using only results from shallower depths. Predictability is what makes the guarantee hold. A weight chosen after seeing
a depth's $p$-values could be tuned to manufacture rejections, and the bound
would no longer apply. The gain over not pruning is largest in wide trees where most of the root's
branches contain no effects at all. Once those branches are pruned, the
surviving error load $D_\ell$ no longer counts their nodes, nodes the
procedure will never test. The deeper $\alpha_\ell$ then relax to $w_\ell\alpha/D_\ell$.

\paragraph{Switching to nominal $\alpha$} A practical refinement
applies once pruning has narrowed the surviving subtree enough. If at
depth $s$ the remaining surviving error load
$\sum_{\ell=s}^{L} D_\ell$ fits within the remaining budget
$1 - \sum_{\ell=2}^{s-1} w_\ell$, the procedure can test at the
nominal level~$\alpha$ for every depth $\ell \ge s$
(Corollary~\suppref{cor:switching}{4} in the Supplement). The intuition is
that the surviving subtree, by virtue of its small size, is back in
the natural-gating regime: explicit adjustment was protecting against
a multiplicity problem that the surviving tree no longer poses. In the simulated
Detroit Promise design (\S~\ref{sec:dpp_example} below), once Henry Ford
Community College (HFCC) is the only college to survive its college-level
test, the surviving tree contains only HFCC's cohorts and blocks; its surviving
error load is small enough to fit within the remaining budget, so
Corollary~\suppref{cor:switching}{4} lets the cohort and block tests proceed at
full~$\alpha$ while still controlling the FWER.

\paragraph{Sensitivity to the assumptions} The four regimes for strong control
of the FWER rest on two requirements beyond the conditions of
\S~\ref{sec:topdownbotup}, and neither holds exactly. The first is
\emph{conditional validity} at boundary nulls: a true null's false-positive rate
stays at or below $\alpha$ even after we condition on its non-null ancestors
having been rejected. The second is \emph{conservative path-power estimation}:
the estimates $\hat\theta_j$ that build $G_\ell$ do not understate the true power
along non-null paths. Remark~\suppref{rem:sensitivity}{8} in the Supplement collects departures from
both into a single worst-case inflation factor $\kappa_{\max}$. This factor
equals~$1$ when both requirements hold and grows the further either departs.
The Remark then shows $\mathrm{FWER} \le \kappa_{\max}\sum_\ell \alpha_\ell
G_\ell$. The no-adjustment threshold moves from $\sum_\ell G_\ell \le 1$ to
$\sum_\ell G_\ell \le 1/\kappa_{\max}$, equivalently testing each node at
$\alpha/\kappa_{\max}$.

The distortion is weakest where the error load is largest. Write $\rho$ for
the ratio of a true null's conditional false-positive rate to its nominal
rate, the inflation from conditioning. This $\rho$ falls toward~$1$ as
ancestor power rises, and high ancestor power is what makes the error load
large. The per-test inflation is therefore largest only along the low-power
paths the procedure rarely reaches.

\begin{correctionbox}
The paragraph below runs this sensitivity check on the incorrect error loads
of Correction notice, item~1. Corrected loads exceed 1 in all 25 MDRC studies
at the planning effect size $d = 0.20$: the largest load is not 0.051, so
there is no 19.6-fold margin and the bound of 0.011 below does not hold.
\end{correctionbox}

For the application, this margin is the point of the sensitivity check. Across the 25 MDRC trials the largest error load is below 0.051, so far below
one that nominal-$\alpha$ testing holds $\mathrm{FWER}\le\alpha$ for any
combined distortion up to $\kappa_{\max} = 1/0.051 \approx 19.6$. Suppose both departures struck at once. Conditioning doubled a true null's
false-positive rate to $0.10$, so $\rho = 2$. A two-step non-null path had
true power $0.9$ per step but was estimated at $0.6$, making the path-power
ratio $\lambda = 0.9^2/0.6^2 = 2.25$. Even then
$\kappa_{\max} = \rho\lambda = 4.5$, far under $19.6$, and $\mathrm{FWER}\le
\kappa_{\max}\,\alpha\sum_\ell G_\ell = 4.5\times0.05\times0.051 \approx 0.011$,
far below $\alpha$.

Could $\rho$ really reach $2$? The administrative splits make it unlikely in
this case. A node's test pools the students of every block beneath it, while a
null block's test uses only its own students. The two tests therefore
correlate only through the shared units, a correlation near
$\sqrt{n_{\text{block}}/n_{\text{parent}}}$. A block is a small share of that pool,
and a boundary null's parent is non-null by construction, rejecting on the
strength of its real-effect children rather than the null block's noise. Because
the blocks are disjoint, the null block shares no units with those
effect-carrying siblings; its only tie to the parent's test is its own small
share of the pooled units. Conditioning on the parent's rejection therefore shifts the null block's test
statistic by at most that small correlation times the parent's exceedance, so
$\rho$ stays near~$1$. Supplement Remark~\suppref{rem:sensitivity}{8}
quantifies $\rho$ under a bivariate-normal approximation. Reaching $\rho = 2$ would take a null block large enough to drive the
test above it and to swing it toward the block's own treated-control gap, which
a disjoint block holding a small share of the pool cannot.

The remaining assumption is that the power estimates overstate rather than
understate the truth, and a departure from it is checkable. Each
$\hat\theta_j$ predicts an ancestor's rejection rate. A design that
underestimated power from $0.9$ to $0.6$ would therefore reject far more often
than planned.

\subsection{An Example: A Simulated Version of The Detroit Promise Program}\label{sec:dpp_example}

The Detroit Promise Program (DPP) \autocite{ratledge2019path} involved enhanced
advising, financial support, and encouragement of full-time enrollment to
community college students  in an effort to increase graduation rates. The
trial randomized this bundle of treatments within five community colleges in
the Detroit area: Henry Ford Community College, Macomb Community College,
Oakland Community College, Schoolcraft College, and Wayne County Community
College District. Each college fielded the trial for 3 academic year cohorts,
and within the cohorts students were further subdivided into 1 to 4
experimental blocks within which the intervention was randomly assigned. To illustrate tree-gated testing, we generate a design that preserved the DPP
structure: 5 college nodes under the overall node, with varying numbers of
cohorts and sub-cohort blocks, for 44 blocks total. The design departs from
the real trial in ways that facilitated simulation. We specified 50 students
per block, and we drew potential outcomes under control from $N(10,9)$ to
roughly match the observed distribution of credits taken, but with more
variation. We concentrated all non-zero effects in a single college:
9 blocks within HFCC have treatment effects while the 35 blocks across the
other four colleges are pure null. This concentration mimics a pattern common
in multi-site trials and visible in the MDRC data (Section~\ref{sec:mdrc}):
effects that are localized within particular sites rather than uniformly
spread. For the illustration we use a large effect (Cohen's $d = 0.80$) within each of
the 9 blocks with non-null effects, so that the root-level test rejects and
the tree structure is visible. Effects this large are unusual in practice; the
MDRC trials show far weaker per-block signals. The simulation study below
therefore re-runs the design at the smaller effect sizes those trials suggest.

Figure~\ref{fig:dpp_sim} shows one run of the top-down procedure with
\emph{no} $\alpha$-adjustment: every reached node is tested at the nominal
$\alpha = 0.05$, and the stopping rule halts testing below any node that fails
to reject. We use the unadjusted procedure on purpose. At the large effect $d = 0.80$ the
error load is far above~1, so a single picture can show both the correct
detections and the false-rejection cost that adjustment is meant to remove. Nodes whose subtree contains a true effect are filled blue with rounded
corners; null nodes are white with square corners. Any ancestor of a non-zero
block is itself non-zero --- a rejection at the college or cohort level is a
true discovery, not an error. The algorithm tested 18 of the 44 blocks rather
than all 44, pruning branches rooted at colleges where the college-level test
did not reject. It correctly identified HFCC as the source
and descended into its cohorts and blocks, detecting four of the nine non-nulls while falsely rejecting two of the 35 nulls in Schoolcraft College (SC).

\begin{landscape}
\begin{figure}
\centering
\includegraphics[width=.95\linewidth]{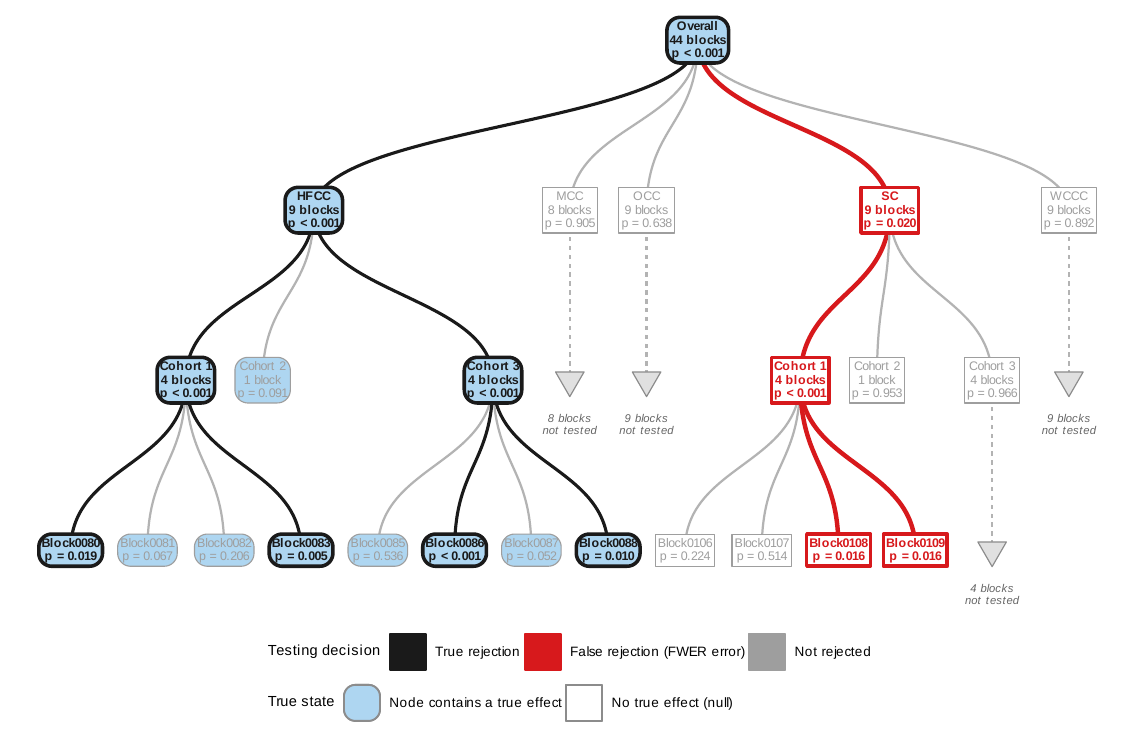}
  \caption{Top-down testing on a simulated data set built from the design of the
  Detroit Promise Program \autocite{ratledge2019path}, applying the stopping
  rule with no $\alpha$-adjustment. Nine blocks within HFCC have non-zero
  effects (Cohen's $d = 0.80$); the remaining 35 blocks are null. A thick black
  border marks a true rejection, a thick red border a false rejection (that
  would add to FWER), and a thin grey border a node that was tested but $p \ge
  \alpha$. Truth is shown by the box itself: nodes whose subtree contains a
  true effect are filled blue with rounded corners, while null nodes are white
  with square corners. Faded grey triangles represent branches the procedure
  pruned: the parent node was tested but not rejected, so it was never split
  into the individual blocks counted beneath the triangle. The red path traces the false-rejection cascade. The null Schoolcraft college
(SC) is rejected, the procedure then descends into SC Cohort~1, and the null
blocks Block0108 and Block0109 are falsely rejected. Abbreviations: Henry Ford Community College
  (HFCC), Macomb Community College (MCC), Oakland Community College (OCC),
  Schoolcraft College (SC), Wayne County Community College District
  (WCCC).}\label{fig:dpp_sim}
\end{figure}
\end{landscape}

This illustration shows that a policy maker seeking to identify which colleges
or blocks showed treatment effects would find the bottom-up approach
disappointing: the \textcite{hommel1988stagewise} adjustment across 44 blocks
detects just one block despite nine having genuine effects. The top-down approach leads attention away from three of the five colleges. It
identifies not only more individual blocks but also higher-level groupings,
such as the HFCC college itself, that can direct further exploration toward
understanding why effects concentrated there.\footnote{See
\textcite{hong2025multisite} for methods to disentangle organizational
effectiveness from ecological confounding.}\footnote{See \textcite{hong2025multisite} for methods to disentangle
organizational effectiveness from ecological confounding.}


The figure also shows the cost of using no $\alpha$-adjustment: Schoolcraft
College (SC), a purely null college, rejected at the college level ($p = 0.020$),
and the procedure descended through SC Cohort~1 ($p = 0.001$) to falsely reject
two null blocks (Block0108 and Block0109, both $p = 0.016$). The adaptive
$\alpha$-adjustment of Equation~\eqref{eq:optimal_alpha_d} stops this cascade at
its source. It tests each college not at $0.05$ but at $\alpha$ divided by the
error load $G_\ell$ at that depth. At $d = 0.80$ the root rejects with near-certainty (its estimated power is
$1.00$), so each of the five colleges is reached at path power one. The
depth-2 error load is therefore exactly $G_2 = 5$: the five colleges summed at
path power one, in the notation of Section~\ref{sec:strong}. The college-level
$\alpha_2$ therefore tightens to $\alpha/G_2 = \alpha/5 = 0.01$. SC's $p = 0.020$ clears the
nominal $0.05$ but not this $0.01$, so the adjusted procedure never rejects SC,
never descends to SC Cohort~1, and never reaches Block0108 or Block0109. The two
false rejections disappear.

This is a single simulated dataset with a large effect size chosen to make the
tree structure visible. Power and error rates describe the behavior of a
procedure across repetitions of design. We now turn to a simulation at
realistic effect sizes but using regular $k$-ary trees for simplicity to
evaluate these properties systematically.

\subsection{Simulations to show strong FWER control.}

We used the same DPP design: 9 non-null blocks in one college and 35 pure-null
blocks across the other four. We set a moderate per-block effect of Cohen's $d
= 0.30$. Because only 9 of the 44 blocks carry an effect, the overall
intention-to-treat effect divided by the outcome standard deviation is several
times smaller than that of the typical MDRC study. This is the quantity
reported for the MDRC trials in Section~\ref{sec:mdrc}. The per-block effect
is large enough for the procedure to detect structure, while the overall
signal stays realistic. We re-randomized treatment within each block 10,000 times, holding
the effect structure fixed, recording in each iteration whether a false positive
occurred at an individual block (a leaf) or at any node.
Table~\ref{tab:big_sim_xtab} summarizes the results.

\begin{table}[tb]
\centering
\small\setlength{\tabcolsep}{4pt}
\caption{Operating characteristics of five top-down testing
  procedures and the bottom-up Hommel adjustment across 10,000 simulations
  ($\alpha=.05$) using simulated Detroit Promise Program data where 9 of 44
  blocks (all in one college) have non-zero effects (Cohen's $d=0.30$). The
  full tree has 65 nodes: 1 overall test, 5 colleges, 15 cohorts, and 44
  blocks. Each row is a procedure: rows marked `, w' add equal budget weights
  (Regime~3 for the adaptive row, Regime~4 for the pruned); the unmarked rows
  use the unweighted $\alpha$-schedule, bounded at $0.8\alpha$ for this
  concentrated design (see text). The final row is the flat Hommel adjustment,
  which tests all 44 blocks at once. Columns: `Blocks tested' is the mean
  number of individual blocks tested; `Power' the mean proportion of the 9
  non-null blocks rejected; `True rej.' the mean number of non-null blocks
  rejected; `P($\geq 1$)' and `P($\geq 2$)' the proportions of simulations
  detecting at least one and at least two non-null blocks; `Mean if any' the
  mean number detected among simulations that detect any; `FWER' the
  proportion of simulations with at least one false rejection.}
\label{tab:big_sim_xtab}
\begin{tabular}{lrrrrrrr}
\toprule
 & Blocks & & True & & & Mean & \\
Procedure & tested & Power & rej. & P($\geq 1$) & P($\geq 2$) & if any & FWER \\
\midrule
\multicolumn{8}{l}{\emph{Top-down}} \\
  \quad 2 Conditions & 4.705 & 0.101 & 0.796 & 0.340 & 0.249 & 2.339 & 0.038 \\
  \quad + Adapt.~$\alpha$ & 3.835 & 0.032 & 0.218 & 0.167 & 0.044 & 1.305 & 0.003 \\
  \quad + Adapt.~$\alpha$, w & 3.380 & 0.014 & 0.086 & 0.078 & 0.008 & 1.108 & 0.001 \\
  \quad + Ad.~$\alpha$~Pr. & 4.159 & 0.039 & 0.277 & 0.205 & 0.062 & 1.353 & 0.004 \\
  \quad + Ad.~$\alpha$~Pr., w & 4.238 & 0.055 & 0.424 & 0.259 & 0.125 & 1.636 & 0.003 \\
\midrule
  Bottom-up Hommel & 44 & 0.010 & 0.088 & 0.085 & 0.003 & 1.034 & 0.024 \\
\bottomrule
\end{tabular}
\end{table}

Table~\ref{tab:big_sim_xtab} compares five top-down variants against
bottom-up Hommel. The rows name the procedures, each adding one tool
introduced above: ``2~Conditions'' applies only the stopping rule and
valid tests (Conditions~\ref{cond:stopping} and~\ref{cond:valid});
``+~Adapt.~$\alpha$'' adds the adaptive $\alpha$-adjustment of
Equation~\eqref{eq:optimal_alpha_d}; and ``+~Ad.~$\alpha$~Pr.'' combines the
adaptive $\alpha$-adjustment with branch pruning --- recomputing the
$\alpha$-schedule on the surviving subtree after each depth, so that dead
branches free up $\alpha$ for the remaining tests. The two rows marked
``,~w'' repeat the adaptive and pruned $\alpha$-schedules with equal budget weights
($w_\ell = 1/(L-1)$, Regime~3 of Table~\ref{tab:four_regimes}); this is the
adjustment that controls the FWER whatever the tree-shape, shown here to expose its power cost. The
columns report the operating characteristics: the mean number of blocks
tested, power (the mean proportion of the nine non-null blocks rejected), the
mean number of true rejections together with the distribution behind that mean
(the proportions of simulations detecting at least one and at least two
non-null blocks, and the mean detected among simulations that detect any), and
the FWER (the proportion of simulations with at least one false rejection).
The final row reports bottom-up Hommel, which tests all 44 blocks at once.

At $d = 0.30$, the error load is $\DPPHighErrorLoad$ --- well above the
threshold of~1 where Supplement Proposition~\suppref{prop:strong_fwer_adj}{2}
guarantees control.\corrflag{1} Without adjustment, whether the FWER stays below $\alpha$
depends on where we count false rejections: among the 44 individual blocks
only, or across all nodes --- colleges and cohorts as well as blocks. Among blocks the unadjusted two-conditions procedure still controls it, at
$\DPPHighThreeRulesFWER$. The error-load bound is a union bound over all nodes
at a level, and it overstates the FWER when the boundary nulls cluster at one
level, as the four null colleges do here. A \emph{block} is therefore seldom
falsely rejected. But with only the 2 conditions a null \emph{college} is rejected more freely; the procedure
rejects Schoolcraft in Figure~\ref{fig:dpp_sim}, and counting a false rejection
at any node raises the FWER to $\DPPHighTwoCondNodeFWER$
(Supplement Table~\suppref{tab:dpp_inheritance}{S7}). This inflation at the internal nodes is what the error load above~1 predicts,
and what adjustment removes. The adaptive $\alpha$-adjustment
(Equation~\ref{eq:optimal_alpha_d}) holds the node FWER to
$\DPPHighAdaptNodeFWER$ and the leaf FWER to $\DPPHighAdaptFWER$. The pruned
variant holds the leaf FWER at $\DPPHighAdaptPrFWER$.\corrflag{2} Bottom-up Hommel also controls the FWER ($\DPPHighBUHommelFWER$), but at a
steep cost. It detects effects in only $\DPPHighBUHommelLeafPowerPct$\% of the
truly non-null blocks. The unadjusted top-down procedure detects
$\DPPHighThreeRulesLeafPowerPct$\% and the adaptive variant
$\DPPHighAdaptLeafPowerPct$\%.


Although absolute power is modest at $d = 0.30$ with $N = 50$ per block, the
relative advantage is large. The unadjusted top-down approach averages
$\DPPHighThreeRulesLeafTR$ true leaf rejections per simulation,
$\DPPHighThreeRulesOverBUHommelTRRatio$ times the bottom-up Hommel rate; even
the most conservative top-down variant (adaptive $\alpha$) averages
$\DPPHighAdaptLeafTR$, $\DPPHighAdaptOverBUHommelTRRatio$ times the bottom-up
rate. Branch pruning improves further.\corrflag{2} The pruned adaptive variant averages
$\DPPHighAdaptPrLeafTR$ true rejections, $\DPPHighAdaptPrOverAdaptTRRatio$
times the non-pruned adaptive rate. Once the four null colleges fail to reject
at depth~1, their subtrees vanish from the error load, and the
$\alpha$-schedule relaxes for deeper HFCC levels. These means understate the
difference, because detection is clustered: the unadjusted top-down procedure
recovers at least one affected block in $\DPPHighThreeRulesLeafPGeOnePct$\% of
simulations and two or more in $\DPPHighThreeRulesLeafPGeTwoPct$\%, while
bottom-up Hommel reaches one in $\DPPHighBUHommelLeafPGeOnePct$\% and two in
$\DPPHighBUHommelLeafPGeTwoPct$\% (Figure~\suppref{fig:dpp_pmf}{S1} of the
Supplement shows the full distribution).

This large relative advantage means many times more true leaf rejections than
bottom-up. It arises because the top-down procedure aggregates signal at the
college level ($N = 450$ for HFCC) before descending to individual blocks. The
bottom-up approach tests each block with less information, $N = 50$, and pays a
heavy multiplicity penalty across all 44. Branch pruning
amplifies this advantage: when the procedure correctly identifies HFCC as the
active college and prunes the four null colleges, the effective tree narrows
from 5 branches to 1 and the error load drops accordingly.

The top-down procedure also answers a question bottom-up testing never poses.
Because it tests group-level hypotheses on the way down, it can reject the
college-level null --- that HFCC as a whole carries no effect --- even when no
single block within HFCC is individually detectable. Across the same simulations it rejects this college-level hypothesis in
$\DPPHighCollegePDetectPct$\% of trials under the unadjusted procedure, and
$\DPPHighCollegePDetectAdaptPct$\% under the adaptive $\alpha$-schedule. The
two rates are nearly the same, because the adaptive $\alpha$-schedule tightens
the deep per-block $\alpha_\ell$ far more than the shallow college-level
$\alpha$.
Bottom-up testing assesses only the 44 individual blocks, so the college-level claim
is never evaluated at all. Evaluating it within a bottom-up analysis would mean
pre-specifying which groups to test and paying the multiplicity cost of those
extra hypotheses. The advantage holds against
the hierarchical Goeman--Finos inheritance procedure too
\autocite{goeman2012inheritance}: run on the same node $p$-values it controls the
FWER ($\DPPHighInhNodeFWER$ across all nodes), but testing each block at $\alpha$ divided by the block count, it detects only $\DPPHighInhLeafTR$ blocks on
this wide tree. Supplement Table~\suppref{tab:dpp_inheritance}{S7} compares all four procedures
--- the unadjusted and adaptive top-down variants, the inheritance procedure,
and bottom-up Hommel --- at node and block levels.

Were budget weights necessary here? The error load $\DPPHighErrorLoad$ exceeds
one, so Table~\ref{tab:four_regimes} would route this irregular tree to a
budget-weighted regime.\corrflag{1} A researcher willing to assume the effects are concentrated can do better. The
assumption here is that they fall within a single college, though not which
one. Under that assumption the boundary-null bound of Supplement
Theorem~\suppref{thm:fwer_decomp}{2} holds the unweighted $\alpha$-schedule at
$\mathrm{FWER} \le \DPPBoundFactor\,\alpha = \DPPBoundFWER$, below $\alpha$ at
every effect size. This is the second of the three checks of
Section~\ref{sec:strong}. The assumption is cheap to be wrong about: even in
the worst configuration for the unweighted $\alpha$-schedule --- a single
non-null path exposing a boundary null at every depth --- the bound rises only
to $\DPPWorstFactor\,\alpha \approx \DPPWorstFWER$, just past $\alpha$. A researcher unwilling to accept even that takes the third check: budget
weights, whose guarantee holds whatever the configuration. They carry the
power cost the ``,~w'' rows of Table~\ref{tab:big_sim_xtab} display. Equal
weights cut the adaptive $\alpha$-schedule's true leaf rejections by a factor
of $\DPPHighAdaptOverAdaptWTRRatio$. Pruning is the exception that needs weights
regardless: the ``+ Ad.~$\alpha$~Pr.,~w'' row uses the predictable budget
weights of Supplement Theorem~\suppref{thm:fwer_budget_pruning}{5}.\corrflag{2}

The unadjusted procedure controlled the FWER here only because the effects
filled one whole college: the only null nodes it could reach and falsely reject
were the four inactive colleges, all sitting at the college level, with none left
to test deeper in the tree. Effects clustered farther down, or spread across
several colleges, would leave testable null siblings at more levels and
eventually push the FWER past $0.05$, where $\alpha$-adjustment becomes
necessary.

We now probe this boundary with a simulation study that varies tree width,
effect size, and the proportion of null hypotheses to create scenarios spanning
error loads from 0.2 to 3.1. This design isolates the procedure from
data-generation details, letting us identify where the unadjusted procedure
breaks down and how much power the adaptive correction costs.


\subsubsection{Simulation Study of Strong FWER Control}\label{sec:strong_fwer_sim}

The Detroit Promise simulation showed one design on each side of the error-load
threshold. We now construct three scenarios that probe that threshold
deliberately: trees and effect sizes chosen so that natural gating holds in two
(A and~B) and fails in the third~(C). Watching the unadjusted procedure cross
this boundary shows where adjustment is needed; comparing the variants shows
what the adaptive correction costs where it is not needed and what it repairs
where it is.


We study three scenarios, each placing the entire effect in a single subtree of
a $k$-ary tree. This concentrated pattern is the one the applications show.\footnote{We cannot
observe where effects truly fall, but we can see where the top-down procedure
\emph{detects} them, and the detections concentrate. Among the multi-site
studies that detect anything, the detections fall in a subset of sites (EASE
in 2 of 10, ASAP~CUNY in 2 of 3), or in the single-site LC~Career a single
cohort, not scattered across blocks.} Effects cluster by site or cohort rather
than scattering at random across blocks. This pattern is also where the
structure of the tree matters most. Effects can also be spread widely across many sites; the Job Corps
application in Section~\suppref{sec:njcs}{E} of the Supplement is that case, and the procedure handles it too.

Scenarios~A and~B share a deep binary tree ($k=2$, eight levels, 256 leaves) in
which one of the root's two subtrees holds all the non-null leaves --- every one
of its 128 leaves carries the effect. The two scenarios differ only in the
per-leaf sample size: Scenario~A gives each leaf $N=10$ observations, too few to
detect any single leaf, and Scenario~B gives each leaf $N=100$. Scenario~C is a
wide, shallow tree ($k=4$, three levels, 64 leaves) in which one of the root's
four subtrees holds all 16 of its non-null leaves, again at $N=100$ per leaf.
Each non-null leaf carries a fixed Cohen's $d$ --- $0.20$ in~A, $0.30$ in~B,
and $0.40$ in~C. These trees place the procedure in the two regimes the theory
distinguishes. In A and~B the concentrated signal sits behind a single gate:
the root exposes only one null branch. The error load is therefore at most~1,
and natural gating applies. In C the high-powered root opens three null
branches at once, pushing the error load above~1, the regime where the
procedure must be adjusted.

For each scenario, we draw $p$-values rather than simulating raw data. Non-null
leaves receive $p$-values drawn from $\text{Beta}(a, 1)$, where the shape
parameter $a$ is calibrated so that the rejection probability matches the power
implied by Cohen's $d$ and the sample size available at that leaf.\footnote{The
$\text{Beta}(a,1)$ family is convenient here for three reasons. It nests the
uniform null at $a=1$, so the same family describes null and non-null leaves.
Its distribution function is $\Pr(p \le x) = x^a$, so the calibration has a
closed form: $a = \log\theta / \log\alpha$ gives rejection probability $\Pr(p
\le \alpha) = \theta$ for any target power $\theta$, with $a < 1$ placing more
mass near zero as a non-null $p$-value should. And because $x^a$ rises
monotonically with the test level $x$, a node at a tightened $\alpha$ ---
as the adaptive $\alpha$-schedule does --- loses power smoothly, the way a real
test would.} Null leaves receive $p$-values from $U(0,1)$. Parent and child
$p$-values are drawn independently given each node's null status. We compare five
methods. Three are top-down:

\begin{description}
  \item[Top-down, unadjusted (TD-Unadj).] The gating procedure of
    Section~\ref{sec:topdownbotup}, testing each node at the nominal
    $\alpha = 0.05$.

  \item[Top-down, adaptive (TD-Adapt).] The gating procedure with the per-level
    $\alpha$ adjusted by Equation~\eqref{eq:optimal_alpha_d}, using the
    anticipated power $\hat\theta_j$ at each depth.

  \item[Top-down, adaptive with pruning (TD-Adapt-Pr).] The adaptive adjustment
    with branch pruning: after each depth the error load is recomputed on the
    surviving subtree, and the $\alpha$-schedule for deeper levels is relaxed
    accordingly.
\end{description}

\noindent Because all three trees are regular $k$-ary trees, the adaptive methods
use the weight-free $\alpha$-schedule of Equation~\eqref{eq:optimal_alpha_d}
(Regime~2); budget weights, which are needed only to guarantee FWER for irregular trees, do not
enter here.

\noindent Two are bottom-up, testing all leaves at once:

\begin{description}
  \item[Bottom-up, Hommel (BU-Hommel).] The Hommel procedure applied to all
    leaves. This is the bottom-up procedure that controls the FWER.

  \item[Bottom-up, BH (BU-BH).] The Benjamini--Hochberg procedure applied to
    all leaves. Benjamini--Hochberg controls the false discovery rate, not the
    FWER; we include it because it is the other multiplicity correction
    practitioners commonly use, and to show what controlling a different error
    rate does to the FWER in these designs.
\end{description}

For each scenario we run all five methods $10{,}000$ times. The three top-down
methods share a random seed within a scenario, so they see identical leaf
$p$-values, and the bottom-up methods are computed from the same draws. We
record the FWER and the average number of true discoveries, counted two
ways: non-null nodes rejected at any level of the tree, and non-null leaves
rejected. Table~\ref{tab:strong_fwer_sim} reports the FWER and
Table~\ref{tab:strong_fwer_power} the discoveries. The binary tree of
Scenarios~A and~B has 256 leaves and Scenario~C's tree has 64; bottom-up Hommel
must divide $\alpha$ across all of them at once, while the top-down procedure
tests only the nodes it reaches, each at the full~$\alpha$.

\begin{table}[tb]
\centering
\setlength{\tabcolsep}{5pt}
\caption{Strong FWER control across three concentrated-effect scenarios (10,000 simulations per cell; simulation error $\approx 0.010$). Each scenario places the entire effect in one depth-1 subtree of a $k$-ary tree (one of the root's children and every node beneath it). Scenarios A and B share a deep binary tree (256 leaves) and differ only in leaf sample size; C is a wide-shallow tree (64 leaves). $N_{\text{leaf}}$ = per-leaf sample size; $d$ = Cohen's $d$. The Error load column is the realized error load: the unadjusted FWER is at most $\alpha$ times it and is controlled when it is at most~1 (Section~\ref{sec:strong}). Methods: TD-Unadj = top-down at nominal $\alpha$; TD-Adp = adaptive $\alpha$-adjustment (Equation~\ref{eq:optimal_alpha_d}); TD-A-Pr = adaptive with branch pruning; Inh = the Goeman--Finos inheritance procedure \autocite{goeman2012inheritance}, run on the same node $p$-values; BU-Hom/BU-BH = bottom-up Hommel / Benjamini--Hochberg applied to all leaves. Boldface marks FWER above $\alpha$. BU-BH controls the FDR, not the FWER, and inflates once the signal is dense.}
\label{tab:strong_fwer_sim}
\begin{tabular}{l r r r r | r r r | r | r r}
\toprule
& & & & & \multicolumn{3}{c|}{Top-down} & \multicolumn{1}{c|}{Inh} & \multicolumn{2}{c}{Bottom-up} \\
\cmidrule(lr){6-8} \cmidrule(lr){10-11}
Scenario & $k$ & $N_{\text{leaf}}$ & $d$ & Error load & Unadj & Adp & A-Pr & Inh & Hom & BH \\
\midrule
A & 2 & 10 & 0.20 & 1.0 & 0.048 & 0.023 & 0.023 & 0.023 & 0.024 & 0.025 \\
B & 2 & 100 & 0.30 & 1.0 & 0.050 & 0.025 & 0.025 & 0.025 & 0.025 & \textbf{0.319} \\
C & 4 & 100 & 0.40 & 3.0 & \textbf{0.147} & 0.038 & 0.038 & 0.038 & 0.041 & \textbf{0.201} \\
\bottomrule
\end{tabular}
\end{table}

\begin{table}[tb]
\centering
\setlength{\tabcolsep}{6pt}
\caption{Average true discoveries per simulation across the three scenarios (10,000 simulations each): \emph{nodes} = non-null nodes rejected at any level of the tree (top-down only; bottom-up tests no internal nodes, shown as ---), where rejecting an internal node means identifying a subtree of blocks carrying effects; \emph{leaves} = non-null leaves rejected, the only discovery bottom-up can make. Method labels as in Table~\ref{tab:strong_fwer_sim}.}
\label{tab:strong_fwer_power}
\begin{tabular}{l l | r r r | r | r r}
\toprule
& & \multicolumn{3}{c|}{Top-down} & \multicolumn{1}{c|}{Inh} & \multicolumn{2}{c}{Bottom-up} \\
\cmidrule(lr){3-5} \cmidrule(lr){7-8}
Scenario & Detected & Unadj & Adp & A-Pr & Inh & Hom & BH \\
\midrule
A & nodes  & 5.21 & 3.87 & 4.56 & 3.69 & --- & --- \\
A & leaves & 0.00 & 0.00 & 0.00 & 0.00 & 0.03 & 0.03 \\
\midrule
B & nodes  & 108.63 & 63.61 & 69.32 & 63.04 & --- & --- \\
B & leaves & 19.55 & 1.02 & 3.12 & 0.77 & 5.15 & 14.29 \\
\midrule
C & nodes  & 14.00 & 9.03 & 10.27 & 9.16 & --- & --- \\
C & leaves & 8.08 & 3.19 & 4.39 & 3.32 & 3.35 & 4.81 \\
\bottomrule
\end{tabular}
\end{table}

The three scenarios trace the regimes of Table~\ref{tab:four_regimes}. The
``Error load'' column of Table~\ref{tab:strong_fwer_sim} reports the realized error
load: the path power summed over the null branches each design exposes (the
quantity that bounds the FWER, $\mathrm{FWER}\le\alpha\cdot(\text{error load})$). When it
is at most~1, the unadjusted procedure controls the FWER with no adjustment; when
it exceeds~1, the per-level $\alpha_\ell$ must be tightened. By a
\emph{concentrated} design we mean one whose non-null leaves all sit in a single
subtree, so only one branch's siblings are ever exposed as boundary nulls.
Such a design keeps the realized error load near~1 in A and~B (one exposed
null branch) and near~3 in C ($k-1=3$ exposed branches under a high-powered
root). The worst-case $\sum_\ell G_\ell$ of Section~\ref{sec:strong} exceeds~1
strictly in all three: about 19, 354, and 83 in A, B, and~C. It is this large
because the worst case counts every node at each depth as a potential null. We need not know in advance that a design is concentrated: that
worst-case sum treats every node at each depth as a potential null, so it is the
largest the error load could be for a tree of that shape, computable from shape
and power alone. Provided the power estimates do not understate the truth (the
conservative-estimation requirement above), this worst-case sum is at least
the realized error load of every configuration. Relying on it therefore errs in only
one direction. It can trigger an unnecessary adjustment, a power cost, but it
cannot leave the FWER uncontrolled. A supplementary simulation isolates the threshold with a deliberately simple
design: a binary tree with exactly one non-null root-to-leaf path and one
boundary null per level. In this design the realized error load, not the
worst-case $\sum_\ell G_\ell$, determines where the FWER crosses $\alpha$. That single-path simulation (Table~\suppref{tab:load_threshold}{S1} in the
Supplement) confirms that the unadjusted FWER crosses $\alpha$ exactly at
a realized error load of~1.

Scenario~A shows what top-down testing can do that bottom-up testing cannot.
Every method controls the FWER (Table~\ref{tab:strong_fwer_sim}): with the
effect behind a single gate the error load is at most~1, so the unadjusted
procedure holds at \StrongFWERAUnadjFWER. At the leaf level neither method
detects anything, because each leaf carries only $N=10$ observations: bottom-up
Hommel averages \StrongFWERABUHomLeaves\ leaf discoveries, and the top-down
procedure essentially none (Table~\ref{tab:strong_fwer_power}). The difference
appears one level up. The top-down procedure rejects the non-null
\emph{internal} nodes, averaging \StrongFWERAUnadjNodes\ of them: it identifies
the affected subtree even though no single leaf within it is detectable.
Bottom-up Hommel reaches one affected leaf in only \StrongFWERABUHomLeafPGeOne\
of simulations, whereas top-down testing rejects at least two of the affected
internal nodes in \StrongFWERAUnadjNodePGeTwo\ of them. Rejecting an internal
node is a finding bottom-up testing cannot produce, because it never tests
anything but leaves.

Each scenario makes one branch of the decision rule concrete; we state the rule
in full at the end of the section.

Scenario~B keeps the same tree but raises the per-leaf sample size to $N=100$,
so leaves become individually detectable. Natural gating still holds --- the
unadjusted top-down FWER is \StrongFWERBUnadjFWER\ --- and now the leaf-level
power gap is large: the unadjusted procedure averages \StrongFWERBUnadjLeaves\
true leaf discoveries against bottom-up Hommel's \StrongFWERBBUHomLeaves\, a factor of \StrongFWERBUnadjOverBUHomRatio. The
mechanism is the gating itself: once the procedure reaches a leaf it tests it at
the full $\alpha$, whereas bottom-up Hommel must spread $\alpha$ across all 256
leaves. Adjusting the top-down $\alpha_\ell$ here would only discard this power ---
the adaptive variant, applying a correction the error load does not call for,
falls to \StrongFWERBAdaptLeaves\ leaf discoveries, below even bottom-up Hommel.
This is the practical content of natural gating: when the error load is at
most~1, leaving $\alpha$ at the nominal level is both valid and the most
powerful choice.

Scenario~C is the regime where adjustment is necessary. Its root, which branches
four ways ($k=4$), opens three null branches at once when tested at high power, so
the error load exceeds~1 and the unadjusted procedure inflates --- its FWER
reaches \StrongFWERCUnadjFWER. The adaptive $\alpha$-adjustment restores control (\StrongFWERCAdaptFWER), and
so does branch pruning (\StrongFWERCPrunedFWER)\corrflag{2}, which also recovers the power
the static $\alpha$-schedule gives up. The pruned variant averages
\StrongFWERCPrunedLeaves\ leaf discoveries against the static adaptive's
\StrongFWERCAdaptLeaves. Dropping the three null branches once they fail to
reject lets the pruned variant test the surviving non-null leaves at a less
stringent level.
Even with the FWER held at \StrongFWERCPrunedFWER, the pruned top-down procedure
finds more true leaves (\StrongFWERCPrunedLeaves) than the FWER-valid bottom-up
Hommel (\StrongFWERCBUHomLeaves). The same power recovery shows in the node-level
distribution: the pruned procedure rejects a majority of the 22 affected internal
nodes in \StrongFWERCPrunedNodePMajority\ of simulations, against
\StrongFWERCAdaptNodePMajority\ for the static $\alpha$-schedule.

The inheritance procedure of \textcite{goeman2012inheritance} also tests
internal nodes and controls the FWER. All three procedures we compare here ---
top-down, inheritance, and bottom-up Hommel --- are scored on the \emph{same}
drawn $p$-values, calibrated (the $\text{Beta}(a,1)$ family above) so that
power decays with depth exactly as splitting the sample among deeper nodes
would cause. The comparison therefore isolates the multiplicity corrections
themselves, holding the evidence fixed. Across all three scenarios the
inheritance procedure controls the FWER but detects fewer affected nodes and
leaves than the top-down $\alpha$-schedule; on the wide-leaved Scenario~B, testing each leaf at $\alpha$ over the leaf
count leaves it below bottom-up Hommel (the Inh
columns of Tables~\ref{tab:strong_fwer_sim} and~\ref{tab:strong_fwer_power}).
Because the inheritance procedure improves uniformly on Meinshausen's
hierarchical test \autocite{meinshausen2008hierarchical}, the comparison here is
against the stronger of the two hierarchical FWER procedures.

Across the three scenarios, bottom-up Benjamini--Hochberg behaves as its
guarantee predicts: it controls the false discovery rate, not the FWER. Where
the signal is dense its FWER climbs far above $\alpha$ --- \StrongFWERBBUBHFWER\
in Scenario~B and \StrongFWERCBUBHFWER\ in Scenario~C --- so its larger leaf
counts are bought at an error rate the FWER-controlling methods never pay.

The practical recommendation is straightforward. Before testing, we compute the
error load $\sum_\ell G_\ell$ from the anticipated effect size and the per-depth
sample sizes, and read off the regime from Table~\ref{tab:four_regimes}. If the
error load is at most~1, no $\alpha$-adjustment is needed: the two conditions of
\S~\ref{sec:topdownbotup}, together with conditional validity and conservative
path-power estimation, give strong control (Regime~1, natural gating). If conditional validity or
conservative path-power estimation may not hold exactly, the sensitivity factor
$\kappa_{\max}$ of \S~\ref{sec:strong} tightens the check to $\sum_\ell G_\ell
\le 1/\kappa_{\max}$, equivalently testing each node at $\alpha/\kappa_{\max}$.
Across the 25 trials the margin to that bound is wide.\corrflag{1} If the error load exceeds~1, we tighten the per-level $\alpha_\ell$ with the $\alpha$-schedule of
Equation~\eqref{eq:optimal_alpha_d}.
Table~\ref{tab:four_regimes} gives the adjusted $\alpha_\ell$ for each tree
shape:
$\alpha/G_\ell$ for regular trees, and $w_\ell\,\alpha/G_\ell$ when budget
weights are needed for irregular trees or pruning. Pruning recovers
$\alpha$ from branches that fail to reject, and once enough have been pruned that
the surviving error load fits within the remaining $\alpha$, the procedure
returns to the full nominal level.\corrflag{2} Applying an unnecessary adjustment in a natural-gating design discards power
and can fall below bottom-up Hommel, as Scenario~B shows. The
regime-appropriate top-down variant --- unadjusted under natural gating,
adjusted only when the error load exceeds~1 --- finds more true effects than a
bottom-up correction at the same FWER, as Scenarios~B and~C make concrete.

\section{Application: The MDRC RCT Data}\label{sec:mdrc}

\begin{correctionbox}
The error loads reported in this section (0.025--0.051) are incorrect. The
code that produced them summed each node's path power times that node's own
rejection probability, starting at depth~1, where the error load sums path
power alone from depth~2. It also used weight-scale node sizes, one-tailed
power, and the reached subtree, in place of headcounts, two-tailed power, and
the full prespecified design tree. Corrected loads exceed 1 in all 25 studies
at the planning effect size $d = 0.20$, so no study is in the natural-gating
regime and the no-adjustment conclusion below does not hold (Correction
notice, item~1). The detection counts are under recomputation: at most 33 of
the 42 reported block detections survive a corrected $\alpha$-schedule, below
the bottom-up Hommel count of 34 (item~3). The section is otherwise retained
unchanged pending the full revision.
\end{correctionbox}

From 2003 to 2019 the MDRC organization worked with community colleges to field
randomized field trials assessing the effects of interventions like financial
support or improved advising meant to improve the educational outcomes of
community college students. MDRC deposited data from 31 of these studies in the
ICPSR data repository, where they are available under restricted data use
agreements \autocite{diamond2021mdrc}. We applied the method to the 25 studies
that randomly assigned an intervention to individual students within strata,
following each study's pre-specified series of splits. The outcome in each study
is the number of community college credits taken in the first main session after
the intervention --- for example, credits taken during the next full academic
year \autocite{weiss2022works, scrivener2022findings}.

Table~\ref{tab:mdrc_application} reports the 12 studies where at least one
overall test (Wilcoxon rank or t-test) rejected at $\alpha = 0.05$; the
remaining 13 studies appear in Supplement
Table~\suppref{tab:mdrc_application_full}{S4}. We computed each study's error
load $\sum_\ell G_\ell$ before running the top-down tests. The 25 error loads differ
but stay small, ranging from 0.025 (PBS~+~Advising) to 0.051 (CUNY Start) ---
far under the natural-gating threshold of one --- so every study falls in
Regime~1: the stopping rule and valid tests of \S~\ref{sec:topdownbotup}
control the FWER in the strong sense with no per-level $\alpha$-adjustment. The
single top-down column reports this unadjusted, natural-gating procedure, which
tests each reached block at the full $\alpha$.

\begin{table}[tb]
\centering
\caption{Structured (top-down) vs.\ bottom-up testing in the 12 MDRC studies where at least one overall test rejected at $\alpha = 0.05$. The top-down column (`Unadj') reports the unadjusted procedure, which tests each reached block at nominal $\alpha$ under the stopping rule and valid tests. Each study has its own error load $\sum_\ell G_\ell$; all 25 are below 0.051, far under the natural-gating threshold of one, so this unadjusted procedure controls the FWER in the strong sense with no per-level $\alpha$-adjustment.\corrflag[items]{1 and 3} `Ratio' divides top-down node detections by bottom-up Hommel detections ($\infty$ when Hommel finds nothing but the tree does; `---' when neither finds anything). The remaining 13 studies (overall $p > 0.05$) are in Supplement Table~\suppref{tab:mdrc_application_full}{S4}.} 
\label{tab:mdrc_application}
\begingroup\footnotesize\setlength{\tabcolsep}{3pt}
\resizebox{\linewidth}{!}{\begin{tabular}{ccccc|cc|cc|cc}
  \toprule
  &&& \multicolumn{2}{c}{Overall Tests} &
    \multicolumn{2}{|c}{Nodes detected} &
    \multicolumn{2}{|c}{Bottom-up} &
    \multicolumn{2}{|c}{Single blocks}
    \\
    &&&&& \multicolumn{2}{|c}{Top-down} & \multicolumn{2}{|c}{blocks} & \multicolumn{2}{|c}{Top-down}\\ Study &
  Blocks &
  $\widehat{ITT}$ &
  wilcox &
  t-test &
  Unadj &
  Ratio &
  Hommel &
  BH &
  Unadj &
  Ratio
  \\ \midrule
CUNY Start &  21 & -6.05 & 0.00 & 0.00 &  35 & 1.7$\times$ &  21 &  21 &  21 & 1.0$\times$ \\ 
  ASAP Ohio &   9 & 2.23 & 0.00 & 0.00 &  11 & 2.8$\times$ &   4 &   5 &   5 & 1.2$\times$ \\ 
  OD PBS + Advising &  11 & 1.75 & 0.00 & 0.00 &   7 & 3.5$\times$ &   2 &   3 &   4 & 2.0$\times$ \\ 
  ASAP CUNY &   5 & 2.08 & 0.00 & 0.00 &   7 & 1.8$\times$ &   4 &   4 &   4 & 1.0$\times$ \\ 
  EASE &  26 & 0.21 & 0.00 & 0.00 &   6 & 6.0$\times$ &   1 &   1 &   2 & 2.0$\times$ \\ 
  LC Career &  28 & 0.89 & 0.03 & 0.04 &   4 & $\infty$ &   0 &   0 &   2 & $\infty$ \\ 
  OD LC &   4 & 1.28 & 0.00 & 0.00 &   3 & 3.0$\times$ &   1 &   2 &   2 & 2.0$\times$ \\ 
  PBS OH &  11 & 0.71 & 0.00 & 0.00 &   3 & 3.0$\times$ &   1 &   1 &   1 & 1.0$\times$ \\ 
  OD Success &   2 & -0.67 & 0.02 & 0.02 &   2 & $\infty$ &   0 &   0 &   1 & $\infty$ \\ 
  ModMath &   4 & 0.61 & 0.00 & 0.02 &   1 & $\infty$ &   0 &   0 &   0 & --- \\ 
  DPP &  44 & 0.57 & 0.04 & 0.06 &   1 & $\infty$ &   0 &   0 &   0 & --- \\ 
  LC English &   4 & 0.63 & 0.06 & 0.04 &   0 & --- &   0 &   0 &   0 & --- \\ 
   \bottomrule
\end{tabular}}
\endgroup
\end{table}

We used the Wilcoxon rank test for all tests and show the overall t-test only
as a comparison with published results. These data are very skewed and the rank
test generally has higher power in this setting. The distribution of the
outcome variable (credits earned) is reported in Supplement
Table~\suppref{tab:mdrc_credits}{S3}: the outcome is skewed toward 0, with a
handful taking many credits.

The two ``Ratio'' columns divide a top-down detection count by the bottom-up
Hommel count ($\infty$ when Hommel finds nothing but the tree does; ``---''
when neither finds anything). The right-hand ratio, the singles ratio, divides
the top-down procedure's individual single-block discoveries by the bottom-up
Hommel block discoveries; this comparison most directly shows which
individual blocks carry effects. Among the seven studies where both methods
detect at least one block, this ratio ranges from \MDRCSinglesRatioMin$\times$
(CUNY Start, ASAP CUNY, PBS OH) to \MDRCSinglesRatioMax$\times$ (OD
PBS~+~Advising, EASE, OD LC). Two additional studies (LC Career, OD Success)
have $\infty$ ratios: the top-down approach identifies individual blocks that
the bottom-up Hommel misses entirely. The advantage is largest in studies with
many blocks (EASE has 26, LC~Career has 28): there bottom-up Hommel must divide
$\alpha$ among all blocks, while the top-down procedure, under natural gating,
tests each block it reaches at the full $\alpha$.

The left-hand ratio is larger --- ranging from \MDRCNodesRatioMin$\times$
to \MDRCNodesRatioMax$\times$ --- because its numerator also counts internal-node
rejections, which the bottom-up denominator cannot produce. The ratio therefore measures
an additional capability rather than a like-for-like power comparison at the
block level. Rejecting a non-null parent node is informative even when the procedure
cannot reject every leaf beneath it: it locates a region of the tree where effects
are present, directing further investigation. Bottom-up Hommel cannot provide this
kind of structural information.

The top-down approach does not improve greatly when the number of blocks is
small (for example, ASAP CUNY has only 5 blocks and a singles ratio of 1.0$\times$)
or when the effect is mechanistic and strong (CUNY Start, where all methods
reject in all 21 blocks).%
\footnote{CUNY Start replaced credit-bearing coursework with intensive
  developmental education: nearly all treated students were prevented from enrolling in
  credit-bearing courses during the program. The effect on first-session credits
  is mechanistic rather than a subtle treatment response --- 89\% of
  treated students earned zero credits, compared to 34\% of controls. The
  large blocks (median $N = 163$, range 46--321) give every method power.} In the actual Detroit Promise Program data, the top-down
procedure rejected the overall null but could not descend further: block sizes
ranged from 2 to 167 (median 20), and the smaller blocks lacked the power
needed to reject at lower levels of the tree.

Across these 25 studies the top-down approach detects at least as many blocks
as bottom-up Hommel and usually more. Because every study falls in Regime~1, as
established above, the unadjusted procedure controls the FWER in the strong
sense at nominal $\alpha$.\corrflag{1}

How far could the strong-control assumptions fail before that conclusion breaks?
Remark~\suppref{rem:sensitivity}{8} in the Supplement bundles the two ways they can ---
a true-null test that rejects more often than nominal once its ancestors have
rejected (because parent and child tests share experimental units), and power
estimates that understate the truth and so expose more null branches than
anticipated --- into a single sensitivity factor $\kappa_{\max}$, equal to one when
both assumptions hold exactly. Either failure inflates the natural-gating bound
from $\alpha\sum_\ell G_\ell$ to at most $\kappa_{\max}\,\alpha\sum_\ell G_\ell$.
So to keep the FWER at $\alpha$ we need $\sum_\ell G_\ell \le 1/\kappa_{\max}$ ---
the no-adjustment threshold, shrunk from 1 to $1/\kappa_{\max}$. Because the largest of the 25 error loads is below 0.051, every study stays
in Regime~1 for any $\kappa_{\max}$ up to $1/0.051 \approx 19.6$:
the combined distortion would have to be nearly twenty-fold before the
unadjusted procedure could exceed $\alpha$. Since these trials randomize
treatment within strata --- where conditional validity is plausible --- even a
substantial failure of either assumption would leave the conclusion intact.\corrflag{1}

\section{Limitations}\label{sec:limitations}

We title this section \emph{Limitations} to be plain about where the current
procedure stops short. Each limitation is also a direction for future work ---
a question the method raises rather than settles. Three stand out.
\paragraph{Test statistic} The top-down testing approach requires a test at
each node that can detect \emph{any} departure from the null, not just a mean
shift. If half of the blocks have a large negative effect and half a large
positive effect, a difference-in-means test at the root could fail to reject.
We address this by creating a test statistic based on energy statistics
\autocite{rizzo2010, szekely2017energy} combined with the multivariate
permutation framework of \textcite{hothornetal:coin:2006}. This statistic is
sensitive to distributional differences of any kind and is described in
Section~\suppref{sec:test_statistic}{C} of the Supplement.

\paragraph{Data-dependent splitting} The examples in this paper use tree
structures pre-specified by the experimental design. Many block-randomized
experiments lack such administrative hierarchies. We have implemented several
data-based splitting algorithms, including for
example, k-means clustering of blocks by covariate distributions or splits that
equalize the number of units per group. Preliminary simulations show that these
algorithms control the FWER and provide more power than bottom-up approaches,
but their systematic evaluation is future work. Splitting blocks by covariate
distributions could also connect this testing-based approach to methods that
estimate how treatment effects vary with covariates.

\paragraph{FDR control} This paper focuses on FWER control. An alternative
is to allocate the error budget dynamically as tests descend the tree, targeting
control of the false discovery rate (FDR) rather than the FWER. The
alpha-investing procedure of \textcite{foster2008alpha} and its refinements
\autocite{ramdas2018saffron, tian2019addis} provide a natural framework for
this extension. Tree-structured FDR control already exists in the genomics
tradition: \textcite{yekutieli2008hierarchical} controls the FDR top-down on a
tree of hypotheses tested by Benjamini--Hochberg, though without the
randomization-based validity or the error-load diagnostic developed here.
FDR-oriented tree testing would offer different guarantees and different power
profiles and deserves its own treatment.

\section{Discussion}\label{sec:conclusion}

An analyst with a block-randomized experiment and a pre-specified hierarchy can
now do the following: test the overall null at the root, divide the blocks into
groups, and continue testing down the tree, stopping in any branch where the
null is not rejected. If only weak FWER control is needed ---for  
exploratory research --- the stopping rule and valid-test conditions suffice,
and no further adjustment is required. If strong FWER control is needed, the
analyst checks the error load: when power decays fast enough relative to the
branching factor, the unadjusted procedure already controls the FWER in the
strong sense (Supplement Proposition~\suppref{prop:strong_fwer_adj}{2}). When the error load is
larger, the analyst runs the same top-down procedure with the per-depth $\alpha_\ell$ tightened by the adaptive $\alpha$-schedule, by a rule that depends on
the tree's shape and on whether branches are pruned.
Table~\ref{tab:four_regimes} maps each case to the theorem that proves it\corrflag{2};
the proofs are in the Supplement.

The method works when the experimental design supplies an administrative
hierarchy of blocks --- sites within states, cohorts within sites --- and works
best when blocks are numerous. With few blocks, the most powerful bottom-up
adjustments like the Hommel procedure are nearly as powerful, because the tree
is shallow and the top-down procedure has little room to exploit the stopping
rule. The gains are largest in experiments with dozens to hundreds of blocks
organized in several levels --- the setting of multi-site education trials,
multi-center clinical trials, and large-scale field experiments.

The adaptive $\alpha$-adjustment requires estimates of power at each level,
which depend on assumptions about effect sizes and the test statistic's
distribution. In many practical settings this requirement is moot: the error
load is small enough that no adjustment is needed at all, as in all 25 MDRC
studies.\corrflag{1} When adjustment is needed, the estimates need only be roughly right, since
the sensitivity factor $\kappa_{\max}$ of Supplement Remark~\suppref{rem:sensitivity}{8} bounds how
far they, and conditional validity, may drift before control is lost. 
Branch pruning provides a meaningful power gain, nearly doubling the number of
true discoveries in the DPP simulation, at no cost to FWER control.\corrflag{2}

The policy-maker who began by asking \emph{where} effects occurred now has an
answer with an FWER guarantee. That answer is a location, not a verdict: a
detection marks a block where the sharp null of no effect is rejected --- where at
least one unit responded to treatment --- not the sign or magnitude of that
response. Reading each detection alongside an estimate of that block's effect
--- in the randomization-inference tradition, a Hodges--Lehmann point estimate
and a confidence interval formed by inverting the same randomization tests
\autocite[Chapter~2]{rosenbaum2002book} --- tells the policy-maker whether the
program helped or hurt there, and by how much. The
experiment's own structure --- its sites, cohorts, and blocks --- provides the
hierarchy for testing, and the stopping rule provides the error control. What a
bottom-up correction discards as noise, the top-down approach recovers as signal.

\section*{Acknowledgments}
\ifblind\else
We thank Kevin Quinn for prompting us to think carefully about the choice of
test statistic, and Ben Hansen for pointing out errors in an earlier version of
the theory. For questions and suggestions on earlier versions of this work, we
are grateful to audiences at Emory University, the University of Chicago, EGAP,
the Latin American Political Methodology Meetings, the Universidad Cat\'olica del
Uruguay, and the Universidad de la Rep\'ublica (Uruguay).

The research reported here was supported by the Institute of Education Sciences,
U.S.\ Department of Education, through Grant R305D210029. The opinions expressed
are those of the authors and do not represent views of the Institute or the
U.S.\ Department of Education.
\fi

The algebraic cores of the FWER-control theorems and the boundary-node lemma in
this paper have been machine-checked using the Lean~4 proof assistant. The proofs
take the method's two conditions --- the stopping rule and test validity
(Conditions~\ref{cond:stopping} and~\ref{cond:valid}) --- as hypotheses, exactly
as the paper's theorems do, and verify that the FWER bounds follow from them by
the stated union-bound and telescoping arguments. Lean checks this algebra, not
the probability theory behind the conditions themselves, which the simulations of
\S\S~\ref{sec:weak_fwer_sim} and~\ref{sec:strong_fwer_sim} examine instead. The
Lean sources are distributed with the paper's source files under
\texttt{Theory/lean/}. Drafting and revision were assisted by AI coding tools
(Anthropic's Claude and OpenAI's Codex); all mathematical claims, simulation
designs, and prose decisions remain the responsibility of the authors.

\section*{Data and code availability}
All code, simulations, and machine-checked proofs in this paper are publicly
available and reproducible. \ifblind The accompanying R packages, the simulation scripts (weak and strong FWER control, the Detroit Promise and National Job Corps Study calibrations, and the regime demonstrations), the figure and table generators, and the Lean~4 proof sources are provided in the review materials and will be publicly released on publication.\else The \texttt{manytestsr} and \texttt{TreeTestSim} R
packages are at \url{https://github.com/bowers-illinois-edu/manytestsr} and
\url{https://github.com/bowers-illinois-edu/TreeTestSim}; the simulation scripts
--- weak and strong FWER control, the Detroit Promise and National Job Corps
Study calibrations, and the regime demonstrations --- together with the figure
and table generators and the Lean~4 proof sources are distributed with the
paper's source files.\fi The software appendix (Section~\suppref{sec:software}{F} of the
Supplement) runs on the example data in \ifblind the accompanying package\else \texttt{manytestsr}\fi.

The MDRC application uses individual-level data from 25 higher-education trials in
MDRC's The Higher Education Randomized Controlled Trials Restricted Access File
(THE-RCT RAF), ICPSR study~37932 \autocite{diamond2021mdrc}, where the trials are
identified by a study indicator. These data are restricted: we cannot
redistribute them, and access requires application to ICPSR, a fee, and a wait.
Only the 25-study results (Table~\ref{tab:mdrc_application} and the full MDRC
tables in the Supplement) depend on this file; every other result --- all
simulations, the method, and the proofs --- is reproducible without it. The
National Job Corps Study application (Section~\suppref{sec:njcs}{E} of the Supplement)
likewise depends on data whose public release suppresses center identity, so it is
a simulation calibrated to the study's published design moments rather than a
re-analysis of the microdata.

\clearpage

\appendix

\setcounter{table}{0}
\setcounter{figure}{0}
\setcounter{equation}{1}
\renewcommand{\thefigure}{S\arabic{figure}}
\renewcommand{\thetable}{S\arabic{table}}
\renewcommand{\theequation}{\thesection.\arabic{equation}}

\section{Weak FWER Control}\label{ap:proof_weakctrl}

\paragraph*{Intuition for the proof of Theorem~\ref{thm:weakctrl_supp} .} If
all null hypotheses on the tree were true, such that no effects existed for any
unit in any block, any rejection anywhere in the tree would be a false
positive. The stopping rule states that we must first reject the overall null
hypothesis of no effects at the root before we are allowed to test any other
null hypothesis on the tree. So the event ``at least one false rejection
occurs'' is contained in the event ``the root is rejected.'' If the root test
at level $\alpha$ has size $\le \alpha$, then the probability of doing any
other test on the tree is at most $\alpha$. Hence the weak FWER is $\le
\alpha$.

\subsection{Conditions on the Procedure} \label{sec:rule_weak_fwer}

\begin{condition}[\bf Stopping rule]\label{cond:ap-stopping}
  Test hypothesis $H_i$ only if every ancestor of node $i$ has been
  rejected. If the root hypothesis is not rejected, no further tests are
  performed. If any hypothesis on a path from root to leaf is not rejected, no
  descendant on that branch is tested at all. In
  Figure~\ref{fig:k3L3_all_null_tree}, we do not test $H_5$ unless we have
  rejected both $H_2$ and $H_1$.

  This condition distinguishes our procedure from tree-structured testing
  frameworks that test every node and then determine which rejections to keep
  \autocite{goeman2010sequential, goeman2012inheritance}. Our procedure may
  test only a small fraction of the tree's nodes --- and each untested node is
  a test that cannot produce a false positive.
\end{condition}

\begin{condition}[\bf Valid tests]\label{cond:ap-valid} Each test at a node, if
    executed on its own, has a false positive rate no greater than $\alpha$.
    For proof of weak control of the FWER this condition implies that the test
    at the root is level $\alpha$, i.e., its size is $\le \alpha$ when the root
    null is true.

\end{condition}

\begin{remark}

In block-randomized experiments, randomization-based tests satisfy the Valid
  Tests Rule by design: permuting the treatment assignment within blocks
  produces $p$-values with guaranteed size control
  \autocite[\S~2.4]{rosenbaum2002book}. All tests we consider in this paper are
  randomization-based and inherit their size control from the random assignment
  itself.

\end{remark}

\begin{theorem}[Conditions 1 and 2 suffice for weak FWER control; restated from main text]\label{thm:weakctrl_supp}

  A family of true null hypotheses organized on a regular or irregular $k$-ary tree and
  tested following the stopping rule (Condition~1) with
  valid tests at each node (Condition~2) will produce a
  family-wise error rate no greater than $\alpha$. We call a $k$-ary tree
  ``regular'' if the number of child nodes of a parent is the same for all
  nodes in the tree. A $k$-ary tree is ``irregular'' if the number of child nodes
  of a parent node may differ within and across levels of the tree.

\end{theorem}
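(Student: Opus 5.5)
The plan is an event-containment argument followed by a single application of Condition~\ref{cond:ap-valid} at the root. No union bound over the tree is needed. Fix a tree, regular or irregular, with root node~$1$ and node set $\mathcal{N}$. Under the global null every $H_i$ is true, so $\mathcal{H}_0=\mathcal{N}$, and the FWER is $\Pr\bigl(\bigcup_{i\in\mathcal{N}}\{R_i=1\}\bigr)$. The aim is to show this union is contained in $\{R_1=1\}$, after which validity of the root test finishes the proof.

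\emph{Step 1 (containment).} For any node $i\neq 1$, the root lies in $\anc(i)$, because every node of a rooted tree descends from the root. A test that is never performed cannot reject, so $R_i=1$ requires that $H_i$ be tested. By Condition~\ref{cond:ap-stopping}, $H_i$ is tested only if every ancestor has been rejected, which in particular requires $R_1=1$. So $\{R_i=1\}\subseteq\{R_1=1\}$ for every $i$, including $i=1$ trivially, and hence $\bigcup_i\{R_i=1\}=\{R_1=1\}$. This step uses only the rooted-tree structure and not the branching pattern, so the regular and irregular cases are handled together. It also uses nothing about the joint distribution of the $p$-values, so dependence between parent and child tests through shared units is irrelevant here.

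\emph{Step 2 (validity at the root).} Under the global null, $H_1$ is true. The root test is always executed and is run exactly as it would be on its own, because nothing precedes it in the procedure. Condition~\ref{cond:ap-valid} therefore applies directly, with no conditioning, and gives $\Pr(R_1=1)=\Pr(p_1\le\alpha)\le\alpha$. Combining the two steps gives $\mathrm{FWER}=\Pr(R_1=1)\le\alpha$. Equivalently, in the bound~\eqref{eq:fwer-bound1}, one keeps only the root term, because every other term's event lies inside the root's.

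\emph{Main obstacle.} The delicate point is to apply validity only where it is legitimate. It is tempting to sum the per-node contributions as in~\eqref{eq:fwer-bound1}. That route would require the conditional statement $\Pr(\text{reject } i\mid\text{reached } i)\le\alpha$ at non-root nodes, and marginal validity does not supply it. The containment argument avoids the problem entirely, since the only test whose level is invoked is the root's, and that test is unconditional. I would state explicitly that the root test's $p$-value is computed from the data as a stand-alone test, so that Condition~\ref{cond:ap-valid} applies without qualification. I would also note that the tree is finite and fixed before the data are seen, so the events $R_i$ are well defined.
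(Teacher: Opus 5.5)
Your proposal is correct and is essentially the paper's argument: the paper writes $\Pr(V\ge 1)=\Pr(V\ge 1\mid E)\Pr(E)+\Pr(V\ge 1\mid E^c)\Pr(E^c)$, where $E$ is the event that the root is rejected, kills the second term by the stopping rule, and bounds the first by $\Pr(E)\le\alpha$ using validity of the root test. Your direct containment $\bigcup_i\{R_i=1\}=\{R_1=1\}$ is the same step without the conditioning (which also avoids conditioning on $E$ when $\Pr(E)=0$), and you rightly invoke validity only at the root test, which is never conditioned on anything.
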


\subsection{Proof: Weak Control of the FWER Using Two Conditions}

\begin{proof}[\bf Proof of Theorem~\ref{thm:weakctrl}]

  We assume that all hypotheses on the tree are true. We write $V$ for the
  number of false positive rejections arising from the testing procedure on the
  whole tree, $E$ as the event that the hypothesis on the root node of the
  whole tree is rejected and $E^c$ as the event that the hypothesis on the root
  node is not rejected (the complement of $E$).

We need to check
\begin{equation}\label{eq:thm1_obj}
    \Pr(V \geq 1) \leq \alpha,
\end{equation}
under Conditions 1 and 2.

  Note that we can decompose the probability of at least one false positive rejection into two parts depending on whether the root-node test was rejected, $E$, or not, $E^c$:

\begin{align*}
    \Pr(V \geq 1) &= \Pr(V \geq 1 | E) \Pr(E) + \Pr(V \geq 1 | E^c) \Pr(E^c), \\
    \intertext{and since we cannot have a false rejection without a rejection, the second term is 0, so }
                  &= \Pr(V \geq 1 | E) \Pr(E).
\end{align*}

Note that $\Pr(V \geq 1 | E^c) = 0$ by our Stopping Rule: a failure to reject
  at the root node means (1) no false positive error at the root node (so
  $V=0$) and (2) that no other hypotheses are tested and thus no further
  rejections are possible and thus no false rejections are possible in this
  case.

  It remains to bound the right-hand side. Since all null hypotheses are
  true, $\Pr(E) \leq \alpha$ by Valid Tests (the probability of rejecting a
  true null is at most the level of the test). And $\Pr(V \geq 1 \mid E)
  \leq 1$ because it is a probability. Therefore,

  \begin{equation*}
    \Pr(V \geq 1)  = \Pr(V \geq 1 | E) \Pr(E) \leq \Pr(E) \leq \alpha,
\end{equation*}

  which completes the proof of  \eqref{eq:thm1_obj}.
\end{proof}

\begin{remark}

  Note that this proof does not require the sample splitting step used in the
    algorithm of the paper. Weak control can be shown to hold for hypotheses
    organized into a tree-like structure even without the sample reduction at
    each level of the tree.

\end{remark}

\begin{remark}[Relationship to prior work]\label{rem:weak_ctrl_prior}

  The conclusion that a stopping rule combined with valid individual tests
  suffices for weak FWER control is not surprising given the existing
  literature, but the specific proof above---a direct application of the law
  of total probability, conditioning on the root---does not appear in prior
  work in this form.

  \textcite{rosenbaum2008tho} proves FWER control for an ordered family of
  hypotheses: under his structure assumption, there is a first true
  hypothesis in the order, and any false rejection must first reject that
  hypothesis. His proof uses only \emph{marginal}
  validity---$\Pr(p_v \le \alpha) \le \alpha$ for the first true
  hypothesis~$H_v$---and does not condition on earlier test outcomes, even
  though his tests reuse the same experimental units across hypotheses
  (e.g., Section~3, where all 36 subjects appear in every test). The same
  logic applies along any single root-to-leaf branch of our tree: hypotheses
  are nested (parent null true implies child null true), so the Structure
  Assumption holds automatically, and unit reuse along the branch does not
  compromise FWER control. Rosenbaum's grouped extension (Propositions~2--3)
  does not extend this result to our tree, because it relies on ordered
  intervals being exclusive or sequentially exclusive; sibling sets in our
  tree need not satisfy that property, since a false parent may have several
  true children. The difficulty in our setting is therefore not unit reuse
  along a branch, but simultaneous exposure of multiple boundary nulls across
  branches. Section~\ref{sec:strong_local_adj} addresses that multi-branch
  problem via adaptive~$\alpha$.

  \textcite{meinshausen2008hierarchical} proves FWER control for
  tree-structured hypotheses via a different route entirely: a Bonferroni
  bound over disjoint maximal true-null clusters. That proof exploits the
  hierarchy to partition the null hypotheses, rather than conditioning on
  the root.

  \textcite{goeman2010sequential} prove a general sequential rejection
  principle: any procedure satisfying monotonicity and a single-step
  error-control condition --- a bound on the per-step error of which Bonferroni
  is one instance --- controls FWER. Our stopping-rule procedure is a
  special case of their framework, but their proof proceeds by induction
  over rejection sets and does not isolate the direct gating argument used
  here. Similarly, the inheritance procedure of
  \textcite{goeman2012inheritance} verifies the Goeman--Solari conditions
  for a specific tree algorithm, but again uses the general-purpose
  framework rather than the one-step decomposition.

  In short, the \emph{result} is a known consequence of principles in the
  literature. The \emph{proof} is our own formulation, chosen for
  transparency: it makes visible exactly where the two
  conditions---stopping rule and valid tests---do their work, requiring no
  sample splitting, no monotonicity, and no abstract critical-value
  conditions.

\end{remark}

\section{Strong FWER Control}\label{sec:strong_local_adj}

\paragraph*{Intuition} Weak FWER control (Theorem~\ref{thm:weakctrl} in the main
text) guarantees that the procedure's false-rejection rate is at most $\alpha$
when \emph{every} null hypothesis is true. Strong FWER control extends this
guarantee to \emph{any} configuration of true and false nulls --- including
settings where some nodes have genuine effects and others do not. Whether
strong control holds, and how conservative a procedure must be to achieve it,
depends on how much of the tree the procedure actually explores. And
exploration of the tree depends on the power of the tests as it changes
throughout the tree as the overall sample is reduced via splitting. 

For example, if one experimental block out of 100 has a true non-null causal
effect, then we would hope to reject the test of the null of no effects at the
root node --- only a rejection at the root would allow us any hope of
descending to test and reject the null hypothesis of no effects within that one
block. But, if the test has low power at the root (say, the sample size is
small or the effect is weak and power $\leq 0.05$), then the procedure will not
reject the null, no further tests will occur, and the FWER will be $\leq 0.05$.
That is, the probability of making a false positive error at any node depends
both on \emph{reaching that node} and on \emph{falsely rejecting that node}.
And the FWER thus depends on the number of falsely rejected nodes. We here
develop a procedure that adapts the rejection level $\alpha$ to the contours of
the tree.

The argument proceeds in five steps:
\begin{enumerate}
  \item We derive a general expression for the FWER as a sum of per-node
    contributions, each discounted by the probability that the procedure
    reaches that node (Proposition~\ref{prop:fwer}).
  \item We prove a \emph{boundary-node lemma} (Lemma~\ref{lem:boundary})
    showing that the FWER event is determined entirely by ``exposed''
    true nulls sitting at the boundary of the false subtree. Non-exposed
    nulls contribute nothing, not because their contributions are small
    but because they are logically redundant.
  \item We group boundary-null contributions by depth, using
    node-indexed path products that handle both regular and irregular trees
    (Theorem~\ref{thm:fwer_decomp}), and identify the \emph{error load}
    as the structural quantity governing whether FWER stays below~$\alpha$
    (Proposition~\ref{prop:strong_fwer_adj}).
  \item We establish conditions under which power decay alone keeps the error
    load small enough that no adjustment is needed --- a phenomenon we call
    \emph{natural gating} (Section~\ref{sec:natural_gating}).
  \item For trees where natural gating fails, we develop the adaptive
    $\alpha$-adjustment. For general (possibly irregular) trees, we prove
    that a budget-weighted $\alpha$-schedule controls the FWER
    (Theorem~\ref{thm:fwer_budget}). For regular $k$-ary trees, a
    telescoping argument yields a strictly more powerful $\alpha$-schedule
    (Theorem~\ref{thm:fwer_control_adaptive}). Both results appear in
    Section~\ref{sec:adaptive_alpha_adj}.
\end{enumerate}
The reader primarily interested in the algorithm can skip to
Section~\ref{sec:adaptive_alpha_adj}; the intervening results build the
intuition for \emph{why} the adjustment takes the form it does.

\subsection{Setup and Notation}

The analysis in this section starts from the two conditions on the
testing procedure---the Stopping Rule (Condition~\ref{cond:ap-stopping}) and
Valid Tests (Condition~\ref{cond:ap-valid})---stated in
Section~\ref{ap:proof_weakctrl}, together with the intersection structure of
the tree (defined below). For strong FWER control, the decomposition results
also use the conditional-validity property~\eqref{eq:cond-valid}; the
natural-gating and adaptive results add the conservative-power assumption
$\hat{\theta}_j \geq \theta_j$ for non-null ancestors. We state these extra
assumptions explicitly where they enter.

Consider a rooted tree of depth $L$ (root at depth $1$, leaves at depth
$L$). The tree may be \emph{regular} (every internal node has $k$ children)
or \emph{irregular} (the number of children may vary across nodes). All
results in this section hold for both regular and irregular trees unless
explicitly stated otherwise. We use the following notation.

\begin{itemize}

    \item Each leaf of the tree corresponds to an independent experimental block (or finest
      unit of randomization). Each non-leaf node represents a group of blocks.

    \item Each node $i$ has an associated null hypothesis $H_i$ about the units
      it contains. In this paper we consider only the null hypothesis of
      no effects: $H_i$ asserts that the treatment has no causal effect on any
      unit in the blocks represented by node $i$.

    \item The hypotheses obey an \emph{intersection structure}: each non-leaf hypothesis
      is the conjunction of its children. A parent hypothesis is true if and only
      if every one of its child hypotheses is true. This ensures that whenever a
      parent is false, at least one child must also be false. Concretely: if a
      single experimental block contains a genuine treatment effect, then every
      ancestor of that block also contains that effect, and the null hypothesis
      of no effects is false for them all. The false hypotheses therefore form a
      connected subtree rooted at the root.

    \item $N_{\text{nodes}} = |V(T)|$: the total number of nodes in the
      tree, including the root, all internal nodes, and all leaves.

    \item $\calH_0 \subseteq \{1,\dots,N_{\text{nodes}}\}$: the index set of nodes whose null hypothesis $H_i$ is \emph{true} (there are no effects in any block represented by node $i$).

    \item $\anc(i)$: the set of proper ancestors of node $i$ (excluding $i$ itself).

    \item $\text{parent}(i)$: the parent of node $i$ (undefined for the root).

    \item $\text{ch}(i)$: the set of children of node $i$, with
      $k_i = |\text{ch}(i)|$ denoting the local branching factor. For a
      regular $k$-ary tree, $k_i = k$ for all internal nodes.

    \item $d(i)$: the depth of node $i$. The root has $d(\text{root}) = 1$;
      children of the root have $d = 2$; leaves have $d = L$.

    \item The \emph{path power} of node $i$ is
      $\mathrm{PP}(i) = \prod_{j \in \anc(i)} \theta_j$,
      the probability that the procedure reaches node $i$ by rejecting
      every ancestor. For the root, $\mathrm{PP}(\text{root}) = 1$.

    \item Testing proceeds top-down by the Stopping Rule
      (Condition~\ref{cond:ap-stopping}): node $i$ is tested only if every
      ancestor of $i$ has already been rejected.

    \item $R_i = \mathbb{I}\{\text{reject } H_i\}$: the rejection
      indicator at node $i$. Under the Stopping Rule, $R_i = 1$ requires that
      $R_j = 1$ for every $j \in \anc(i)$.
      The indicator $R_i$ encodes the outcome of the full testing
      procedure at node~$i$, including the stopping rule: $R_i = 1$
      means that node~$i$ was both reached (all ancestors rejected)
      and its own test rejected~$H_i$.

    \item $\alpha$: the nominal significance level (e.g.\ $0.05$). We require
      $\alpha_i \leq \alpha$ for every node $i$.

    \item By Valid Tests (Condition~\ref{cond:ap-valid}), each node $i$ is tested
      by a randomization-based test with \emph{marginal} size $\alpha_i \leq
      \alpha$: when $H_i$ is true, $\Pr(p_i \leq \alpha_i) \leq \alpha_i$.
      For the strong FWER bound below, the proof requires the
      \emph{conditional} version of this statement:
    \begin{equation}\label{eq:cond-valid}
      \Pr(R_i = 1 \mid H_i \text{ true and all ancestors rejected}) \leq \alpha_i.
    \end{equation}
    We call~\eqref{eq:cond-valid} the \emph{conditional validity} property.
    Marginal validity alone does not imply conditional validity when
    parent and child tests share experimental units (the child's blocks
    contribute to both the child's and the parent's test statistics).
    In the paper's block-randomized setting, each test's permutation
    distribution is determined by the experimental design, so
    marginal validity is guaranteed. Conditional validity is
    approximately satisfied when the child's blocks constitute a small
    fraction of the parent's blocks; see Remark~\ref{rem:weak_ctrl_prior}
    and the discussion in the main text.
    For natural gating, $\alpha_i = \alpha$; for adaptive rules, $\alpha_i$
    is the node's actual rejection cutoff.
    Property~\eqref{eq:cond-valid} is a conditional form of
    \emph{super-uniformity}: the marginal version
    ($\Pr(p_i \leq t) \leq t$) suffices for weak FWER, while the
    conditional version (holding after conditioning on ancestor
    rejections) is needed for strong FWER because the proof factors
    each node's false-rejection probability into a conditional term
    bounded by~$\alpha_i$ and a path-power term.

    \item $\pi_i$: the conditional power at node $i$ when $H_i$ is false:
      $\pi_i = \Pr(R_i = 1 \mid H_i \text{ false},\ \text{all ancestors rejected})$.
      Power depends on the sample size available at node $i$ (which shrinks as we descend the tree) and on the
      magnitude of the treatment effect.

    \item The \emph{conditional rejection probability} $\theta_j$ combines the two
      cases above into a single quantity:
    \begin{equation}
      \theta_j
      = \Pr(R_j = 1 \mid \text{all ancestors of $j$ are rejected})
      =
      \begin{cases}
        \pi_j & \text{if } H_j \text{ is false (i.e.\ } j \notin \calH_0\text{)},\\[4pt]
        \leq \alpha_j & \text{if } H_j \text{ is true (i.e.\ } j \in \calH_0\text{)},
        \text{ by~\eqref{eq:cond-valid}}.
      \end{cases}
      \label{eq:theta-def}
    \end{equation}

\noindent
In words: $\theta_j$ is the probability of rejecting $H_j$ given that the procedure has reached node $j$ (all ancestors rejected). When $H_j$ is false this probability is the power $\pi_j$; when
$H_j$ is true it is bounded above by the test size $\alpha_j$.
Note that $\theta_j$ does not carry a subscript for the
descendant~$i$: the conditioning event ``all ancestors of $j$
rejected'' is the same regardless of which descendant of~$j$ we
are interested in, so $\theta_j$ depends only on~$j$'s position
in the tree and the truth of~$H_j$.

\item We call a true-null node \emph{exposed} if its parent is non-null (the parent's null hypothesis of no effects is false). Exposed nulls sit at the boundary between the false subtree and the true portion of the tree. They are the only true nulls that the top-down procedure can reach directly, because a true-null parent will almost certainly not be rejected, shielding all of its descendants from testing.

We write $e_\ell$ for the number of exposed nulls at level~$\ell$, and $m_\ell$ for the number of non-null nodes at level~$\ell$.

\end{itemize}

\subsection{Boundary-Node Lemma}

Before decomposing the FWER by depth, we establish a structural result that
simplifies all subsequent proofs: the FWER event is determined entirely by
the ``exposed'' true nulls sitting at the boundary of the false subtree.

\begin{definition}[Boundary Nulls]
\label{def:boundary_nulls}
The \emph{boundary set} of the true-null set $\calH_0$ is
\[
  \mathcal{B} = \{i \in \calH_0 : i = \text{root}\}
      \cup \{i \in \calH_0 : \text{parent}(i) \notin \calH_0\}.
\]
A boundary null is a true-null node whose parent is non-null (or the root
itself, if it is null). These are the ``exposed nulls'' defined informally
above. The formal definition makes clear that $\mathcal{B}$ depends only on
the null configuration $\calH_0$, not on the testing outcomes. We write
$\mathcal{B}_\ell = \{i \in \mathcal{B} : d(i) = \ell\}$ for the boundary
nulls at depth~$\ell$.
\end{definition}

\begin{lemma}[Boundary-Node Reduction]
\label{lem:boundary}
Under the Stopping Rule (Condition~\ref{cond:ap-stopping}) and the
intersection structure,
\[
  \Pr\!\left(\bigcup_{i \in \calH_0} \{R_i = 1\}\right)
  = \Pr\!\left(\bigcup_{i \in \mathcal{B}} \{R_i = 1\}\right).
\]
That is, a true null is rejected somewhere in the tree if and only if
some boundary null is rejected.
\end{lemma}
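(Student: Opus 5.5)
We prove that the two events coincide pointwise, that is, for every realization of the procedure; equality of their probabilities then follows at once. Write $A = \bigcup_{i \in \calH_0} \{R_i = 1\}$ and $B = \bigcup_{i \in \mathcal{B}} \{R_i = 1\}$. Since $\mathcal{B} \subseteq \calH_0$ by Definition~\ref{def:boundary_nulls}, the inclusion $B \subseteq A$ holds trivially. All the work is in the reverse inclusion $A \subseteq B$.

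For $A \subseteq B$, we fix an outcome in $A$ and choose some $i \in \calH_0$ with $R_i = 1$. Consider the root-to-$i$ path, root $= v_1, v_2, \dots, v_{d(i)} = i$. By the Stopping Rule (Condition~\ref{cond:ap-stopping}), in the form recorded in the notation ($R_i = 1$ forces $R_j = 1$ for every $j \in \anc(i)$), every node on this path has $R_{v_t} = 1$. Let $t^\ast$ be the smallest index with $v_{t^\ast} \in \calH_0$; it exists because $v_{d(i)} = i \in \calH_0$. There are two cases. If $t^\ast = 1$, the root is a true null and lies in $\mathcal{B}$ by the first clause of the definition. If $t^\ast > 1$, minimality gives $v_{t^\ast-1} = \text{parent}(v_{t^\ast}) \notin \calH_0$, so $v_{t^\ast} \in \mathcal{B}$ by the second clause. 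In either case $v_{t^\ast} \in \mathcal{B}$ and $R_{v_{t^\ast}} = 1$, so the outcome lies in $B$.

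The intersection structure sharpens this picture, although the minimal-index argument does not need it. Because a false parent implies nothing about its children, but a true parent forces every child to be true, $\calH_0$ is closed under taking descendants. So once the path enters $\calH_0$ it never leaves, and $v_{t^\ast}$ is the unique boundary null on the path: it is the ancestor-or-self of $i$ at which the path first crosses from the false subtree into $\calH_0$. We will include this remark because it justifies the interpretation that each true null hangs below exactly one boundary null, which later depth-grouping arguments use.

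The main obstacle is a subtle point rather than a difficult one. The indicator $R_j$ must encode the full procedure: $R_j = 1$ means node $j$ was reached and its test rejected, as stipulated in the notation. If $R_j$ were instead the raw test outcome $\mathbb{I}\{p_j \le \alpha_j\}$, the Stopping Rule implication would fail. We will state explicitly that we use the procedure-level indicator. No probabilistic assumption enters, and in particular conditional validity is not used, so the lemma holds for any dependence among the $p$-values.
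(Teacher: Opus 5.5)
Your proof is correct and is essentially the paper's argument. The paper climbs from $i$ through rejected true-null ancestors, using the Stopping Rule at each step and finite depth for termination, until it reaches a boundary null; you select the shallowest true null on the root-to-$i$ path by minimality, which is the same idea stated as a well-ordering step. As you observe, neither version needs the intersection structure for the inclusion itself.
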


\begin{proof}
Since $\mathcal{B} \subseteq \calH_0$, the inclusion $\supseteq$ follows
from $\bigcup_{i \in \mathcal{B}} \{R_i = 1\} \subseteq
\bigcup_{i \in \calH_0} \{R_i = 1\}$, which gives
$\Pr(\bigcup_{i \in \mathcal{B}} \{R_i = 1\})
\leq \Pr(\bigcup_{i \in \calH_0} \{R_i = 1\})$.
For $\subseteq$: suppose $R_i = 1$ for some
$i \in \calH_0 \setminus \mathcal{B}$. Then $i$ is not the root and
$\text{parent}(i) \in \calH_0$ (because $i \notin \mathcal{B}$ means
$i$'s parent is also a true null). The Stopping Rule requires
$R_{\text{parent}(i)} = 1$, so $\text{parent}(i)$ is a rejected true null.
If $\text{parent}(i) \in \mathcal{B}$, we are done. Otherwise repeat:
$\text{parent}(i) \notin \mathcal{B}$ implies
$\text{parent}(\text{parent}(i)) \in \calH_0$ and the Stopping Rule gives
$R_{\text{parent}(\text{parent}(i))} = 1$. Since the tree has finite depth,
this ancestor-climbing process terminates at some node $b \in \mathcal{B}$
with $R_b = 1$, so $\{R_i = 1\}$ implies $\{R_b = 1\}$ for this
boundary null~$b$.
\end{proof}

\begin{remark}
The boundary-node lemma holds for any tree shape (regular or irregular) and
requires only the stopping rule and intersection structure. No independence,
power estimates, or specific test form are needed. The lemma shows that
non-exposed nulls do not merely contribute \emph{negligibly} to the FWER ---
they contribute nothing to the FWER event, because any rejection of a
non-exposed null logically implies a prior rejection of some boundary null.
\end{remark}

\subsection{FWER Decomposition}

We derive an upper bound on the FWER in terms of the
test sizes $\alpha_i$ and the conditional rejection
probabilities $\theta_j$ along the tree.

\begin{proposition}[FWER Expression for Sequential Tree Testing]
\label{prop:fwer}
Consider the sequential tree testing procedure described above, and write $\calH_0$ as the set of indices $i$ for which $H_i$ is true. Then the familywise error rate satisfies
\begin{equation}
  \label{eq:fwer-bound}
  \mathrm{FWER}
  = \Pr \left(\bigcup_{i \in \calH_0} \{R_i = 1\}\right)
  \le
  \sum_{i \in \calH_0} \alpha_i \prod_{j \in \anc(i)} \theta_j,
\end{equation}
where $\theta_j$ is given by~\eqref{eq:theta-def} and the product over an empty set of ancestors (for the root node) is $1$.

In words: each true null $i$ contributes to the FWER through two factors --- the test size $\alpha_i$ (how likely the test is to falsely reject, given that it is actually run) and the path product $\prod_{j \in \anc(i)} \theta_j$ (how likely the procedure is to reach node $i$ at all). Nodes buried deep in the tree, behind many ancestors with low rejection probability, contribute very little to the FWER --- the procedure almost never reaches them.

The bound~\eqref{eq:fwer-bound} is tightest when exposed true nulls are few and widely separated in the tree, and is sharp when there is only a single exposed true null.
\end{proposition}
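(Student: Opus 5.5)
The plan is a union bound followed by a chain-rule factorization of each summand. First, by countable subadditivity,
\[
\Pr\Bigl(\bigcup_{i\in\calH_0}\{R_i=1\}\Bigr)\le\sum_{i\in\calH_0}\Pr(R_i=1).
\]
The remaining work is to show that each summand is at most $\alpha_i\prod_{j\in\anc(i)}\theta_j$.

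Fix $i\in\calH_0$ and write its ancestors along the root-to-$i$ path as $j_1=\text{root},j_2,\dots,j_{d(i)-1}$. Let $A_m=\{R_{j_1}=\dots=R_{j_m}=1\}$ be the event that the first $m$ ancestors are rejected. Under the Stopping Rule (Condition~\ref{cond:ap-stopping}), $R_i=1$ forces $R_j=1$ for every $j\in\anc(i)$. Hence
\[
\{R_i=1\}=\{R_i=1\}\cap A_{d(i)-1}.
\]
The chain rule then gives
\[
\Pr(R_i=1)=\Pr\bigl(R_i=1\mid A_{d(i)-1}\bigr)\prod_{m=1}^{d(i)-1}\Pr\bigl(R_{j_m}=1\mid A_{m-1}\bigr),
\]
with $A_0$ the sure event. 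Each factor in the product has exactly the form of~\eqref{eq:theta-def}. The conditioning event for $j_m$ is ``all ancestors of $j_m$ rejected,'' which is $A_{m-1}$, so each factor equals $\theta_{j_m}$. The first factor is bounded by $\alpha_i$ via the conditional validity property~\eqref{eq:cond-valid}. Multiplying and summing over $\calH_0$ yields~\eqref{eq:fwer-bound}. For the root the product is empty and the bound reduces to marginal validity.

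The delicate step is identifying $\Pr(R_{j_m}=1\mid A_{m-1})$ with $\theta_{j_m}$. The definition~\eqref{eq:theta-def} must be read as conditioning on the \emph{event} that every ancestor was rejected. It must not be read as conditioning on the full history of sibling outcomes, since sibling outcomes do not affect whether $j_m$ is reached. I would state this reading explicitly and note that $\theta_j$ is then a single number attached to node $j$, shared by all its descendants, as the text after~\eqref{eq:theta-def} asserts. When $j_m$ itself is a true null, I would write the factor only as ``$\le\alpha_{j_m}$.'' Because every factor is nonnegative, replacing a factor by an upper bound preserves the inequality. No independence between parent and child statistics is used anywhere: the shared-unit dependence is absorbed entirely into the conditional quantities. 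This is exactly why the result needs~\eqref{eq:cond-valid} and not merely Condition~\ref{cond:ap-valid}.

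The sharpness remark follows from the same factorization. If $\calH_0$ contains a single exposed null $b$ and every other true null lies below $b$, Lemma~\ref{lem:boundary} reduces the union to the single event $\{R_b=1\}$. The chain-rule identity then holds with equality once~\eqref{eq:cond-valid} is tight. More generally the bound is loose only through the union bound over several boundary nulls, whose rejection events can overlap. This explains the claim that the bound is tightest when exposed nulls are few and widely separated. I expect the main obstacle to be not the algebra but the probabilistic justification of~\eqref{eq:cond-valid} itself. The proposition takes it as a hypothesis, and I would leave it at that.
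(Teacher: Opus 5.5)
Your proof of the inequality follows the paper's proof step for step and is correct. Both use the stopping-rule identity $\{R_i=1\}=\{R_i=1\}\cap\bigcap_{j\in\anc(i)}\{R_j=1\}$, conditional validity~\eqref{eq:cond-valid} for the final factor, the chain rule to turn the ancestor event into $\prod_{j\in\anc(i)}\theta_j$, and Boole's inequality over $\calH_0$.

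Your sharpness paragraph, however, does not establish sharpness of~\eqref{eq:fwer-bound} as written. Lemma~\ref{lem:boundary} collapses the \emph{left} side to $\Pr(R_b=1)$. The right side still sums over all of $\calH_0$, including every true null $c$ strictly below $b$.

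For such a $c$ we have $b\in\anc(c)$, so its term $\alpha_c\prod_{j\in\anc(c)}\theta_j$ contains the factor $\theta_b$. In the case you invoke, where~\eqref{eq:cond-valid} is tight, $\theta_b=\alpha_b>0$. A child $c$ of $b$ then adds $\alpha_c\,\alpha_b\,\mathrm{PP}(b)>0$, so the right side strictly exceeds $\mathrm{FWER}=\alpha_b\,\mathrm{PP}(b)$.

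A true internal node has only true descendants. Hence, with tight conditional validity, the bound is attained only when $\calH_0=\{b\}$, which forces $b$ to be a leaf. The global null on any nontrivial tree, for example, has a single exposed null but is not sharp.

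What your argument actually shows is sharpness of the boundary-restricted sum $\sum_{i\in\mathcal{B}}\alpha_i\,\mathrm{PP}(i)$ of Theorem~\ref{thm:fwer_decomp}(b). For the same reason, your claim that the bound is loose ``only through the union bound over several boundary nulls'' is incomplete. The $\calH_0$-sum also carries slack from adding nested events $\{R_c=1\}\subseteq\{R_b=1\}$, which is exactly the redundancy the boundary lemma removes.

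The paper's proof establishes only the inequality, and its tightness remark uses the same singleton-union reasoning. So the fix is either to state sharpness for the boundary sum or to restrict it to the case $\calH_0=\{b\}$.
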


\begin{proof}

  \noindent\textbf{Step 0: Define the FWER} By definition,
\begin{equation}
  \mathrm{FWER}
  = \Pr \left(\bigcup_{i \in \calH_0} \{R_i = 1\}\right).
\end{equation}

\noindent\textbf{Step 1: Decompose rejection at node $i$.}
By the Stopping Rule (Condition~\ref{cond:ap-stopping}), we test $H_i$ only if all its ancestors have been rejected:
\begin{equation}
  \{R_i = 1\}
  =
  \{R_i = 1\} \cap \bigcap_{j \in \anc(i)} \{R_j = 1\}.
\end{equation}
This identity says that ``node~$i$ is rejected'' is the same event as
``node~$i$ is rejected and every ancestor is also rejected.'' The
Stopping Rule makes ancestor rejections a logical prerequisite for
$R_i = 1$, so intersecting with these events adds no restriction.
The right side is useful because it separates the two components of a
false rejection: \emph{reaching} node~$i$ (the ancestor product) and
\emph{falsely rejecting} once there.

\noindent\textbf{Step 2: Bound the false-rejection probability at a single true null.}
We now focus on a single true null hypothesis $i \in \calH_0$, because the FWER counts only \emph{false} rejections --- rejecting a node whose hypothesis is actually false would be a correct detection, not an error. For this true null $i$, we want to bound $\Pr(R_i = 1)$, the probability that $i$ is falsely rejected.

The stopping rule means that $R_i = 1$ can only happen if the procedure reaches node $i$, which requires all ancestors to be rejected first. Conditional probability lets us separate the act of \emph{reaching} node $i$ (the ancestor product) from the act of \emph{falsely rejecting} once there (the test size $\alpha_i$):
\begin{align}
  \Pr(R_i = 1)
  &= \Pr \left(R_i = 1 \,\middle|\, \bigcap_{j \in \anc(i)} \{R_j = 1\}\right)
     \Pr \left(\bigcap_{j \in \anc(i)} \{R_j = 1\}\right) \\
  &\leq \alpha_i \cdot \Pr \left(\bigcap_{j \in \anc(i)} \{R_j = 1\}\right),
  \label{eq:Pi-anc}
\end{align}
where the inequality uses conditional validity~\eqref{eq:cond-valid}: since
$H_i$ is true, the conditional rejection probability is bounded by the size
$\alpha_i$.

\noindent\textbf{Step 3: Evaluate the probability that all ancestors are rejected.}
Step~2 left us with the ancestor-rejection probability $\Pr(\bigcap_{j \in \anc(i)} \{R_j = 1\})$ as an unexpanded term. We now evaluate it. Let the ancestors of node $i$ along the root-to-$i$ path be $a_1, a_2, \dots, a_m$, ordered from root downwards. By the chain rule of conditional probability, and because each $\theta_j$ is defined as the probability of rejecting at node $j$ given that all of $j$'s ancestors are already rejected, the product telescopes:
\begin{equation}
  \Pr \left(\bigcap_{j \in \anc(i)} \{R_j = 1\}\right)
  = \prod_{j \in \anc(i)} \theta_j.
\end{equation}
Each factor $\theta_j$ in the product conditions on all prior rejections, so the chain rule applies directly.
The telescoping works precisely because $\theta_j$ is defined as a
conditional probability---the probability of rejecting $H_j$ given
that all of $j$'s ancestors have already rejected. No independence
assumption is needed; this is an exact application of the chain rule.
Substituting into~\eqref{eq:Pi-anc}:
\begin{equation}
  \Pr(R_i = 1)
  \leq \alpha_i \prod_{j \in \anc(i)} \theta_j
  \quad \text{for each } i \in \calH_0.
  \label{eq:Pi-final}
\end{equation}

\noindent\textbf{Step 4: Union bound.}
Steps~1--3 gave us a bound on $\Pr(R_i = 1)$ for each \emph{individual} true null $i$: the probability that the procedure reaches node $i$ and then falsely rejects it. The FWER asks about a different event --- that \emph{at least one} true null anywhere in the tree is falsely rejected. The union bound (also called Boole's inequality) bridges this gap: the probability that at least one event in a collection occurs is at most the sum of their individual probabilities.
\begin{equation}
  \mathrm{FWER}
  = \Pr \left(\bigcup_{i \in \calH_0} \{R_i = 1\}\right)
  \le \sum_{i \in \calH_0} \Pr(R_i = 1)
  \le \sum_{i \in \calH_0} \alpha_i \prod_{j \in \anc(i)} \theta_j,
\end{equation}
which is~\eqref{eq:fwer-bound}. The first inequality is the union bound; the second substitutes the per-node bound from~\eqref{eq:Pi-final}. The union bound can overcount --- if two sibling nulls are both falsely rejected in the same simulation run, that run is counted twice --- but it never undercounts, so it gives a valid upper bound on the FWER.
\end{proof}

\begin{remark}[When is the bound tight?]
The union bound is tightest when exposed true nulls are few and widely separated in the tree, and is sharp when there is only a single exposed true null. It can be loose when multiple exposed true nulls cluster at the same level, because the union bound counts the possibility that two or more nulls are both falsely rejected in the same run, even though this event is very rare.
When there is exactly one exposed true null, the union is over a
singleton, so no multiplicity slack enters; the only remaining inequality is the
per-node conditional-size bound $\theta_i \le \alpha_i$, an equality for an exact
test and strict for the conservative (discrete) randomization tests we use. When exposed
nulls sit in different branches with independent block randomizations,
the joint false-rejection probability
$\Pr(\{R_i=1\} \cap \{R_j=1\})$ equals the product of marginal
probabilities (both of order~$\alpha$), so the pairwise overlaps are
of order~$\alpha^2$ and the union bound closely approximates the
exact inclusion-exclusion formula. Independence of sibling tests is the favorable
case for this approximation; the FWER upper bounds of
Proposition~\ref{prop:fwer} and the adaptive theorems never invoke it, so
dependence among siblings can only make the realized FWER smaller than the bound.

\textbf{Example.} Consider a tree with $k = 3$ and $L = 2$ where the root is false and two of its three children are true nulls, each tested at level $\alpha = 0.05$. The union bound gives $\text{FWER} \leq 2 \times 0.05 = 0.10$, while the exact probability is $1 - (1 - 0.05)^2 = 0.0975$. The bound is nearly tight because simultaneous false rejection of both nulls is rare (probability $0.05^2 = 0.0025$). With more exposed nulls the gap widens: with $k = 10$ and 9 exposed nulls, the union bound gives $9 \times 0.05 = 0.45$ while the exact value is $1 - (1-0.05)^9 = 0.37$. In both cases the bound is conservative (never too small), which is what a valid FWER guarantee requires.
\end{remark}

The node-level bound in Proposition~\ref{prop:fwer} expresses the FWER as a sum over individual nodes. To design a practical adjustment, we need a formula that depends on \emph{depth} rather than on individual node identities --- because the adaptive $\alpha$-schedule assigns one significance level per depth, not per node. Grouping terms in~\eqref{eq:fwer-bound} by level reveals the structural quantities that govern the adjustment. The false nulls form a connected subtree rooted at the root (by the intersection structure), and if the root is true, the global null holds and weak FWER control (Theorem~\ref{thm:weakctrl_supp}) already applies.

\begin{theorem}[FWER Decomposition by Level]
\label{thm:fwer_decomp}

Let $\mathcal{B}$ be the boundary set (Definition~\ref{def:boundary_nulls})
and write $\mathcal{B}_\ell$ for the boundary nulls at depth~$\ell$.

\smallskip\noindent
(a) Each boundary null $i \in \mathcal{B}$ at depth $\ell$ satisfies
\begin{equation}
  \Pr(R_i = 1)
  \leq \alpha_\ell \prod_{j \in \anc(i)} \theta_j
  = \alpha_\ell \cdot \mathrm{PP}(i),
  \label{eq:boundary-null-contribution}
\end{equation}
where $\alpha_\ell$ is the test size at depth~$\ell$ and
$\mathrm{PP}(i) = \prod_{j \in \anc(i)} \theta_j$ is the path power of
node~$i$. Since every ancestor of a boundary null is non-null, each factor
$\theta_j = \pi_j$ in the product is the power at a non-null node.

\smallskip\noindent
(b) The total FWER satisfies
\begin{equation}
  \FWER \leq \sum_{\ell=1}^{L} \alpha_\ell \cdot S_\ell,
  \qquad
  S_\ell = \sum_{i \in \mathcal{B}_\ell}
    \prod_{j \in \anc(i)} \pi_j,
  \label{eq:fwer-boundary-sum}
\end{equation}
where $S_\ell$ is the total path power of boundary nulls at depth~$\ell$.
Each boundary null contributes through its own ancestor path product, so
this bound holds for regular and irregular trees alike.
The level-wise form~\eqref{eq:fwer-boundary-sum} is not merely a
rearrangement of the node-level bound---it is the form that enables a
practical adjustment. The adaptive $\alpha$-schedule assigns a single
test size $\alpha_\ell$ to all nodes at depth~$\ell$, so the FWER
factors into per-depth contributions $\alpha_\ell \cdot S_\ell$.
Equation~\eqref{eq:fwer-boundary-sum} is therefore the starting point for the
results that follow. By itself it does \emph{not} imply that the naive choice
$\alpha_\ell = \alpha / G_\ell$ controls FWER on every tree: natural gating,
the budget theorem, and the regular-tree telescoping theorem provide three
different ways to turn~\eqref{eq:fwer-boundary-sum} into a valid
$\alpha$-level guarantee.

\smallskip\noindent
(c) Under the global null (the root is true, and therefore every node is
true by the intersection structure), $\FWER \leq \alpha_1 \leq \alpha$.
\end{theorem}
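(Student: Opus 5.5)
The argument for all three parts runs through Lemma~\ref{lem:boundary} and the per-node bound of Proposition~\ref{prop:fwer}, both taken as given.

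\emph{Part (a).} Fix a boundary null $i\in\mathcal{B}$ at depth $\ell$ and repeat Steps~1--3 of the proof of Proposition~\ref{prop:fwer}. By the Stopping Rule, $\{R_i=1\}=\{R_i=1\}\cap\bigcap_{j\in\anc(i)}\{R_j=1\}$. Conditioning on the ancestor event gives the factor $\Pr(R_i=1\mid \text{all ancestors rejected})$, which conditional validity~\eqref{eq:cond-valid} bounds by the depth-$\ell$ cutoff $\alpha_\ell$. The chain rule then writes the ancestor probability as $\prod_{j\in\anc(i)}\theta_j=\mathrm{PP}(i)$.

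It remains to identify each factor as a power. Because $i$ is a boundary null, $\text{parent}(i)\notin\calH_0$. By the intersection structure the false hypotheses form a connected subtree containing the root, so every ancestor of a non-null node is non-null. Hence every $j\in\anc(i)$ is non-null, and~\eqref{eq:theta-def} gives $\theta_j=\pi_j$. The root case, where $\anc(i)=\emptyset$ and $\mathrm{PP}=1$, is covered by the empty-product convention.

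\emph{Part (b).} Apply Lemma~\ref{lem:boundary} to replace the FWER event by $\bigcup_{i\in\mathcal{B}}\{R_i=1\}$. Use the union bound over $\mathcal{B}$ and insert the bound from part (a). Then partition $\mathcal{B}=\bigsqcup_\ell\mathcal{B}_\ell$ by depth and pull out the common $\alpha_\ell$ within each depth. This yields $\sum_\ell\alpha_\ell S_\ell$ with $S_\ell$ exactly as defined. Nothing in this step uses a common branching factor, because each term keeps its own path product. The argument therefore applies verbatim to irregular trees.

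\emph{Part (c).} If the root is true, the intersection structure makes every node true, so $\mathcal{B}=\{\text{root}\}$ and $S_1=1$. Part (b) then gives $\FWER\le\alpha_1\le\alpha$. Alternatively one can cite Theorem~\ref{thm:weakctrl_supp} directly, which needs only marginal validity.

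\emph{Main obstacle.} The delicate step is the use of conditional validity in part (a): the bound $\Pr(R_i=1\mid\text{ancestors rejected})\le\alpha_\ell$. Marginal validity does not give it, because ancestor statistics pool node $i$'s own blocks. The plan is to state~\eqref{eq:cond-valid} explicitly as a hypothesis rather than derive it. A second point needs care when $\alpha_\ell$ is the depth-$\ell$ cutoff: the conditioning event, and the definition of $\theta_j$ as a conditional probability, must be read consistently with whatever cutoffs the ancestors used. The chain-rule identity holds exactly once that consistency is in place.
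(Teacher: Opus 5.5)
Your proposal is correct and follows the paper's proof essentially step for step. Part (a) is Proposition~\ref{prop:fwer}'s per-node bound, with $\theta_j=\pi_j$ justified by the intersection structure. Part (b) is the Boundary-Node Lemma plus the union bound, regrouped by depth. Part (c) uses $\mathcal{B}=\{\text{root}\}$; the paper reads off $\FWER=\Pr(R_{\text{root}}=1)\le\alpha_1$ directly rather than passing through (b), which amounts to the same thing for a singleton boundary set. You are right to carry conditional validity~\eqref{eq:cond-valid} as an explicit hypothesis; this is the same caveat the correction notice attaches to every load-weighted bound in the paper.
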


\begin{proof}
Part~(a) applies Proposition~\ref{prop:fwer}'s per-node
bound~\eqref{eq:Pi-final} to each boundary null $i \in \mathcal{B}$.
Since every ancestor of a boundary null is non-null (by definition of
$\mathcal{B}$ and the intersection structure), $\theta_j = \pi_j$ for each
$j \in \anc(i)$.

Part~(b) combines the Boundary-Node Lemma (Lemma~\ref{lem:boundary}) with
the union bound:
\[
  \FWER
  = \Pr\!\left(\bigcup_{i \in \calH_0} \{R_i = 1\}\right)
  = \Pr\!\left(\bigcup_{i \in \mathcal{B}} \{R_i = 1\}\right)
  \leq \sum_{i \in \mathcal{B}} \Pr(R_i = 1)
  \leq \sum_{i \in \mathcal{B}}
       \alpha_{d(i)} \prod_{j \in \anc(i)} \pi_j.
\]
The first equality is the Boundary-Node Lemma. The first inequality is the
union bound. The second substitutes part~(a), where
$\alpha_{d(i)} = \alpha_\ell$ when $d(i) = \ell$ (the common test size for
all nodes at depth~$\ell$). Grouping by depth gives~\eqref{eq:fwer-boundary-sum}:
\[
  \sum_{i \in \mathcal{B}} \alpha_{d(i)} \prod_{j \in \anc(i)} \pi_j
  = \sum_{\ell=1}^{L} \alpha_\ell
    \sum_{i \in \mathcal{B}_\ell} \prod_{j \in \anc(i)} \pi_j
  = \sum_{\ell=1}^{L} \alpha_\ell \cdot S_\ell.
\]

Part~(c): if the root is true, every node is true by the intersection
structure, and $\mathcal{B} = \{\text{root}\}$. The FWER equals
$\Pr(R_{\text{root}} = 1) \leq \alpha_1 \leq \alpha$.
\end{proof}

\begin{remark}
When tests at sibling nodes are conducted using independent randomizations (as they are in block-randomized experiments with independent blocks), the exact FWER contribution from $e_\ell$ exposed nulls at level $\ell$ is $\prod_{j=1}^{\ell-1} \theta_j \cdot [1 - (1 - \alpha_\ell)^{e_\ell}]$. The union bound $e_\ell \cdot \alpha_\ell$ closely approximates $1 - (1 - \alpha_\ell)^{e_\ell}$ when $\alpha_\ell$ is small. The union bound is simpler and sufficient for our purposes; we use it throughout.
\end{remark}

\begin{remark}[Where FWER Risk Concentrates]
Under the global null, FWER is exactly $\alpha_1 \leq \alpha$ --- the stopping rule at the root provides complete protection. The more interesting case arises when the root is false but some descendants are true nulls. If root power is high, the procedure almost always descends into the tree, exposing those true nulls to testing. This is precisely what Theorem~\ref{thm:fwer_decomp}(a) quantifies: the FWER contribution from level $\ell$ is the number of exposed nulls $e_\ell$, times the test size $\alpha_\ell$, times the path product to that level. When root power is high, the path product is close to~1, and the FWER contribution from each level is approximately $e_\ell \cdot \alpha_\ell$ --- which can exceed $\alpha$ when many nulls are exposed. The next subsection formalizes the conditions under which this exposure keeps the FWER at or below $\alpha$.
\end{remark}

\subsection{Natural Gating}\label{sec:natural_gating}

In many practical settings---including the multi-site education policy trials
that motivate this paper---the tree procedure controls strong FWER without any
alpha adjustment.\corrflag{1} Splitting data across branches reduces sample size and
therefore power at each level. When power decays fast enough relative to the
branching factor, the procedure cannot reach enough true nulls to inflate the FWER.
We formalize the relationship between power and error on the tree via the
\emph{error load}.

\begin{definition}[Error Load]
\label{def:error-load}
For a tree $T$ with estimated conditional rejection probabilities
$\hat{\theta}_j$, define the \emph{estimated path power} of node~$i$ as
\[
  \widehat{\mathrm{PP}}(i) = \prod_{j \in \anc(i)} \hat{\theta}_j,
\]
with $\widehat{\mathrm{PP}}(\text{root}) = 1$. The \emph{error load} at
depth~$\ell$ is the sum of estimated path powers over all nodes at that
depth:
\begin{equation}
  G_\ell = \sum_{\substack{i \in V(T) \\ d(i) = \ell}}
           \widehat{\mathrm{PP}}(i).
  \label{eq:error-load}
\end{equation}
The total error load is $\sum_{\ell=2}^{L} G_\ell$. For a regular $k$-ary
tree where all nodes at depth~$\ell$ share the same estimated path power
$\prod_{j=1}^{\ell-1} \hat{\theta}_j$, this reduces to
$G_\ell = k^{\ell-1} \prod_{j=1}^{\ell-1} \hat{\theta}_j$.
\end{definition}

The error load $G_\ell$ measures the severity of exposure to false rejections at level $\ell$. It is the product of two quantities: $k^{\ell-1}$, the number of nodes at level $\ell$ in the full tree (the maximum number of true nulls that could be exposed there), and $\prod_{j=1}^{\ell-1} \hat{\theta}_j$, an estimated upper bound on the probability that the testing procedure reaches level $\ell$ by rejecting every ancestor along the way. When $G_\ell$ is small, either few nulls are exposed at that depth or the procedure is unlikely to reach it --- either way, level $\ell$ contributes little to the FWER. When $G_\ell$ is large, many true-null nodes may be tested at that depth; under the conditional-validity bound~\eqref{eq:cond-valid}, each such node contributes at most its path power times the node's testing level. The total level-$\ell$ contribution is therefore bounded by $G_\ell \cdot \alpha$ in the natural-gating regime.

\begin{proposition}[Natural Gating Sufficiency]
\label{prop:strong_fwer_adj}

If $\sum_{\ell=2}^{L} G_\ell \leq 1$ and each node is tested at
level~$\alpha$ (no adjustment), and the conservative-power assumption holds
($\hat{\theta}_j \geq \theta_j$ for all non-null~$j$), then for any
configuration of true and false nulls:
\[
  \mathrm{FWER} \leq \alpha.
\]
This result holds for regular and irregular trees alike.
\end{proposition}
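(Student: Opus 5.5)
The plan is to reduce the statement to the level-wise decomposition of Theorem~\ref{thm:fwer_decomp}. I would then compare the boundary path powers $S_\ell$ with the design-stage error loads $G_\ell$ term by term, so that the hypothesis $\sum_{\ell=2}^{L} G_\ell \le 1$ closes the argument. Throughout I take as given what Theorem~\ref{thm:fwer_decomp} itself uses: the Stopping Rule, the intersection structure, and conditional validity~\eqref{eq:cond-valid} at every boundary null, with every node tested at $\alpha_i = \alpha$.

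First I split on the root. If the root null is true, the intersection structure makes every node null. Then $\mathcal{B} = \{\text{root}\}$, and Theorem~\ref{thm:fwer_decomp}(c) gives $\FWER \le \alpha_1 = \alpha$ directly, with no appeal to the error load. Otherwise the root is non-null, so $\mathcal{B}_1 = \emptyset$ and $S_1 = 0$. Theorem~\ref{thm:fwer_decomp}(b) with $\alpha_\ell = \alpha$ then yields
\[
  \FWER \le \alpha \sum_{\ell=2}^{L} S_\ell,
  \qquad
  S_\ell = \sum_{i \in \mathcal{B}_\ell} \prod_{j \in \anc(i)} \pi_j .
\]

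Next I show $S_\ell \le G_\ell$ for each $\ell \ge 2$, in two monotonicity steps. (i) Every ancestor $j$ of a boundary null is non-null; this is exactly part~(a) of Theorem~\ref{thm:fwer_decomp}. Hence $\theta_j = \pi_j$, and the conservative-power assumption gives $\pi_j \le \hat\theta_j$. All factors are nonnegative, so the products compare factorwise: $\prod_{j\in\anc(i)} \pi_j \le \widehat{\mathrm{PP}}(i)$. (ii) $\mathcal{B}_\ell$ is a subset of the depth-$\ell$ nodes $\{i \in V(T): d(i)=\ell\}$, and each summand $\widehat{\mathrm{PP}}(i)$ is nonnegative. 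So enlarging the index set from $\mathcal{B}_\ell$ to all depth-$\ell$ nodes can only increase the sum, giving $S_\ell \le G_\ell$ as in~\eqref{eq:error-load}. Neither step uses regularity, so the result holds for irregular trees as claimed. Summing over $\ell$ gives $\FWER \le \alpha\sum_{\ell\ge 2} G_\ell \le \alpha$.

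The main obstacle is not the algebra but the probabilistic input hidden in Theorem~\ref{thm:fwer_decomp}(a). That input is the factorization $\Pr(R_i=1) \le \alpha \cdot \Pr(\text{all ancestors rejected})$, which needs conditional super-uniformity of a boundary null's $p$-value given its ancestors' rejections. Marginal validity alone does not give it, since ancestors' statistics pool the null node's own blocks. In the proof I would state this condition explicitly as a hypothesis rather than treat it as implied by Condition~\ref{cond:ap-valid}. A second subtlety to check is the identification of $\Pr(\bigcap_{j\in\anc(i)}\{R_j=1\})$ with $\prod \pi_j$ via the chain rule. Here each $\pi_j$ must be the conditional power given all of $j$'s ancestors rejected, and the estimates $\hat\theta_j$ must dominate that conditional quantity, not the marginal power. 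I would phrase the conservative-power hypothesis in exactly that conditional form so that step (i) applies.
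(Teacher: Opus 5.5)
Your proposal is correct and follows the paper's proof step for step. Like the paper, you apply Theorem~\ref{thm:fwer_decomp}(b) with $\alpha_\ell=\alpha$, bound each boundary null's path power by its estimated path power using conservative power along its all-non-null ancestor path, enlarge $\mathcal{B}_\ell$ to all depth-$\ell$ nodes to get $S_\ell \le G_\ell$, and treat the true-root case separately. You are right to state conditional super-uniformity at boundary nulls as an explicit hypothesis, along with the conditional form of the conservative-power assumption: the paper's correction notice names exactly this step as one that marginal validity does not deliver.
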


\begin{proof}
By the FWER Decomposition (Theorem~\ref{thm:fwer_decomp}(b)):
\begin{equation}
  \FWER \leq \sum_{\ell=2}^{L} \alpha \cdot S_\ell
  = \alpha \sum_{\ell=2}^{L} S_\ell.
  \label{eq:fwer-level-sum}
\end{equation}
Each boundary null $i \in \mathcal{B}_\ell$ has all non-null ancestors.
Under the conservative-power assumption, $\hat{\theta}_j \geq \pi_j$ for
each non-null ancestor~$j$, so
\[
  \mathrm{PP}(i) = \prod_{j \in \anc(i)} \pi_j
  \leq \prod_{j \in \anc(i)} \hat{\theta}_j
  = \widehat{\mathrm{PP}}(i).
\]
Since $\mathcal{B}_\ell$ is a subset of all nodes at depth~$\ell$:
\[
  S_\ell = \sum_{i \in \mathcal{B}_\ell} \mathrm{PP}(i)
  \leq \sum_{i \in \mathcal{B}_\ell} \widehat{\mathrm{PP}}(i)
  \leq \sum_{\substack{i \in V(T) \\ d(i) = \ell}}
       \widehat{\mathrm{PP}}(i) = G_\ell.
\]
If the root is true, $\FWER = \Pr(R_{\text{root}} = 1) \leq \alpha$. If the
root is false, $S_1 = 0$ (the root has no parent, so it cannot be a
boundary null) and
\[
  \FWER \leq \alpha \sum_{\ell=2}^{L} S_\ell
  \leq \alpha \sum_{\ell=2}^{L} G_\ell \leq \alpha. \qedhere
\]
\end{proof}

\begin{remark}[Sensitivity to the conditional validity assumption]
\label{rem:sensitivity}

The FWER bounds in Proposition~\ref{prop:fwer} and
Theorem~\ref{thm:fwer_decomp}, and all results that follow from
them---Proposition~\ref{prop:strong_fwer_adj} (natural gating) and
Theorems~\ref{thm:fwer_budget} and~\ref{thm:fwer_control_adaptive}
(adaptive adjustment)---rely on two ingredients beyond weak control:
conditional validity at true nulls and conservative path-power estimates
along non-null ancestor paths. Both are hard to verify exactly from first
principles. Conditional validity can fail because parent and child tests
share experimental units, so conditioning on ancestor rejection can
inflate a child's false-rejection rate. Conservative path-power
estimation can fail because the true conditional rejection probabilities
$\theta_j$ at non-null ancestors depend on unknown effect sizes. We now
bundle both possible failures into a single sensitivity factor.
If the root is true, then $\FWER = \Pr(R_{\text{root}} = 1)$ and the
discussion reduces to sensitivity of the root test itself. The more
interesting case is the one relevant for natural gating and adaptive
adjustment: the root is false, so every boundary null lies at depth
$\ell \geq 2$.

\medskip\noindent
\textbf{A combined sensitivity factor.}
Define the inflation factor at a true-null node~$i$ as
\[
  \rho_i
  = \frac{\Pr(R_i = 1 \mid H_i \text{ true},\
    \text{all ancestors rejected})}{\alpha_i},
\]
so that conditional validity is equivalent to $\rho_i \leq 1$. For a
boundary null $i$, define also
\[
  \lambda_i
  = \frac{\mathrm{PP}(i)}{\widehat{\mathrm{PP}}(i)}
  = \frac{\prod_{j \in \anc(i)} \theta_j}
         {\prod_{j \in \anc(i)} \hat{\theta}_j},
\]
which measures how much the estimated path power understates the true path
power along the non-null ancestor path leading to~$i$. Conservative
path-power estimation is the statement $\lambda_i \leq 1$ for every
boundary null~$i$. Define the combined sensitivity factor
\[
  \kappa_i = \rho_i \lambda_i,
  \qquad
  \kappa_{\max} = \max_{i \in \mathcal{B}} \kappa_i.
\]
Then the proof of Proposition~\ref{prop:fwer} gives, for each boundary
null~$i$,
$\Pr(R_i = 1) \leq \kappa_{\max}\, \alpha_i \widehat{\mathrm{PP}}(i)$.
Summing over boundary nulls yields
\begin{equation}
  \label{eq:fwer-rho}
  \FWER
  \;\leq\;
  \kappa_{\max}
  \sum_{\ell=2}^{L} \alpha_\ell G_\ell.
\end{equation}
If the path-power estimates are conservative, then $\lambda_i \le 1$ and
$\kappa_{\max} \le \rho_{\max}$, so~\eqref{eq:fwer-rho} reduces to the
conditional-validity sensitivity analysis as a special case.

\medskip\noindent
\textbf{Robust natural gating.}
When each node is tested at level $\alpha$ (no adjustment), the
right-hand side of~\eqref{eq:fwer-rho} reduces to
$\kappa_{\max} \cdot \alpha \cdot \sum_{\ell=2}^{L} G_\ell$. Hence
$\FWER \leq \alpha$ whenever
\begin{equation}
  \label{eq:robust-gating}
  \textstyle\sum_{\ell=2}^{L} G_\ell \;\leq\; 1/\kappa_{\max}.
\end{equation}
This condition does not derive conditional validity from marginal
validity. Rather, it replaces exact conditional validity and exact
conservative path-power estimation with the weaker scalar assumption that
their combined distortion is bounded by~$\kappa_{\max}$ throughout the
tree.
For the 25 MDRC education-policy trials analyzed in the main text,
$\sum G_\ell < 0.051$, so~\eqref{eq:robust-gating} holds for any
$\kappa_{\max} < 1/0.051 \approx 19.6$. If the path-power estimates are
already conservative, this reduces to the simpler requirement
$\rho_{\max} < 19.6$.\corrflag{1}

\medskip\noindent
\textbf{Modified adaptive adjustment.}
When the error load exceeds $1/\kappa_{\max}$, the adaptive $\alpha$-schedule
can be modified to absorb the inflation by replacing
$\alpha_\ell^{\text{adj}} = \min\{\alpha,\, \alpha / G_\ell\}$ with
\[
  \alpha_\ell^{\text{adj}}
  = \min\!\left\{\alpha,\;
    \frac{\alpha}{\kappa_{\max}\, G_\ell}\right\}.
\]
The $\kappa_{\max}$ in~\eqref{eq:fwer-rho} cancels the
$\kappa_{\max}$ in the denominator of the adjusted~$\alpha_\ell$,
giving $\FWER \leq \alpha$. The cost is a more stringent rejection cutoff
at each depth, reducing power.

\medskip\noindent
\textbf{Quantifying the sensitivity factor in practice.}
Under a bivariate normal approximation for parent and child test
statistics with correlation
$r = \sqrt{n_{\text{child}} / n_{\text{parent}}}$ --- the child's units are a
subset of the parent's, and for a statistic that sums independent unit-level
contributions the correlation between the child's sum and the parent's equals
the square root of the shared fraction $n_{\text{child}} / n_{\text{parent}}$ ---
the inflation factor $\rho$ from conditioning is approximately
\[
  \rho
  \;\approx\;
  \frac{\Phi\bigl((z_t - r\, z_\pi) / \sqrt{1 - r^2}\bigr)}
       {\Phi(z_t)},
\]
where $z_t$ is the critical value for the child's test and
$z_\pi = \Phi^{-1}(\pi_{\text{parent}})$ captures the parent's
marginal power. When parent power is high
($\pi_{\text{parent}} \to 1$, so $z_\pi \to -\infty$), the
conditioning event is near-certain and $\rho \to 1$. When parent
power is near~$\alpha$, numerical evaluation of this expression gives
$\rho \approx 5.0$ ($k=2$), $3.9$ ($k=3$), and $3.4$ ($k=4$), growing as the tree
narrows; these values describe the Gaussian approximation only, not the discrete
randomization or combined-energy tests the procedure actually uses. The conservative-power assumption is
therefore least problematic whenever ancestors have high power---which is
the regime where the error load is large and the adaptive adjustment
matters. Conversely, when conditioning inflation is most strained
($\rho$ large), parent power is low and the error load is small, so the
multiplicative inflation has little effect. If the estimated path powers
are also approximately conservative, then $\kappa_{\max} \approx
\rho_{\max}$; otherwise, $\kappa_{\max}$ absorbs both sources of
misspecification.
\end{remark}

In the homogeneous regular-tree setting, the key ratio governing whether
the error load grows or shrinks with depth is:
\begin{equation}
  \frac{G_{\ell+1}}{G_\ell} = k \cdot \hat{\theta}_\ell.
  \label{eq:error-load-ratio}
\end{equation}
Here $\hat{\theta}_\ell$ denotes the common estimated conditional
rejection probability at depth~$\ell$. When $\hat{\theta}_\ell < 1/k$,
the error load decreases at level $\ell$; when
$\hat{\theta}_\ell > 1/k$, it increases. The ratio
$k \cdot \hat{\theta}_\ell$ compares the branching factor (how many new
nodes appear at the next level) against the probability that each one is
actually tested. When $\hat{\theta}_\ell$ is below $1/k$, fewer nodes are
tested at the next level than at the current level --- the tree is
``thinning out'' as the procedure descends.

\begin{corollary}[Critical Power Threshold]
In a homogeneous regular $k$-ary tree, if the estimated conditional
rejection probability at every depth satisfies $\hat{\theta}_\ell < 1/k$,
then the error load decreases geometrically with depth. Writing
$\hat{\theta}_{\max} = \max_\ell \hat{\theta}_\ell$, the total error load
satisfies
\[
  \sum_{\ell \geq 2} G_\ell \leq \frac{G_2}{1 - k\hat{\theta}_{\max}}.
\]
Here $G_2 = k \cdot \hat{\theta}_1$ when the root is false. Thus, if
$G_2 / (1 - k\hat{\theta}_{\max}) \le 1$ and the assumptions of
Proposition~\ref{prop:strong_fwer_adj} hold, natural gating suffices to
control the FWER.
\end{corollary}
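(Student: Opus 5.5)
The plan is to make the ratio identity \eqref{eq:error-load-ratio} into a geometric domination of $G_\ell$ by $G_2$, bound the finite sum by a geometric series, and then hand the result to Proposition~\ref{prop:strong_fwer_adj}.

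\textbf{Step 1 (closed form of $G_\ell$).} In a homogeneous regular $k$-ary tree, every node at depth $\ell$ has the same estimated path power $\prod_{j=1}^{\ell-1}\hat\theta_j$, where $\hat\theta_j$ is the common value at depth $j$. Definition~\ref{def:error-load} then gives $G_\ell = k^{\ell-1}\prod_{j=1}^{\ell-1}\hat\theta_j$. Two consequences follow at once.
\begin{itemize}
\item At $\ell=2$: $G_2 = k\hat\theta_1$. This is the stated value, since the only ancestor of a depth-2 node is the root, and when the root is false its estimated rejection probability is $\hat\theta_1$.
\item For every $\ell$: $G_{\ell+1} = k\hat\theta_\ell\, G_\ell$, which is \eqref{eq:error-load-ratio}.
\end{itemize}

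\textbf{Step 2 (geometric bound).} Set $q = k\hat\theta_{\max}$. The hypothesis $\hat\theta_\ell < 1/k$ at every depth gives $0 \le q < 1$. Since $k\hat\theta_\ell \le q$ for each $\ell$, induction from $\ell=2$ gives $G_\ell \le q^{\ell-2} G_2$ for all $\ell \ge 2$. Each $G_{\ell+1}/G_\ell = k\hat\theta_\ell < 1$, so the error load also decreases strictly with depth; this is the ``decreases geometrically'' clause. Summing and extending the finite sum to the infinite one (all terms are nonnegative) gives
\[
  \sum_{\ell=2}^{L} G_\ell \;\le\; G_2\sum_{m=0}^{\infty} q^m \;=\; \frac{G_2}{1-k\hat\theta_{\max}}.
\]
Only the depths $1,\dots,L-1$ enter the ratios. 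Taking the maximum over all depths only enlarges $q$, so the bound stays valid.

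\textbf{Step 3 (natural gating).} If $G_2/(1-k\hat\theta_{\max}) \le 1$, then Step 2 gives $\sum_{\ell\ge2}G_\ell \le 1$. With the remaining assumptions of Proposition~\ref{prop:strong_fwer_adj} (nominal $\alpha$ at every node, conservative power, and the conditional validity \eqref{eq:cond-valid} that its proof uses), that proposition yields $\mathrm{FWER}\le\alpha$ for every configuration. If the root is true, Theorem~\ref{thm:fwer_decomp}(c) already gives the conclusion.

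\textbf{Main obstacle.} The algebra is routine; the care is in reading ``homogeneous'' correctly. The identity $G_{\ell+1}=k\hat\theta_\ell G_\ell$ requires that one estimate $\hat\theta_\ell$ serve every node at depth $\ell$. Here $\hat\theta$ is the design-stage estimate of non-null power, used for every node regardless of its true status. That is what Definition~\ref{def:error-load} already does, so I would state this reading explicitly and note that it is exactly what makes $G_\ell$ configuration-free. No new probabilistic step is needed beyond what Proposition~\ref{prop:strong_fwer_adj} assumes.
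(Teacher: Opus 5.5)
Your proof is correct. It is essentially the argument the paper intends: the paper gives no separate proof, and relies on the ratio identity \eqref{eq:error-load-ratio}, the resulting geometric-series bound, and then Proposition~\ref{prop:strong_fwer_adj}, exactly as in your Steps~1--3. Listing conditional validity \eqref{eq:cond-valid} among the inherited assumptions is a worthwhile addition, since the Proposition's proof depends on it.
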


The following corollary illustrates an extreme case: the root test has very low power, so the procedure almost never descends into the tree. This scenario is not desirable in practice --- we want to detect effects when they exist --- but it establishes a principle that will prove useful. The FWER depends on how far into the tree the procedure reaches, and low power at the root provides automatic FWER control by preventing the procedure from exposing any descendants to testing.

\begin{corollary}[FWER Control via a Root-Level Bound]
\label{cor:fwer_control_low_power}
Recall that the root node is indexed by $1$ and that the testing procedure above is used. Assume:
\begin{enumerate}
  \item When $H_1$ is true, its conditional Type~I error satisfies $\alpha_1 \le \alpha$. (This is easy to maintain for randomization-based tests and is the basis of the Valid Tests condition)
  \item When $H_1$ is false, its conditional power satisfies $\pi_1 \le \alpha$. (That is, the test of the root has very low power.)
\end{enumerate}
Then, for any configuration of true and false nulls in the tree,
\[
  \mathrm{FWER} \le \alpha.
\]
\end{corollary}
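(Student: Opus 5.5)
The plan is a case split on the truth of $H_1$, in each case reducing the FWER event to a rejection at the root via the Stopping Rule.

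\emph{Case 1: $H_1$ is true.} By the intersection structure every node is true, so $\mathcal{B}=\{1\}$. Theorem~\ref{thm:fwer_decomp}(c), or directly the argument of Theorem~\ref{thm:weakctrl_supp}, gives $\mathrm{FWER}=\Pr(R_1=1)\le\alpha_1\le\alpha$ by assumption~1.

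\emph{Case 2: $H_1$ is false.} The root is not a true null, so every false rejection happens strictly below it. By the Stopping Rule (Condition~\ref{cond:ap-stopping}), any event $\{R_i=1\}$ with $i\neq 1$ is contained in $\{R_1=1\}$. Hence
\[
\bigcup_{i\in\calH_0}\{R_i=1\}\subseteq\{R_1=1\},
\qquad
\mathrm{FWER}\le\Pr(R_1=1)=\theta_1=\pi_1\le\alpha
\]
by assumption~2. Here $\theta_1$ is an unconditional probability, since the root has no ancestors.

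I would deliberately \emph{not} route Case~2 through Proposition~\ref{prop:fwer} or Theorem~\ref{thm:fwer_decomp}(b). Doing so would give $\alpha\,\pi_1 \sum_{\ell}(\cdots)$, which needs control of the deeper path powers and of conditional validity~\eqref{eq:cond-valid}. The containment argument avoids both: it uses only the Stopping Rule and the root's marginal behaviour. It also never bounds the conditional FWER below the root by anything better than $1$, exactly as in the weak-control proof.

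I expect the main obstacle to be interpretive rather than technical. The claim ``for any configuration'' must be read as the root test's rejection probability being at most $\alpha$ under whichever configuration actually holds. Assumption~1 covers the global null and assumption~2 covers every non-null configuration, so I would state explicitly that $\pi_1$ means the power under the true configuration. I would also remark that, because no conditional-validity step enters, this corollary is unaffected by the caveat in the correction notice.
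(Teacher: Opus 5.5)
Your proof is correct and is essentially the paper's. It uses the same case split on $H_1$. The true case gives $\mathrm{FWER}=\Pr(R_1=1)\le\alpha_1\le\alpha$, and the false case uses the Stopping Rule to write $\mathrm{FWER}=\Pr(R_1=1)\,\Pr(\text{some true descendant rejected}\mid R_1=1)\le\pi_1\le\alpha$, which is your containment argument written as a product.

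Your choice not to route through the error load is also sound. The paper's opening remark calls this the case $G_2=k\,\theta_{\text{root}}\le\alpha$, which would actually require $\pi_1\le\alpha/k$ and the conditional-validity step, and the argument it then gives uses neither.
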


\begin{proof}
This is the special case $G_2 = k \cdot \theta_{\text{root}} \leq \alpha < 1$,
i.e., the root has power at most $\alpha / k$.

When $H_1$ is true, every node is true by the intersection structure, and $\FWER = \alpha_1 \leq \alpha$.

When $H_1$ is false, every true null is a descendant of the root. A Type~I error requires rejecting the root first:
\begin{align}
  \FWER
  &= \Pr(R_1 = 1)\, \Pr(\text{at least one true descendant rejected} \mid R_1 = 1) \\
  &\leq \Pr(R_1 = 1) = \pi_1 \leq \alpha. \qedhere
\end{align}
\end{proof}

Corollary~\ref{cor:fwer_control_low_power} shows that low root power controls the FWER, but the mechanism --- the procedure rarely reaches any descendants --- is exactly the outcome we want to avoid. In practice, we design experiments to have enough power to detect effects when they exist. When root power is high, the procedure almost always descends into the tree, exposing true-null siblings to testing. We now examine how power decays with depth and then show what happens when root power is high.

\begin{correctionbox}
Equation~\eqref{eq:power-decay} below and the matching step in
Algorithm~\ref{alg:adaptive} print the upper tail only. Two-sided power at
level~$\ell$ is
$\Phi\!\left(\delta\sqrt{N/k^{\ell-1}} - z_{1-\alpha/2}\right) +
\Phi\!\left(-\delta\sqrt{N/k^{\ell-1}} - z_{1-\alpha/2}\right)$.
The omitted term is at most $\alpha/2$ and shrinks as $\delta$ grows, so it
matters little at a well-powered node and much at a low-power one: at
$\delta = 0$ the printed expression returns $0.025$ where the test has size
$0.05$. Because the error load sums power over every node at a depth,
including the many low-power nodes of a wide tree, the printed formula
understates it. This is one of the four computational errors behind the
incorrect error loads (Correction notice, item~1).
\end{correctionbox}

\begin{remark}[Connection to Data Splitting]
\label{rem:power_decay}
The next display is a heuristic approximation, not part of any proof in this
section. For a standardized effect size $\delta$ and total sample size $N$,
the power at level $\ell$ when testing at significance level $\alpha$ is approximately
\begin{equation}
\theta_\ell \approx \Phi\left(\delta \sqrt{\frac{N}{k^{\ell-1}}} - z_{1-\alpha/2}\right),
\label{eq:power-decay}
\end{equation}
where $\Phi$ is the standard normal CDF. For a balanced two-arm test the standardized effect $\delta$ relates to Cohen's $d = (\mu_T - \mu_C)/\sigma$ by $\delta = d/2$, since the noncentrality of the mean-difference test is $d \sqrt{n_\ell/4} = (d/2)\sqrt{n_\ell}$ --- the half absorbing the even split of each node's $n_\ell$ units between treatment and control. Power decays because $n_\ell = N/k^{\ell-1}$, using equal splitting for simplicity. At the root, the full sample $N$ is available and power is typically high. At level $\ell$, the sample has been split among $k^{\ell-1}$ branches, so each branch uses roughly $N/k^{\ell-1}$ observations. Power drops because the test statistic's signal-to-noise ratio scales with $\sqrt{n}$: splitting the sample among three branches cuts the effective sample by a factor of three and reduces the signal-to-noise ratio by $\sqrt{3} \approx 1.7$. In a tree with $k = 3$ and moderate effect sizes, power typically drops below the critical power level $1/k = 1/3$ by level~3 or~4, so the error load decreases from that point onward.

In the multi-site education policy trials that motivate this paper, $k \approx 3$, $L \approx 3$, and power is moderate. Under these conditions, $\sum_{\ell=2}^L G_\ell < 1$ and the unadjusted procedure controls FWER --- explaining why the simulation in Section~\ref{sec:strong_fwer_sim} of the main text shows $\mathrm{FWER} \leq \alpha$ without any adjustment.\corrflag{1}

\textbf{Conditional versus marginal power.}
Formula~\eqref{eq:power-decay} gives \emph{marginal} power: the probability
of rejection at depth~$\ell$ without conditioning on ancestor outcomes. The
proofs in this section use the \emph{conditional} power
$\theta_\ell = \Pr(R_\ell = 1 \mid \text{all ancestors rejected})$. When
ancestor power is high ($\theta_{\text{parent}} \to 1$), the conditioning
event is nearly certain, so the conditional and marginal powers are nearly
equal and~\eqref{eq:power-decay} is a good approximation to $\theta_\ell$.
When ancestor power is moderate, the conditional power can exceed the marginal
by a factor~$\rho$ that depends on the correlation between parent and child
test statistics. See Remark~\ref{rem:sensitivity} for the sensitivity
framework that absorbs this gap: any inflation is captured
by~$\kappa_{\max}$, and the robust gating condition shows it is tolerable
whenever the error load is small.
\end{remark}

\begin{corollary}[FWER Inflation with High Root Power]
\label{cor:fwer_inflation}

Consider a two-level tree ($L = 2$, root plus children) where the root is
false with root power $\pi_{\text{root}} = 1$ (the root always rejects), and
$e_2$ of the root's children are true nulls ($1 \leq e_2 \leq k-1$; the
intersection structure requires at least one child to be false). If tests at
sibling nodes are independent (as in block-randomized experiments with
independent blocks) and all children are tested at level
$\alpha_{\text{lev.\,2}}$, then:
  \begin{equation}
\FWER = 1 - (1-\alpha_{\text{lev.\,2}})^{e_2} \approx e_2 \cdot \alpha_{\text{lev.\,2}} \quad \text{for small } \alpha_{\text{lev.\,2}}.
\end{equation}
Without adjustment ($\alpha_{\text{lev.\,2}} = \alpha$), FWER exceeds
$\alpha$ whenever $e_2 \geq 2$, since
$1 - (1-\alpha)^{e_2} > \alpha$ for $e_2 \geq 2$.

For deeper trees ($L \geq 3$), false rejections at the first level can
expose additional true nulls at lower levels, so the total FWER can only be
higher than this expression. The formula is therefore a lower bound on the
FWER for trees with $L \geq 2$.
\end{corollary}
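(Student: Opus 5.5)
The plan is a direct computation that uses the fact that the root always rejects, so that no conditioning subtleties arise. Because $\pi_{\text{root}} = 1$, the event $\{R_{\text{root}} = 1\}$ has probability one. By the Stopping Rule (Condition~\ref{cond:ap-stopping}), every one of the $k$ children is therefore tested almost surely. Conditioning on "all ancestors rejected" is conditioning on a probability-one event, so for each true-null child the conditional rejection probability equals its marginal rejection probability. Hence the conditional-validity issue flagged in~\eqref{eq:cond-valid} and in the correction notice does not enter here. Marginal validity (Condition~\ref{cond:ap-valid}) alone pins down each null child's rejection probability at $\alpha_{\text{lev.\,2}}$, taking the tests to have exact size; I return to this point at the end.

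The first step is to identify the FWER event. With $L = 2$, the true nulls are exactly the $e_2$ null children; the root is false. So the FWER event is $\bigcup_{i \in \mathcal{B}_2}\{R_i = 1\}$, and Lemma~\ref{lem:boundary} is not even needed. I would then pass to the complement: $\FWER = 1 - \Pr\bigl(\bigcap_{i \in \mathcal{B}_2}\{R_i = 0\}\bigr)$. Independence of the sibling tests, which holds because they use disjoint, independently randomized blocks, factors this into $\prod_i (1 - \alpha_{\text{lev.\,2}}) = (1-\alpha_{\text{lev.\,2}})^{e_2}$, and the displayed identity follows. The approximation $\approx e_2\alpha_{\text{lev.\,2}}$ comes from the binomial expansion: $1-(1-a)^{e} = e a - \binom{e}{2}a^2 + \cdots$. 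Bonferroni's inequalities bracket it as $e a - \binom{e}{2}a^2 \le 1-(1-a)^e \le e a$, so the error is $O(e^2 a^2)$.

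For the inflation claim, I would show that $f(e) = 1-(1-\alpha)^e$ is strictly increasing in $e$ when $0<\alpha<1$, since $(1-\alpha)^e$ strictly decreases. Then $f(e_2) > f(1) = \alpha$ for $e_2 \ge 2$. For the deeper-tree statement ($L \ge 3$), I would argue by event inclusion. First, the level-2 tests and their outcomes are generated exactly as in the two-level case, because testing deeper nodes happens only afterward and does not change the level-2 $p$-values. Second, the event that some level-2 null is rejected is contained in the full FWER event. So $\FWER_{L\ge3} \ge 1-(1-\alpha_{\text{lev.\,2}})^{e_2}$, and the identity becomes a lower bound.

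The main obstacle is the equality sign rather than an inequality. Exact equality needs each null child's test to have size exactly $\alpha_{\text{lev.\,2}}$. The discrete randomization tests the paper uses are typically conservative, with sizes $a_i \le \alpha_{\text{lev.\,2}}$. I would handle this by first proving the general exact formula $\FWER = 1 - \prod_{i}(1-a_i)$ with $a_i = \Pr(R_i=1)$. I would then state the displayed expression under exact size (for example, randomized tie-breaking or continuous test statistics). In the conservative case, the expression serves as an upper bound at level 2. The "exceeds $\alpha$" claim then needs the milder condition $1-\prod_i(1-a_i) > \alpha$, which holds whenever the sizes are close enough to $\alpha_{\text{lev.\,2}}$.
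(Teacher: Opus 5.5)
Your proposal is correct and follows the paper's own informal argument, given in the remark on independent sibling randomizations and the paragraph after the corollary: with $\pi_{\text{root}} = 1$ every child is tested, and independence of the sibling tests turns the complement of the FWER event into $(1-\alpha_{\text{lev.\,2}})^{e_2}$. Your refinements are sound and sharpen that argument: equality does need exact size, since the paper's tightness remark concedes its discrete randomization tests are conservative; conditioning on a probability-one root rejection makes the conditional-validity caveat moot; and your event-inclusion argument is the right justification for $L \ge 3$, because the nulls ``exposed beneath'' a false rejection that the statement invokes are non-boundary and, by Lemma~\ref{lem:boundary}, cannot raise the FWER.
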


This corollary shows concretely what goes wrong when root power is high. When the root always rejects, the procedure tests every child of the root. If $e_2$ of those children are true nulls, each tested at level $\alpha$, then the probability of at least one false rejection is $1 - (1 - \alpha)^{e_2}$, which grows roughly in proportion to $e_2$.

\textbf{Example.} With $k = 5$ branches and $e_2 = 4$ true-null children (one child is non-null), each tested at $\alpha = 0.05$: $\text{FWER} = 1 - 0.95^4 = 0.186$, nearly four times the nominal rate. With $k = 10$ and $e_2 = 9$: $\text{FWER} = 1 - 0.95^9 = 0.370$. The adaptive adjustment of the next subsection corrects this by testing each child at $\alpha/k$ rather than $\alpha$, so that the total stays bounded.

\textbf{Simulation: the realized error load governs FWER control.}
Table~\ref{tab:load_threshold} illustrates Proposition~\ref{prop:strong_fwer_adj}
and the gap between the realized error load and its worst-case bound
$\sum_\ell G_\ell$. We take a binary tree of depth~3 with a single non-null path
from the root to one leaf, so every off-path node is a true null and the
realized error load is the path power summed over the one off-path boundary null
at each level. Drawing $p$-values as in the main text (null $p$-values uniform,
non-null calibrated to the power at each node's sample size,
Section~\ref{sec:strong_fwer_sim}), we sweep the effect size to walk the realized
error load $D$ across~1. The unadjusted FWER tracks the bound $\alpha D$ and crosses
$\alpha$ where $D = 1$, confirming that the realized error load --- not the worst-case
sum --- governs control. The worst-case $\sum_\ell G_\ell$ exceeds~1 well before
$D$ does, so the natural-gating condition $\sum_\ell G_\ell \le 1$ is sufficient
but conservative: a tree can violate it and still control the FWER, as
Scenarios~A and~B of the main text do. The adaptive $\alpha$-schedule of the next
subsection, which budgets for the worst case, controls the FWER throughout. This
abstract single-path design isolates the error load--FWER relationship and is not a
realistic configuration; the realistic concentrated-effect scenarios are in the
main text.

\begin{table}[tb]
\centering
\small
\setlength{\tabcolsep}{6pt}
\caption{The realized error load governs the unadjusted FWER. A single non-null path runs from the root to one leaf of a binary tree of depth~3 ($N_{\text{total}} = 2000$; 10,000 simulations per row); every off-path node is a true null. As the effect size $d$ grows, the realized error load $D$ --- the path power summed over the one off-path boundary null at each level --- sweeps across~1. The unadjusted FWER tracks the bound $\alpha D$ and crosses $\alpha = 0.05$ where $D = 1$; the adaptive $\alpha$-schedule controls the FWER throughout. The worst-case $\sum_\ell G_\ell$, which counts every node as a potential null, exceeds~1 well before $D$ does: the natural-gating condition $\sum_\ell G_\ell \le 1$ is sufficient but conservative. Boldface marks an FWER above $\alpha$ plus simulation error ($\approx 0.010$).}
\label{tab:load_threshold}
\begin{tabular}{r r r | r r | r}
\toprule
& & & \multicolumn{2}{c|}{Unadjusted} & Adaptive \\
\cmidrule(lr){4-5} \cmidrule(lr){6-6}
$d$ & $D$ & $\sum_\ell G_\ell$ & FWER & $\alpha D$ & FWER \\
\midrule
0.06 & 0.31 & 0.74 & 0.015 & 0.016 & 0.015 \\
0.08 & 0.55 & 1.41 & 0.026 & 0.028 & 0.026 \\
0.10 & 0.87 & 2.42 & 0.041 & 0.043 & 0.036 \\
0.12 & 1.23 & 3.76 & 0.060 & 0.061 & 0.039 \\
0.15 & 1.76 & 6.14 & \textbf{0.086} & 0.088 & 0.041 \\
0.20 & 2.41 & 9.79 & \textbf{0.117} & 0.120 & 0.044 \\
\bottomrule
\end{tabular}
\end{table}

\subsection{Adaptive Alpha Adjustment When Gating Is Insufficient}\label{sec:adaptive_alpha_adj}

When $\sum_{\ell=2}^{L} G_\ell > 1$ --- because the tree is wide, deep, and
root power is high --- the unadjusted procedure inflates FWER. The adaptive
$\alpha$-adjustment compensates by reducing the test size at each depth in
proportion to the error load. All results in this subsection provide
\textbf{upper bounds} on FWER; practitioners should first check whether their
tree falls in the natural gating regime (Proposition~\ref{prop:strong_fwer_adj})
using \ifblind the accompanying R package\else the \texttt{compute\_error\_load()} function in the \texttt{manytestsr}
package\fi (Section~\ref{sec:software}).

We present the FWER control results in order of increasing generality.
Case~(a) covers regular $k$-ary trees without pruning, where a telescoping
argument yields the tightest bound. Case~(b) extends to irregular trees
without pruning via a budget-weighted framework. Cases~(c) and~(d) handle
branch pruning --- where the $\alpha$-schedule is recomputed on the surviving
subtree after each depth --- for both regular and irregular trees.


\paragraph{Case (a): Regular $k$-ary trees without pruning.}

\begin{remark}[Adaptive Alpha Adjustment with Power Decay on Regular $k$-ary Trees]
\label{thm:root_adjust}
For regular $k$-ary trees, the adjusted significance levels at depth $\ell$ are:
\begin{equation}
\alpha_\ell^{\text{adj}} = \min\left\{\alpha, \frac{\alpha}{G_\ell}\right\}
= \min\left\{\alpha, \frac{\alpha}{k^{\ell-1} \cdot \prod_{j=1}^{\ell-1} \hat{\theta}_j}\right\}
\label{eq:optimal_alpha_d_supp}
\end{equation}
where $\hat{\theta}_j$ is the estimated conditional rejection probability at depth $j$, computed from the power decay model~\eqref{eq:power-decay} using an estimated effect size $\hat{\delta}$. This formula divides the nominal $\alpha$ by the error load $G_\ell$ at each depth. When $G_\ell = 0$, we interpret $\alpha_\ell^{\text{adj}} = \alpha$: that depth contributes no estimated error load, so no additional shrinkage is needed there.
\end{remark}

\begin{theorem}[FWER Control Under Adaptive Adjustment --- Regular $k$-ary Trees]
\label{thm:fwer_control_adaptive}
In a regular $k$-ary tree of depth $L$, Algorithm~\ref{alg:adaptive}
controls FWER at level $\alpha$ for any configuration of null hypotheses,
provided the power estimates satisfy $\hat{\theta}_j \geq \theta_j$ for all
non-null ancestors~$j$.

The regular-tree result holds because symmetry across nodes at the same
depth lets the depth-wise contributions telescope. That extra structure is
what permits the $\alpha$-schedule $\alpha_\ell^{\mathrm{adj}} =
\alpha/G_\ell$ without any budget allocation; in general trees, where no such
telescoping is available, one must fall back on the budget-weighted theorems
below.
\end{theorem}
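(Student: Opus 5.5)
The plan is to start from the level-wise decomposition of Theorem~\ref{thm:fwer_decomp}(b) and bound each depth's contribution by $\alpha\,e_\ell/k^{\ell-1}$. The telescoping identity in the main text then sums these contributions to at most $\alpha$. I take the proof to rely, as all the load-weighted bounds in this section do, on conditional validity~\eqref{eq:cond-valid} at every boundary null, together with the stated conservative-power assumption. The global null is disposed of first: if the root is true, Theorem~\ref{thm:fwer_decomp}(c) gives $\FWER \le \alpha_1 \le \alpha$ directly, since $\alpha_1^{\text{adj}} = \min\{\alpha, \alpha/G_1\} \le \alpha$. So assume the root is false. Then $\mathcal{B}_1 = \emptyset$, and every boundary null lies at depth $\ell \ge 2$.

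\emph{Step 1 (per-node bound).} Fix a boundary null $i \in \mathcal{B}_\ell$. Its ancestors are exactly one node at each depth $1,\dots,\ell-1$, and all of them are non-null. By Theorem~\ref{thm:fwer_decomp}(a) and the assumption $\hat\theta_j \ge \theta_j$ at non-null ancestors,
\[
  \Pr(R_i = 1) \le \alpha_\ell^{\text{adj}}\,\mathrm{PP}(i) \le \alpha_\ell^{\text{adj}} \prod_{j=1}^{\ell-1}\hat\theta_j = \alpha_\ell^{\text{adj}}\, \frac{G_\ell}{k^{\ell-1}}.
\]
Here I use that on a regular tree the depth-$j$ estimate $\hat\theta_j$ is common to all nodes at that depth, so the estimated path power is the same for every depth-$\ell$ node and equals $G_\ell/k^{\ell-1}$.

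\emph{Step 2 (the min).} I next show $\alpha_\ell^{\text{adj}}\, G_\ell \le \alpha$ in every case, which gives $\Pr(R_i=1) \le \alpha/k^{\ell-1}$.
\begin{itemize}
\item If $G_\ell \ge 1$, then $\alpha_\ell^{\text{adj}} = \alpha/G_\ell$, and the product is exactly $\alpha$.
\item If $0 < G_\ell < 1$, then $\alpha_\ell^{\text{adj}} = \alpha$, and $\alpha G_\ell < \alpha$.
\item If $G_\ell = 0$, the bound is $0$ under the convention of Remark~\ref{thm:root_adjust}.
\end{itemize}
Summing over $\mathcal{B}_\ell$ yields $\alpha_\ell^{\text{adj}} S_\ell \le \alpha\, e_\ell / k^{\ell-1}$.

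\emph{Step 3 (telescoping).} By the intersection structure, each non-null node at depth $\ell-1$ has exactly $k$ children, each either non-null or a boundary null. Conversely, every node at depth $\ell$ that is non-null or a boundary null has a non-null parent. Hence $m_\ell + e_\ell = k\, m_{\ell-1}$. Substituting this identity gives
\[
  \FWER \le \alpha \sum_{\ell=2}^{L} \frac{e_\ell}{k^{\ell-1}} = \alpha\left(m_1 - \frac{m_L}{k^{L-1}}\right) \le \alpha,
\]
since $m_1 = 1$ and $m_L \ge 0$. This holds for any configuration of null hypotheses.

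The main obstacle I expect is Step 1's use of conditional validity. This is the one place where the argument needs more than marginal validity: the factorization $\Pr(R_i=1) \le \alpha_i\,\mathrm{PP}(i)$ requires the null $p$-value to be super-uniform given that its ancestors were rejected. I would state this as an explicit hypothesis rather than derive it from Condition~\ref{cond:ap-valid}. Remark~\ref{rem:sensitivity} then shows that the same argument goes through with $\alpha$ replaced by $\kappa_{\max}\alpha$ if it fails by a bounded factor.

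A secondary point is to make sure the conservative-power assumption is applied per node, not per depth. On a regular tree with a single depth-wise estimate, $\hat\theta_j$ must dominate the conditional power of every non-null node at depth $j$. Only then does every boundary null at depth $\ell$ inherit the same upper bound $G_\ell/k^{\ell-1}$ that Step 3 needs to cancel against $k^{\ell-1}$.
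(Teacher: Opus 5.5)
Your proof is correct and follows the paper's own argument: the level-wise boundary-null bound from Theorem~\ref{thm:fwer_decomp}(b), the per-node cancellation $\alpha_\ell^{\mathrm{adj}}\,\mathrm{PP}(i) \le \alpha/k^{\ell-1}$ under conservative power, and the telescoping identity $m_\ell + e_\ell = k\,m_{\ell-1}$. You add two things the paper's write-up leaves out: an explicit case analysis of the $\min$ (the paper simply substitutes $\alpha/G_\ell$), and a statement of conditional validity~\eqref{eq:cond-valid} as a separate hypothesis, which is the caveat the correction notice raises about exactly this step.
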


\begin{proof}
By the Boundary-Node Lemma (Lemma~\ref{lem:boundary}) and the FWER
Decomposition (Theorem~\ref{thm:fwer_decomp}(b)):
\[
  \FWER \leq \sum_{\ell=2}^{L}
       \alpha_\ell^{\mathrm{adj}} \cdot S_\ell.
\]
(The $\ell = 1$ term vanishes because the root is either true --- giving
$\FWER \leq \alpha$ directly --- or false, contributing no boundary nulls at
depth~1 since the root has no parent.)
Also, if $G_\ell = 0$ at some depth, then $S_\ell \le G_\ell = 0$, so that
depth contributes nothing to the sum. We may therefore restrict attention to
depths with $G_\ell > 0$.

\medskip\noindent
\textbf{Contribution of each boundary null.}
Each boundary null $i$ at depth~$\ell$ has all non-null ancestors. In a
regular $k$-ary tree, all nodes at the same depth share the same estimated
path power, so $G_\ell = k^{\ell-1} \cdot \prod_{j=1}^{\ell-1} \hat{\theta}_j$
and the adaptive $\alpha$-schedule gives:
\[
  \alpha_\ell^{\mathrm{adj}} \cdot \mathrm{PP}(i)
  = \frac{\alpha}{k^{\ell-1} \prod_{j=1}^{\ell-1} \hat{\theta}_j}
    \cdot \prod_{j=1}^{\ell-1} \pi_j
  \leq \frac{\alpha}{k^{\ell-1}},
\]
where the inequality uses $\hat{\theta}_j \geq \pi_j$ for non-null
ancestors (the conservative-power assumption). The path powers cancel
cleanly because all nodes at depth~$\ell$ share the same path power.

\medskip\noindent
\textbf{The telescoping sum.}
Let $m_\ell$ be the number of non-null nodes at depth~$\ell$ and
$e_\ell = |\mathcal{B}_\ell|$ be the number of boundary nulls at
depth~$\ell$. In a regular $k$-ary tree, each non-null parent at
depth~$\ell-1$ has exactly $k$ children:
\[
  m_\ell + e_\ell = k \cdot m_{\ell-1}, \qquad \ell = 2, \ldots, L.
\]
Summing over boundary nulls:
\[
  \FWER \leq \alpha \sum_{\ell=2}^{L} \frac{e_\ell}{k^{\ell-1}}.
\]
Substituting $e_\ell = k \cdot m_{\ell-1} - m_\ell$:
\begin{align*}
  \sum_{\ell=2}^{L} \frac{e_\ell}{k^{\ell-1}}
  &= \sum_{\ell=2}^{L} \frac{k \cdot m_{\ell-1} - m_\ell}{k^{\ell-1}}
   = \sum_{\ell=2}^{L} \frac{m_{\ell-1}}{k^{\ell-2}}
     - \sum_{\ell=2}^{L} \frac{m_\ell}{k^{\ell-1}}.
\end{align*}
Re-indexing the first sum with $s = \ell - 1$ (so $s$ runs from $1$ to
$L-1$):
\begin{align*}
  &= \sum_{s=1}^{L-1} \frac{m_s}{k^{s-1}}
     - \sum_{s=2}^{L} \frac{m_s}{k^{s-1}}
   = \frac{m_1}{k^0} - \frac{m_L}{k^{L-1}}
   = m_1 - \frac{m_L}{k^{L-1}}.
\end{align*}
Since $m_1 \leq 1$ (the root is at most one node) and $m_L / k^{L-1} \geq 0$:
\[
  m_1 - \frac{m_L}{k^{L-1}} \leq 1.
\]
Therefore $\FWER \leq \alpha \cdot 1 = \alpha$.

If the root is a true null ($m_1 = 0$), then the global null holds and
$\FWER = \Pr(R_{\text{root}} = 1) \leq \alpha$.
\end{proof}


\paragraph{Case (b): Irregular trees without pruning.}
In irregular trees, different nodes at the same depth can have different
numbers of children and different path powers. The telescoping identity
$m_\ell + e_\ell = k \cdot m_{\ell-1}$ no longer holds, and the per-depth
formula $\alpha_\ell = \alpha / G_\ell$ can fail to control FWER. The
budget-weighted framework replaces the structural cancellation with an
explicit allocation of error budget across depths.

\begin{theorem}[FWER Control Under Budget-Weighted Adaptive Alpha --- General Trees]
\label{thm:fwer_budget}
Consider a rooted tree $T$ (regular or irregular) satisfying the Stopping
Rule (Condition~\ref{cond:ap-stopping}), Valid Tests
(Condition~\ref{cond:ap-valid}), and the intersection structure. Choose
depth weights $w_2, \ldots, w_L \geq 0$ with
$\sum_{\ell=2}^{L} w_\ell \leq 1$. At each depth~$\ell$, test every
reached node at level
\[
  \alpha_\ell^{\mathrm{adj}}
  = \min\!\left\{\alpha,\;
    \frac{w_\ell \cdot \alpha}{G_\ell}\right\},
  \qquad
  G_\ell = \sum_{\substack{i \in V(T) \\ d(i)=\ell}}
    \widehat{\mathrm{PP}}(i).
\]
When $G_\ell = 0$, set $\alpha_\ell^{\mathrm{adj}} = \alpha$.
If $\hat\theta_j \geq \theta_j$ for every non-null node~$j$ (power is not
underestimated), then
\[
  \FWER \leq \alpha
\]
for any configuration of true and false nulls.
\end{theorem}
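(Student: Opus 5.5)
Combine the FWER Decomposition (Theorem~\ref{thm:fwer_decomp}(b)) with the conservative-power comparison used in the proof of Proposition~\ref{prop:strong_fwer_adj}, then use the budget weights in place of the regular-tree telescoping.

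\textbf{Step 1: dispose of the global null.} If the root is true, Theorem~\ref{thm:fwer_decomp}(c) gives $\FWER \le \alpha_1 \le \alpha$. The root is tested at $\alpha$, since the schedule only touches depths $\ell \ge 2$.

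\textbf{Step 2: reduce to boundary nulls.} Otherwise the root is false, so $\mathcal{B}_1 = \emptyset$. By the Boundary-Node Lemma~\ref{lem:boundary} and the decomposition, which rests on conditional validity~\eqref{eq:cond-valid} at each boundary null,
\[
  \FWER \le \sum_{\ell=2}^{L} \alpha_\ell^{\mathrm{adj}} S_\ell,
  \qquad
  S_\ell = \sum_{i\in\mathcal{B}_\ell} \mathrm{PP}(i).
\]
The per-depth level $\alpha_\ell^{\mathrm{adj}}$ is a deterministic constant fixed before testing, because $G_\ell$ is computed from design quantities. So Step~2 of Proposition~\ref{prop:fwer} applies verbatim with $\alpha_i = \alpha_\ell^{\mathrm{adj}}$.

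\textbf{Step 3: compare path powers.} Every ancestor of a boundary null is non-null, so $\mathrm{PP}(i) = \prod_{j\in\anc(i)} \pi_j \le \widehat{\mathrm{PP}}(i)$ by the hypothesis $\hat\theta_j \ge \theta_j$. Since $\mathcal{B}_\ell$ is contained in the set of all depth-$\ell$ nodes, summing gives $S_\ell \le G_\ell$. In particular, a depth with $G_\ell = 0$ has $S_\ell = 0$ and contributes nothing whatever its level; this justifies the convention $\alpha_\ell^{\mathrm{adj}} = \alpha$ there.

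\textbf{Step 4: spend the budget.} For a depth with $G_\ell > 0$,
\[
  \alpha_\ell^{\mathrm{adj}} S_\ell \le \frac{w_\ell \alpha}{G_\ell}\, G_\ell = w_\ell \alpha .
\]
Here the $\min$ with $\alpha$ only lowers the left side. Summing over depths gives $\FWER \le \alpha \sum_{\ell\ge 2} w_\ell \le \alpha$.

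\textbf{Main obstacle.} The algebra is routine. The delicate point is Step~2: the bound $\Pr(R_i = 1 \mid \text{ancestors rejected}) \le \alpha_\ell^{\mathrm{adj}}$ at a boundary null. This needs the conditional super-uniformity~\eqref{eq:cond-valid} to hold at the \emph{adjusted} cutoff $\alpha_\ell^{\mathrm{adj}}$, not just at $\alpha$. I would state this explicitly as the form of conditional validity being assumed, and note that it is automatic only for conditionally super-uniform $p$-values. I would also confirm that the schedule is non-random, so that no predictability issue arises; this is exactly what distinguishes this static case from the pruned case.
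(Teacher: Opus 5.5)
Your proof is correct and follows the paper's argument step for step. You dispose of the global null, reduce to boundary nulls via Lemma~\ref{lem:boundary} and Theorem~\ref{thm:fwer_decomp}(b), bound $S_\ell \le G_\ell$ by the conservative-power assumption, and cancel $G_\ell$ against the budget weights. Your caveat is also well placed: the paper uses conditional validity~\eqref{eq:cond-valid} at the adjusted cutoff only implicitly, through the decomposition theorem, while the statement lists only marginal Valid Tests, and the paper's correction notice flags exactly this step as needing an explicit conditional super-uniformity assumption.
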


The intuition: the budget framework allocates a fraction $w_\ell$ of the
total $\alpha$ ``budget'' to each depth. At depth~$\ell$, the adjusted test
size $\alpha_\ell^{\mathrm{adj}}$ is chosen so that even in the worst case
(all nodes at depth~$\ell$ are boundary nulls), the FWER contribution from
that depth is at most $w_\ell \cdot \alpha$. Summing across depths, the
total contribution is at most $\alpha \sum w_\ell \leq \alpha$.

\begin{proof}
If the root is true, the intersection structure makes every node true. The
Stopping Rule prevents any descendant from being tested unless the root is
rejected, so $\FWER = \Pr(R_{\text{root}} = 1) \leq \alpha$.

Assume the root is false. By the Boundary-Node
Lemma~(Lemma~\ref{lem:boundary}) and the FWER Decomposition
(Theorem~\ref{thm:fwer_decomp}(b)):
\[
  \FWER \leq \sum_{\ell=2}^{L}
       \alpha_\ell^{\mathrm{adj}} \cdot S_\ell.
\]
For each boundary null $i \in \mathcal{B}_\ell$, all ancestors are non-null.
The conservative-power assumption gives
$\mathrm{PP}(i) \leq \widehat{\mathrm{PP}}(i)$. Since
$\mathcal{B}_\ell \subseteq \{i : d(i) = \ell\}$:
\[
  S_\ell
  \leq \sum_{i \in \mathcal{B}_\ell} \widehat{\mathrm{PP}}(i)
  \leq G_\ell.
\]
If $G_\ell = 0$, then this bound forces $S_\ell = 0$, so depth~$\ell$
contributes nothing to the FWER sum. For depths with $G_\ell > 0$, the
definition of $\alpha_\ell^{\mathrm{adj}}$ gives
$\alpha_\ell^{\mathrm{adj}} \leq w_\ell \cdot \alpha / G_\ell$. Therefore
\[
  \FWER \leq \sum_{\ell:\, G_\ell > 0}
    \frac{w_\ell \cdot \alpha}{G_\ell} \cdot G_\ell
  = \alpha \sum_{\ell:\, G_\ell > 0} w_\ell
  \leq \alpha. \qedhere
\]
\end{proof}

\begin{remark}[Why the per-depth formula fails for irregular trees]
\label{rem:irregular_tree_failure}
In a regular $k$-ary tree, all nodes at depth~$\ell$ share the same path
power, so the ratio $S_\ell / G_\ell = e_\ell / k^{\ell-1}$ is a pure count
ratio that telescopes. In an irregular tree, different nodes at the same
depth have different path powers. Boundary nulls can be concentrated in
branches with high path power, making $S_\ell / G_\ell$ exceed what the
count ratio would suggest. In the worst case,
$\sum_{\ell=2}^{L} S_\ell / G_\ell > 1$, and the per-depth formula
$\alpha_\ell = \alpha / G_\ell$ does not control FWER.

\textbf{Counterexample.} Consider a 3-depth tree where the root has 2
children. The left child (non-null, $\hat{\theta}_{\text{left}} = 0.8$) has
3 children; the right child (null, $\hat{\theta}_{\text{right}} = 0.05$) has
1 child. With $\hat{\theta}_{\text{root}} = 0.9$: the boundary nulls at
depth~2 (two of the left child's children) have high path power
$0.9 \times 0.8 = 0.72$ each, while the null right child's descendant
adds little to $G_2$. Across all depths,
$\sum S_\ell / G_\ell = 1.153 > 1$, giving
$\FWER \leq 1.153\alpha > \alpha$. The budget-weighted formula
(Theorem~\ref{thm:fwer_budget}) with $\sum w_\ell \leq 1$ corrects this by
construction.
\end{remark}


\paragraph{Cases (c) and (d): Trees with branch pruning.}
In the preceding results, the error load $G_\ell$ and the significance
levels $\alpha_\ell^{\mathrm{adj}}$ are computed from the full tree before
testing begins. A natural refinement is \emph{branch pruning}: after
testing at depth~$\ell - 1$, remove branches whose parents were not
rejected and recompute the error load on the surviving subtree. This can
yield less conservative significance levels at deeper depths, since the
surviving tree is narrower than the full tree.

Branch pruning changes the mathematical structure. The denominators become
data-dependent: let $\mathcal{C}_\ell$ denote the set of nodes at
depth~$\ell$ whose parents were all rejected (the ``testable'' or
``surviving'' nodes), and define the pruned error load
\[
  D_\ell = \sum_{i \in \mathcal{C}_\ell} \widehat{\mathrm{PP}}(i).
\]
Because $\mathcal{C}_\ell$ depends on test outcomes at shallower depths,
$D_\ell$ is random: it depends on the block randomizations through
the rejection decisions at depths $1, \ldots, \ell - 1$.

The telescoping proof of Theorem~\ref{thm:fwer_control_adaptive} and
the static budget proof of Theorem~\ref{thm:fwer_budget} both require
fixed denominators and do not extend to the pruned setting. The fix is
to use \emph{predictable} weights --- weights that can depend on testing
history but are determined before each depth's tests are run.

\begin{correctionbox}
Theorem~\ref{thm:fwer_budget_pruning} and Corollary~\ref{cor:switching}, both below, are false as stated
and are withdrawn (Correction notice, item~2, at the front of the paper). An
exact counterexample --- binomial enumeration, no Monte Carlo error ---
attains $\FWER = 0.063$ at $\alpha = 0.05$ with every stated hypothesis
satisfied, including exact conditional validity. The statements are retained
unchanged for the record; the proof's failing step, and a repaired schedule,
are treated in the fully corrected version.
\end{correctionbox}

\begin{theorem}[FWER Control Under Branch Pruning with Predictable Budget Weights]
\label{thm:fwer_budget_pruning}
Consider a rooted tree $T$ (regular or irregular) satisfying the Stopping
Rule, Valid Tests, and the intersection structure. At each
depth~$\ell \in \{2, \ldots, L\}$, let $\mathcal{C}_\ell$ be the set of
surviving nodes (children of rejected parents) and
$D_\ell = \sum_{i \in \mathcal{C}_\ell} \widehat{\mathrm{PP}}(i)$ the
pruned error load. Choose \emph{predictable} weights
$w_2, \ldots, w_L \geq 0$ --- where $w_\ell$ depends only on the testing
history $\mathcal{H}_{\ell-1}$ (test outcomes through depth $\ell - 1$),
not on depth-$\ell$ results --- satisfying
\[
  \sum_{\ell=2}^{L} w_\ell \leq 1 \quad \text{almost surely.}
\]
At each depth~$\ell$, test every surviving node at level
\[
  \alpha_\ell^{\mathrm{adj}}
  = \min\!\left\{\alpha,\;
    \frac{w_\ell \cdot \alpha}{D_\ell}\right\}.
\]
If $D_\ell = 0$, set $\alpha_\ell^{\mathrm{adj}} = 0$ (no nodes to test).
If $\hat\theta_j \geq \theta_j$ for every non-null node~$j$, then
\[
  \FWER \leq \alpha
\]
for any configuration of true and false nulls.
\end{theorem}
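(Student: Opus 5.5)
The natural route is to copy the proof of Theorem~\ref{thm:fwer_budget}, but to do the accounting one depth at a time and conditionally on the history $\mathcal{H}_{\ell-1}$. That conditioning is what replaces the fixed denominators $G_\ell$ with the random ones $D_\ell$.

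First, the global-null case is disposed of exactly as before. If the root is true, the intersection structure makes every node true, the Stopping Rule gates the whole tree on the root, and $\FWER=\Pr(R_{\text{root}}=1)\le\alpha$. So assume the root is false. Lemma~\ref{lem:boundary} plus the union bound then give
\[
  \FWER\le\sum_{\ell=2}^{L}\ \sum_{i\in\mathcal{B}_\ell}\Pr(R_i=1),
\]
and a boundary null $i$ at depth $\ell$ can be rejected only on the event $\{i\in\mathcal{C}_\ell\}$. That event, together with $w_\ell$, $D_\ell$ and hence $\alpha_\ell^{\mathrm{adj}}$, is measurable with respect to $\mathcal{H}_{\ell-1}$. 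I would therefore strengthen \eqref{eq:cond-valid} to the history-conditional form
\[
  \Pr(R_i=1\mid\mathcal{H}_{\ell-1})\le\alpha_\ell^{\mathrm{adj}}\,\mathbb{I}\{i\in\mathcal{C}_\ell\},
\]
which is the form genuinely needed once the level is random. This yields the per-depth bound
\[
  \sum_{i\in\mathcal{B}_\ell}\Pr(R_i=1)\le \E\!\left[\frac{w_\ell\,\alpha}{D_\ell}\,\bigl|\mathcal{B}_\ell\cap\mathcal{C}_\ell\bigr|\right].
\]

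The key step is then to show $|\mathcal{B}_\ell\cap\mathcal{C}_\ell|\le D_\ell$, or at least that its expectation against $w_\ell/D_\ell$ is at most $\E[w_\ell]$. Summing over $\ell$ and using $\sum_\ell w_\ell\le1$ almost surely would then finish the proof, playing the role that $S_\ell\le G_\ell$ plays in Theorem~\ref{thm:fwer_budget}.

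I expect this step to be the main obstacle, and I suspect it is where the argument breaks. In the static proof, the path-power factor $\mathrm{PP}(i)$ comes from the probability of reaching $i$. That factor is what cancels against $\widehat{\mathrm{PP}}(i)$ in $G_\ell$, via the conservative-power assumption. Here, by contrast, we have already conditioned on having reached $i$, so each surviving boundary null contributes a full indicator $1$ to the numerator. It contributes only $\widehat{\mathrm{PP}}(i)\le1$ to $D_\ell$. Nothing forces $|\mathcal{B}_\ell\cap\mathcal{C}_\ell|\le D_\ell$; for example, a single surviving null sibling with small estimated path power receives a level near $w_\ell\alpha/\widehat{\mathrm{PP}}(i)$.

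My first attempt to rescue the bound would be to keep the reaching probability outside the conditioning. I would write $\E[\mathbb{I}\{i\in\mathcal{C}_\ell\}/D_\ell]$ and try to bound it by $\mathrm{PP}(i)/\E[D_\ell]$ using some negative-association or Jensen-type inequality. Such an inequality is not generally valid, because $D_\ell$ is small precisely on the histories in which few branches survive, and those are the histories in which the surviving null's level is inflated. My second attempt would be to restrict $w_\ell$ or to replace $D_\ell$ by $\max\{D_\ell,|\mathcal{C}_\ell\cap\mathcal{H}_0|\}$. The first option still fails the counting, and the second is not computable. So I would check a small two-depth example numerically before trusting any version of this inequality.
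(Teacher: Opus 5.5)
Your diagnosis is correct, and it matches the paper's own verdict. The correction notice (item~2) withdraws this theorem as false, citing an exact counterexample with $\FWER=0.063$ at $\alpha=0.05$. The paper's retained proof breaks at exactly the step you flag. It conditions on $\mathcal{H}_{\ell-1}$, and its Step~1 bounds $\Pr(V_\ell\mid\mathcal{H}_{\ell-1})$ by $\alpha_\ell^{\mathrm{adj}}\sum_{i\in\mathcal{E}_\ell}\prod_{a\in\anc(i)}\theta_a$, where $\mathcal{E}_\ell=\mathcal{B}_\ell\cap\mathcal{C}_\ell$ in your notation. For $i\in\mathcal{E}_\ell$, however, the history already records that every ancestor of $i$ rejected, so the valid conditional bound is $\alpha_\ell^{\mathrm{adj}}\,|\mathcal{E}_\ell|$. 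The path product is the probability of the conditioning event, counted a second time. Step~3 then cancels $D_\ell$ against a numerator that is too small. In Theorem~\ref{thm:fwer_budget} the denominator $G_\ell$ is deterministic, so $\mathrm{PP}(i)$ stays outside any conditioning and cancels against $\widehat{\mathrm{PP}}(i)$, as you describe. Your history-conditional form of \eqref{eq:cond-valid} is the right hypothesis once levels are random, but it cannot rescue this result.

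What your proposal leaves unsettled is whether the expected form of the key step fails, and your lone-sibling example does not show that it does. With constant weights, a lone exposed null $i$ contributes at most $w_\ell\alpha\Pr(i\in\mathcal{C}_\ell)/\widehat{\mathrm{PP}}(i)=w_\ell\alpha\,\mathrm{PP}(i)/\widehat{\mathrm{PP}}(i)\le w_\ell\alpha$, because $D_\ell\ge\widehat{\mathrm{PP}}(i)$ whenever $i$ survives; the reaching probability still compensates. The failure comes from aggregation: many exposed nulls survive through different, independently rejecting parents, while $D_\ell$ grows by only $\widehat{\mathrm{PP}}$ per survivor.

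Here is a concrete instance. Take $\alpha=0.05$ and a root with $\theta_{\mathrm{root}}=\hat\theta_{\mathrm{root}}=1$. Give it $m=30$ non-null children that each reject independently with $\theta=\hat\theta=0.1$ at their level $\alpha_2=0.025/30$. Give each child one non-null and one null leaf, with independent uniform null $p$-values, and set $w_2=w_3=1/2$. If $R\sim\mathrm{Bin}(30,0.1)$ children reject, then $D_3=0.2R$ while $R$ null leaves survive. Hence $\E\bigl[|\mathcal{B}_3\cap\mathcal{C}_3|/D_3\bigr]=\Pr(R\ge1)/0.2\approx4.8$ (reading $0/0$ as $0$), where your step needs at most $1$. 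The surviving null leaves are tested at $\min\{0.05,\,0.125/R\}$, so $\FWER=\sum_{r\ge1}\Pr(R=r)\bigl[1-(1-\min\{0.05,0.125/r\})^r\bigr]\approx0.0995$.

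Every hypothesis holds here, including exact validity conditional on the whole history, so the theorem itself is false, not just this proof. The counting step does go through if survivors are counted rather than weighted, with $|\mathcal{C}_\ell|$ in place of $D_\ell$. That is a Bonferroni-over-survivors schedule, however, and it gives up the path-power discount.
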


\begin{proof}
If the root is true, $\FWER = \Pr(R_{\text{root}} = 1) \leq \alpha$ as
before. Assume the root is false. Let $V_\ell$ denote the event that at
least one false rejection occurs at depth~$\ell$, and let
$\mathcal{E}_\ell \subseteq \mathcal{C}_\ell$ be the exposed true nulls
(boundary nulls that survive pruning) at depth~$\ell$.

\medskip\noindent
\textbf{Step 1: Conditional bound at each depth.}
Conditioning on the testing history $\mathcal{H}_{\ell-1}$, the union bound
and test validity give:
\[
  \Pr(V_\ell \mid \mathcal{H}_{\ell-1})
  \leq \sum_{i \in \mathcal{E}_\ell} \alpha_\ell^{\mathrm{adj}}
       \cdot \prod_{a \in \anc(i)} \theta_a.
\]

\medskip\noindent
\textbf{Step 2: Substitute the testing level.}
Using $\alpha_\ell^{\mathrm{adj}} \leq w_\ell \cdot \alpha / D_\ell$:
\[
  \Pr(V_\ell \mid \mathcal{H}_{\ell-1})
  \leq \frac{w_\ell \cdot \alpha}{D_\ell}
       \sum_{i \in \mathcal{E}_\ell}
       \prod_{a \in \anc(i)} \theta_a.
\]

\medskip\noindent
\textbf{Step 3: Conservative-power cancellation.}
For each exposed null $i$, all ancestors are non-null, so the
conservative-power assumption gives
$\prod_{a \in \anc(i)} \theta_a \leq \widehat{\mathrm{PP}}(i)$.
Since $\mathcal{E}_\ell \subseteq \mathcal{C}_\ell$:
\[
  \sum_{i \in \mathcal{E}_\ell}
  \prod_{a \in \anc(i)} \theta_a
  \leq \sum_{i \in \mathcal{C}_\ell} \widehat{\mathrm{PP}}(i)
  = D_\ell.
\]
The $D_\ell$ cancels, leaving
$\Pr(V_\ell \mid \mathcal{H}_{\ell-1}) \leq \alpha \cdot w_\ell$.

\medskip\noindent
\textbf{Step 4: Aggregate across depths.}
Since the conditional bound holds for every realization of
$\mathcal{H}_{\ell-1}$, the law of total expectation gives
$\Pr(V_\ell) = \E[\Pr(V_\ell \mid \mathcal{H}_{\ell-1})]
\leq \alpha \cdot \E[w_\ell]$.
The union bound and linearity of expectation then yield:
\[
  \FWER = \Pr\!\left(\bigcup_{\ell=2}^{L} V_\ell\right)
  \leq \sum_{\ell=2}^{L} \Pr(V_\ell)
  \leq \alpha \sum_{\ell=2}^{L} \E[w_\ell]
  = \alpha \cdot \E\!\left[\sum_{\ell=2}^{L} w_\ell\right]
  \leq \alpha \cdot 1 = \alpha.
  \qedhere
\]
\end{proof}

\begin{remark}[Relationship among the three adaptive theorems]
The three adaptive results are nested. Theorem~\ref{thm:fwer_budget}
(static budget weights) is the special case of
Theorem~\ref{thm:fwer_budget_pruning} where the tree is not pruned
($\mathcal{C}_\ell$ equals the full set of nodes at depth~$\ell$, so
$D_\ell = G_\ell$) and the weights are deterministic. Theorem~\ref{thm:fwer_control_adaptive} (regular-tree telescoping) is a
strictly tighter result for regular trees: it requires no weight allocation
at all, because the structural identity $m_\ell + e_\ell = k \cdot m_{\ell-1}$
ensures the depth-wise contributions sum to at most~1. Where the telescoping
argument applies, it dominates the budget framework; the budget framework
extends the guarantee to settings (irregular trees, branch pruning) where
telescoping is not available.\corrflag{2}
\end{remark}

\begin{correctionbox}
This corollary is withdrawn because it inherits the counterexample to
Theorem~\ref{thm:fwer_budget_pruning} above (Correction notice, item~2).
\end{correctionbox}

\begin{corollary}[Switching to nominal $\alpha$ after pruning]
\label{cor:switching}
Suppose testing has proceeded through depth~$s - 1$ under the budget
framework with remaining budget $B_s = 1 - \sum_{\ell=2}^{s-1} w_\ell$.
If the surviving subtree's error load at depth~$s$ --- the depth-wise loads
$D_\ell$ recomputed on the branches still alive, assuming no further pruning ---
satisfies
\[
  \sum_{\ell=s}^{L} D_\ell \leq B_s,
\]
then setting $\alpha_\ell^{\mathrm{adj}} = \alpha$ for all $\ell \geq s$ is valid.
Computed this way each $D_\ell$ is fixed once depth~$s$ is reached --- it uses only
the surviving subtree, not the depth-$\ell$ test outcomes --- so the condition is
checkable at depth~$s$ and upper-bounds every error load the remaining tests could
realize through further pruning. The idea: set $w_\ell = D_\ell$ for the remaining
depths, so $\alpha_\ell = \alpha \cdot D_\ell / D_\ell = \alpha$; these weights are
predictable and sum to $\sum_{\ell=s}^{L} D_\ell \leq B_s$, which does not exceed
the remaining budget, so Theorem~\ref{thm:fwer_budget_pruning} still
guarantees $\FWER \leq \alpha$.

In words: once pruning has narrowed the surviving tree enough that the
remaining error load fits within the remaining budget, the procedure can
stop adjusting and test at full nominal~$\alpha$. This is where the
pruning benefit is sharpest --- natural gating on the surviving subtree
provides the protection that explicit adjustment provided on the full tree.
\end{corollary}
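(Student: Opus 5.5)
The plan is to reduce the corollary to Theorem~\ref{thm:fwer_budget_pruning} by choosing a specific continuation of the weight sequence. Up to depth $s-1$ I keep whatever predictable weights $w_2,\ldots,w_{s-1}$ the procedure has used. For $\ell\ge s$ I set $w_\ell := D_\ell$, where $D_\ell$ is the load recomputed on the subtree alive at depth $s$, assuming no further pruning.

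The steps, in order, are as follows.
\begin{enumerate}
\item Show these weights are predictable. Each $D_\ell$ with $\ell\ge s$ is a function of the surviving set $\mathcal{C}_s$, which is $\mathcal{H}_{s-1}$-measurable. So $w_\ell$ is $\mathcal{H}_{\ell-1}$-measurable for every $\ell\ge s$.
\item Check the budget constraint. The hypothesis $\sum_{\ell\ge s}D_\ell\le B_s$ gives $\sum_{\ell=2}^L w_\ell\le 1$ almost surely.
\item Check that these weights reproduce nominal testing. Further pruning can only shrink the realized pruned loads, so the realized $D^{\mathrm{real}}_\ell\le D_\ell=w_\ell$. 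Hence $\min\{\alpha,w_\ell\alpha/D^{\mathrm{real}}_\ell\}=\alpha$, and the realized levels coincide with nominal $\alpha$. (When $D^{\mathrm{real}}_\ell=0$ no nodes remain, so the level is irrelevant.)
\item Invoke Theorem~\ref{thm:fwer_budget_pruning} to conclude $\FWER\le\alpha$.
\end{enumerate}

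The main obstacle is that the last step leans entirely on Theorem~\ref{thm:fwer_budget_pruning}, which the correction notice withdraws. I expect the reduction itself to be sound, and the failure to come through Step~1 of that theorem's proof. There, conditional on $\mathcal{H}_{\ell-1}$, every surviving node is reached with probability one. The correct conditional bound is therefore $\Pr(V_\ell\mid\mathcal{H}_{\ell-1})\le \alpha_\ell^{\mathrm{adj}}\,|\mathcal{E}_\ell|$, not $\alpha_\ell^{\mathrm{adj}}\sum_{i\in\mathcal{E}_\ell}\mathrm{PP}(i)$. The path-power discount is being counted a second time, once inside the conditioning and once in the product. With that corrected, the $D_\ell$ cancellation in Step~3 disappears.

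So the first thing I would try is to bypass the theorem and bound the error directly. I would split $\FWER\le\Pr(\text{false rejection at depth}<s)+\Pr(\text{first false rejection at depth}\ge s)$. For the second term I would apply Theorem~\ref{thm:fwer_decomp} and Lemma~\ref{lem:boundary} to the surviving subtree at nominal $\alpha$. This gives $\E\bigl[\alpha\sum_{\ell\ge s}S_\ell\mid\mathcal{H}_{s-1}\bigr]$, where the path products now start from depth $s$, not from the root. I would then try to show these are dominated by the $D_\ell$ computed from the root, as in Proposition~\ref{prop:strong_fwer_adj}.

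I expect exactly this comparison to break. Conditionally, the ancestors above depth $s$ have already rejected, so their factors equal $1$, not $\hat\theta_j$. The surviving-subtree load would then need to be recomputed with those factors set to $1$. If the inequality fails even after that recomputation, I would look for a small two-level counterexample. An exact enumeration of such a case, like the one in the correction notice, would settle that the corollary cannot hold as stated.
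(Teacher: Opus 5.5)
Your reduction is, step for step, the argument the paper itself gives inside the corollary. You set $w_\ell=D_\ell$ for $\ell\ge s$, note that these weights are $\mathcal{H}_{s-1}$-measurable and sum to at most $B_s$, observe that the levels collapse to $\alpha$, and invoke Theorem~\ref{thm:fwer_budget_pruning}. As a reduction this is fine. But the paper withdraws both that theorem and this corollary as false: an exact counterexample attains $\FWER=0.063$ at $\alpha=0.05$. So there is no proof to complete, and your suspicion is correct. Your localization of the failure is also right, and this version of the paper does not spell it out. Given $\mathcal{H}_{\ell-1}$, the set $\mathcal{C}_\ell$ is fixed and every node in it has already been reached. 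Step~1 can therefore only assert $\Pr(V_\ell\mid\mathcal{H}_{\ell-1})\le\alpha_\ell^{\mathrm{adj}}\,|\mathcal{E}_\ell|$, and even that needs null $p$-values that are super-uniform conditionally on the whole history. The factor $\prod_{a\in\anc(i)}\theta_a$ discounts a second time for an event already conditioned on. Without it, nothing cancels against $D_\ell$.

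What your plan leaves open is the refutation itself. It is needed, because a false Theorem~\ref{thm:fwer_budget_pruning} only invalidates the proof, not the corollary. A direct counterexample is short.
\begin{itemize}
\item Take $\alpha=0.05$, $L=3$, and a non-null root that always rejects ($\theta_1=\hat\theta_1=1$). Give it $30$ non-null children $A_1,\dots,A_{30}$, each with one non-null leaf and one null leaf $b_i$.
\item Let each $A_i$ have $p$-value $0$ with probability $0.1$ and $1$ otherwise, independently. Then $\theta_{A_i}=\hat\theta_{A_i}=0.1$ at whatever level depth~2 uses.
\item Let the $b_i$ have independent uniform $p$-values, so conditional validity holds exactly.
\item Take $w_2=0.01$ (depth~2 has no true nulls), so $B_3=0.99$.
\end{itemize}
If $N\sim\mathrm{Bin}(30,0.1)$ of the $A_i$ reject, the surviving load is $D_3=2N(0.1)=0.2N$. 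The switching condition holds exactly when $N\le 4$, and the $N$ exposed nulls are then tested at $\alpha$. Hence
\[
  \FWER\;\ge\;\sum_{n=1}^{4}\Pr(N=n)\bigl\{1-(0.95)^{n}\bigr\}\;\approx\;0.096 .
\]
The check charges each survivor $\widehat{\mathrm{PP}}=0.1$ after survival has already occurred. That understates the exposure by a factor $1/\hat\theta_{A_i}=10$, which is your double discount made concrete.

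Your proposed repair, restarting path products at depth~$s$, gives a load of $2N>B_3$ for every $N\ge 1$, so it never switches here. To make the repair a theorem, though, two other pieces must also run under schedules that are themselves valid: the depths before $s$, and the continuation when the check fails. Deterministic weights as in Theorem~\ref{thm:fwer_budget} would do. The pruned schedule of Theorem~\ref{thm:fwer_budget_pruning} cannot bound the first term of your split.
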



\paragraph{Empirical demonstration of Regimes~3 and~4.}
Table~\ref{tab:regimes_demo} verifies the two counterexamples by simulation.
For Scenario~A (irregular tree from
Remark~\ref{rem:irregular_tree_failure}), the naive per-depth formula
$\alpha_\ell = \alpha / G_\ell$ gives empirical $\FWER \approx 0.055 > \alpha$,
while the budget-weighted formula
$\alpha_\ell = w_\ell\,\alpha / G_\ell$ gives $\FWER \approx 0.027$. For
Scenario~B (regular $k=3$, $L=3$ tree with a single non-null path), the
naive pruned formula $\alpha_\ell = \alpha / D_\ell$ gives empirical
$\FWER \approx 0.065 > \alpha$, the predictable budget-weighted formula
gives $\FWER \approx 0.033$, and the switching corollary likewise controls.\corrflag{2}
The empirical inflations of the naive procedures match the theoretical
bounds of $1.153\alpha$ and $4\alpha/3$ within Monte Carlo error
(SE $\approx 0.0007$ at $n_{\text{sim}} = 10^5$). Code:
\texttt{Analysis/sim\_regimes\_demo.R}.

\begin{correctionbox}
The ``Budget pruned'' and ``Switching corollary'' rows for Scenario~B in
Table~\ref{tab:regimes_demo} run the schedules of
Theorem~\ref{thm:fwer_budget_pruning} and Corollary~\ref{cor:switching},
both withdrawn (Correction notice, item~2). The caption and the ``Theory
bound'' column state a guarantee of $\FWER \le \alpha$ that no longer holds:
an exact counterexample attains $\FWER = 0.063$ at $\alpha = 0.05$ in a
different configuration. The simulated rates stay below $\alpha$ in the
configuration reported here, which is not evidence that either schedule
controls the FWER in general. The two naive rows and both Scenario~A rows are
unaffected.
\end{correctionbox}

\begin{table}[tb]
  \centering
  \caption{Empirical FWER for the irregular-tree (Regime 3) and pruned-tree (Regime 4) counterexamples, 100,000 replicates per row at $\alpha = 0.05$. Naive per-depth and naive pruned formulas inflate FWER above $\alpha$, matching the theoretical bounds in Remark~\ref{rem:irregular_tree_failure} and the branch-pruning counterexample. Budget-weighted alphas (Theorems~\ref{thm:fwer_budget} and~\ref{thm:fwer_budget_pruning}) and the switching corollary (Corollary~\ref{cor:switching}) control FWER at $\alpha$.\corrflag{2}}\label{tab:regimes_demo}
  \begin{tabular}{lllc}
    \hline
    Scenario & Method & Empirical FWER & Theory bound \\
    \hline
    A: irregular tree & Naive per-depth: $\alpha_\ell = \alpha / G_\ell$ & 0.0553 & $\le 1.153\,\alpha$ \\
    A: irregular tree & Budget weights: $\alpha_\ell = w_\ell\,\alpha / G_\ell$ & 0.0268 & $\le \alpha$ \\
    \hline
    B: pruned tree & Naive pruned: $\alpha_\ell = \alpha / D_\ell$ & 0.0654 & $\le 4\alpha/3$ \\
    B: pruned tree & Budget pruned: $\alpha_\ell = w_\ell\,\alpha / D_\ell$ & 0.0325 & $\le \alpha$ \\
    B: pruned tree & Switching corollary & 0.0331 & $\le \alpha$ \\
    \hline
  \end{tabular}
\end{table}


\begin{remark}[Weight choices for the budget framework]
\label{rem:weight_choices}
Several choices of weights $w_\ell$ are available:
\begin{itemize}
  \item \emph{Equal weights:} $w_\ell = 1/(L-1)$ for all $\ell \geq 2$. Simple and
    parameter-free, but ignores the tree structure and cannot exploit pruning.
  \item \emph{Error-load-proportional weights:}
    $w_\ell = G_\ell / \sum_{\ell'} G_{\ell'}$. This gives a uniform
    adjusted alpha of $\alpha / \sum G_{\ell'}$ at every depth ---
    equivalent to a single global correction by the total error load.
  \item \emph{Depth-decay weights:} $w_\ell \propto 1/\ell$ or
    $w_\ell \propto 1/2^\ell$, normalized to sum to~1. This allocates more
    budget to shallow depths where power is higher.
  \item \emph{Depth-sequential process (predictable remaining-budget;
    recommended for pruning):}\corrflag{2} Let $B_2 = 1$ be the initial budget. At each depth~$\ell$,
    choose a spending amount $c_\ell \in [0, B_\ell]$ (which may depend on
    the testing history $\mathcal{H}_{\ell-1}$) and set $w_\ell = c_\ell$,
    $B_{\ell+1} = B_\ell - c_\ell$. Then $\sum w_\ell = B_2 - B_{L+1}
    \leq 1$ automatically. A convenient parameterization: choose a fraction
    $f_\ell \in (0, 1]$ of the remaining budget, so that
    $c_\ell = f_\ell \cdot B_\ell$ and
    $B_{\ell+1} = B_\ell (1 - f_\ell)$.
    The key advantage is that $c_\ell$ can adapt to the pruned tree: after
    observing that most branches died at depth~$\ell - 1$, the procedure
    knows the surviving tree is narrow and can allocate a larger share to
    the current depth.
\end{itemize}
The choice of weights is a design parameter. All three adaptive theorems
guarantee $\FWER \leq \alpha$ for any valid weight allocation\corrflag{2}; the user
selects weights to maximize power for their application.
\end{remark}

\begin{correctionbox}
The \texttt{else} branch of the algorithm below cites
Theorem~\ref{thm:fwer_budget_pruning}, which is withdrawn (Correction notice,
item~2), and offers the pruned error load $D_\ell$ as a denominator. Readers
should not use that option. The static denominator $G_\ell$ with
deterministic weights (Theorem~\ref{thm:fwer_budget}) stands for trees without
pruning, subject to the conditional super-uniformity caveat in the notice.
This version supports no $\alpha$-schedule for the pruning regime.
\end{correctionbox}

\begin{algorithm}[H]
\caption{Adaptive Alpha Adjustment for Tree-Structured Testing}
\label{alg:adaptive}
\begin{algorithmic}[1]
\REQUIRE Tree structure $(k, L)$, sample size $N$, nominal $\alpha$, effect size estimate $\hat{\delta}$
\ENSURE Adjusted significance levels $\{\alpha_\ell^{\text{adj}}\}_{\ell=1}^L$
\STATE Calculate estimated power at each depth:
\FOR{$\ell = 1$ to $L$}
    \STATE $n_\ell \leftarrow N/k^{\ell-1}$
    \STATE $\hat{\theta}_\ell \leftarrow \Phi\left(\hat{\delta}\sqrt{n_\ell} - z_{1-\alpha/2}\right)$\corrflag{1}
\ENDFOR
\STATE Calculate error load at each depth: $G_\ell \leftarrow \sum_{i: d(i) = \ell} \widehat{\mathrm{PP}}(i)$
\IF{$\sum_{\ell=2}^L G_\ell \leq 1$}
    \STATE \textbf{Natural gating sufficient:} $\alpha_\ell^{\text{adj}} \leftarrow \alpha$ for all $\ell$
\ELSIF{tree is regular $k$-ary and pruning is not used}
    \STATE \textbf{Telescoping (Theorem~\ref{thm:fwer_control_adaptive}):}
    \STATE $\alpha_1^{\text{adj}} \leftarrow \alpha$
    \FOR{$\ell = 2$ to $L$}
        \STATE $\alpha_\ell^{\text{adj}} \leftarrow \min\{\alpha,\; \alpha / G_\ell\}$
    \ENDFOR
\ELSE
    \STATE \textbf{Budget framework (Theorems~\ref{thm:fwer_budget}/\ref{thm:fwer_budget_pruning}):}
    \STATE Choose weights $w_2, \ldots, w_L$ with $\sum w_\ell \leq 1$
    \STATE $\alpha_1^{\text{adj}} \leftarrow \alpha$
    \FOR{$\ell = 2$ to $L$}
        \STATE $D_\ell \leftarrow$ error load at depth $\ell$ (static $G_\ell$ or pruned)
        \IF{$D_\ell = 0$}
            \STATE $\alpha_\ell^{\text{adj}} \leftarrow \alpha$
        \ELSE
            \STATE $\alpha_\ell^{\text{adj}} \leftarrow \min\{\alpha,\; w_\ell \cdot \alpha / D_\ell\}$
        \ENDIF
    \ENDFOR
\ENDIF
\RETURN $\{\alpha_\ell^{\text{adj}}\}_{\ell=1}^L$
\end{algorithmic}
\end{algorithm}

\begin{remark}[Four regimes for FWER control]
\label{rem:three_regimes}
The results above identify four regimes, in order of increasing generality:
\begin{enumerate}
  \item \emph{Natural gating} ($\sum_{\ell=2}^{L} G_\ell \leq 1$): test at $\alpha$
    everywhere. No adjustment needed. Any tree shape, with or without pruning
    (Proposition~\ref{prop:strong_fwer_adj}). This regime covers all 25 MDRC
    studies analyzed in this paper, where the error load is below $0.051$.\corrflag{1}
  \item \emph{Adaptive alpha, regular trees without pruning}
    ($\sum_{\ell=2}^{L} G_\ell > 1$, regular $k$-ary): use
    $\alpha_\ell = \alpha / G_\ell$. The telescoping argument gives
    $\FWER \leq \alpha$ (Theorem~\ref{thm:fwer_control_adaptive}).
  \item \emph{Adaptive alpha, irregular trees without pruning}
    ($\sum_{\ell=2}^{L} G_\ell > 1$, irregular): use
    $\alpha_\ell = w_\ell \cdot \alpha / G_\ell$ with deterministic weights
    summing to~1 (Theorem~\ref{thm:fwer_budget}).
  \item \emph{Adaptive alpha with branch pruning} (any tree shape): use
    $\alpha_\ell = w_\ell \cdot \alpha / D_\ell$ with predictable weights
    summing to at most~1 almost surely
    (Theorem~\ref{thm:fwer_budget_pruning}). The switching corollary
    (Corollary~\ref{cor:switching}) allows the procedure to revert to
    nominal~$\alpha$ once the surviving tree is small enough.\corrflag{2}
\end{enumerate}
\ifblind The accompanying R implementation uses\else The implementation in the \texttt{manytestsr} package uses\fi the node-indexed
error load $G_\ell = \sum_{i: d(i) = \ell} \widehat{\mathrm{PP}}(i)$,
which handles irregular trees correctly, and supports branch pruning via
\texttt{alpha\_adaptive\_tree\_pruned()}. In the natural gating regime ---
where all 25 MDRC applications fall\corrflag{1} --- no adjustment is needed
regardless of tree shape.
\end{remark}

\begin{remark}[$p$-value monotonicity as algorithmic coherence]
\label{rem:monotonicity}
One could additionally require that $p_{\text{child}} \geq p_{\text{parent}}$
at every node, enforcing this by replacing each child $p$-value with
$\max(p_{\text{child}}, p_{\text{parent}})$. This would guarantee that
rejection decisions are coherent with the tree structure: if a parent is not
rejected, no child would be rejected either, so the Stopping Rule
(Condition~\ref{cond:ap-stopping}) would be satisfied automatically by the
$p$-values rather than imposed by the algorithm. Monotonicity also ensures that
power decays with depth, aligning the algorithm's behavior with the power-decay
model in~\eqref{eq:power-decay}. However, none of the results in this section
require $p$-value monotonicity: the FWER bounds in
Proposition~\ref{prop:fwer}, Theorem~\ref{thm:fwer_decomp},
Theorem~\ref{thm:fwer_budget}, Theorem~\ref{thm:fwer_control_adaptive},
and Theorem~\ref{thm:fwer_budget_pruning}\corrflag{2} hold for any valid tests
satisfying the Stopping Rule. \ifblind The accompanying R package does not enforce\else The \texttt{manytestsr} R package does not enforce\fi $p$-value monotonicity, and
the simulations in this paper confirm that control holds without monotonicity:
the weak FWER simulations yield identical results with and without it, and the
strong FWER simulations draw parent and
child $p$-values independently.
\end{remark}

\begin{remark}[Optimality of per-node Bonferroni within levels]
\label{rem:within_level_hommel}
A natural question is whether the per-node cutoff
$\alpha_\ell^{\mathrm{adj}}$ can be improved by applying a step-up procedure
(Holm, Hommel, or Hochberg) to the siblings at each depth, testing the family
at level $k \cdot \alpha_\ell^{\mathrm{adj}}$ rather than testing each node at
$\alpha_\ell^{\mathrm{adj}}$ individually. For two-level trees ($L = 2$) this
is valid and reduces to the standard bottom-up Hommel procedure. For trees with
$L \geq 3$, however, within-level step-up procedures inflate the FWER.

The mechanism is straightforward. Hommel (or Holm) at family level
$k \cdot \alpha_\ell$ can reject a specific null node with probability up to
$k \cdot \alpha_\ell$ --- $k$ times the per-node Bonferroni level --- when a
non-null sibling has a very small $p$-value. In a two-level tree this causes no
harm because there is only one depth at which false rejections can occur. In
deeper trees, a false rejection at depth~$\ell$ is an independent event from
false rejections at deeper depths, and the inflated per-node probabilities
compound across levels. A binary tree with $L = 3$ and a single non-null path
gives $\mathrm{FWER} \geq 1.5\alpha - \alpha^2/2$; in general, the worst-case
FWER converges to $\alpha \cdot k/(k-1)$ as $L$ grows. Seven alternative
strategies --- family-level accounting, closed testing within sibling groups
\autocite{marcus1976closed}, graphical approaches \autocite{bretz2009graphical},
depth-by-depth alpha trading, Simes-based depth gating, partition-based methods,
and weighted Bonferroni --- all fail for the same reason:
the adversary places nulls where the procedure is most generous, and the
cross-depth accumulation absorbs any within-depth gain.

In fact, $\alpha_\ell = \alpha / k^{\ell-1}$ is the unique
minimax-optimal per-node cutoff among depth-uniform gating procedures on
regular $k$-ary trees. The per-depth constraint (all nodes at depth~$\ell$
null) and the cross-depth constraint (single non-null path through the tree)
bind simultaneously at the adaptive $\alpha$-schedule, leaving no room for
improvement. The Bonferroni structure at each depth is not a limitation of the
proof technique but the price of the gating benefit: the tree's power comes
from not testing unreachable hypotheses, and gating requires that the
hypotheses that \emph{are} reached be tested conservatively enough for
the per-node bound to hold.
\end{remark}

\begin{remark}[The $\tau$-relaxation]
\label{rem:tau_relaxation}
In practice, one may wish to relax the adaptive adjustment at deep levels where
power is already very low. Algorithm~\ref{alg:adaptive} includes a natural
gating check ($\sum_{\ell=2}^{L} G_\ell \leq 1$) as the first step. When this check passes,
no adjustment is needed at any level. When it fails, one could in principle
relax the adjustment at levels where the \emph{remaining} error load
$\sum_{\ell' \geq \ell} G_{\ell'}$ is small. This is where natural gating
``takes over'' from the adaptive adjustment.

The formal guarantee of Theorem~\ref{thm:fwer_control_adaptive}, however,
requires the full adjustment at every level. Relaxing the adjustment at levels
where power is low but the error load has not yet been spent can inflate FWER.
For example, with $k = 10$, $L = 2$, and a false root with high power, 10 null
children each tested at nominal $\alpha = 0.05$ give $\mathrm{FWER} \approx 1 -
0.95^{10} \approx 0.40$. Users who wish to relax the adjustment beyond the
natural gating check should verify FWER control empirically via simulation.
\end{remark}

\section{Test statistics with power against diverse
alternatives.}\label{sec:test_statistic}

This approach requires, in \cite{rosenbaum2008tho}'s words, a ``first true null
hypothesis'', or an ordering of hypotheses such that non-rejection of the first
hypothesis ought to imply a non-rejection of the subsequent hypotheses. This
requirement raises questions about the test statistic used. A
difference-of-means test statistic has power to detect shifts in the mean.
But such statistics lose power in the presence of outliers
\autocite{andrews1972}, so a test on a smaller subset without outliers could
have more power than a test on the full sample. Mean-difference statistics also
record the sign of effects: if half the blocks have positive effects and half
negative, the overall mean difference may be near zero despite genuine effects
in every block. In either case, the root-level test
could fail to reject even when block-level effects are real, preventing the
procedure from descending into the tree.

\subsection{A Combined Energy Statistic}

We introduce a test statistic that captures not only mean
differences but also distributional differences between
treated and control units, in ways that are not fooled by strong
countervailing effects. The strategy is to transform each unit's outcome into
several representations --- ranks, pairwise distances
\cite{szekely2017energy}, and nonlinear transforms --- and then combine them
into a single test. Specifically, we compute the following six scores for each
unit:

\begin{enumerate}
  \tightlist

  \item The raw outcome

  \item The rank of the outcome

  \item The mean Euclidean distance between a unit's outcome and the outcomes
      of all other units in the block, following the energy-distance framework
        of \textcite{szekely2013energy, szekely2017energy, rizzo2010}.

  \item The mean Euclidean distance between the unit's outcome rank and
    the ranks of all other units.

  \item The maximum Euclidean distance between the unit and all other
    units.

  \item The hyperbolic tangent (tanh) transform of the outcome, which
    serves as an alternative to the log transform when the variable has
    many zeros \autocite{mebane2004robust}.

\end{enumerate}

Each unit thus receives six scores, each capturing a different aspect of the
outcome distribution. We combine these scores into a single test statistic and
compare the observed value to its randomization distribution, using either
large-sample chi-square
approximations \autocite{hansen2008cbs, hotelling1931gss, strasser1999asymptotic}
or exact permutation tests when the sample size is small.  

Why should this somewhat arbitrary combined test statistic have large-sample
approximations? Under the null hypothesis of no effects and given the known
randomization (within blocks in this case), a linear combination of mean
differences follows a chi-square distribution. The required variances and
covariances are functions of the randomization distribution, which the null
and the design determine \cite{hansen2008cbs, strasser1999asymptotic}.
The \verb+independence_test+ function in the \verb+coin+ R package
\cite{hothornetal:coin:2006} implements this combination. 

We do not claim that this omnibus test is optimal in all circumstances. But a
tree-structured testing procedure needs a root-level test that is sensitive to
diverse departures from the null, and the combined energy statistic meets this
requirement without excessive computational cost.

\subsection{The Combined Stephenson Rank Test}

\textcite{kim2025acic} present a test that targets the distribution of
treatment effects across units in a block-randomized experiment.
Specifically, it tests hypotheses about the additive effect $c$ at each
quantile of the experimental pool, answering questions such as ``what
proportion of units benefited from the intervention?'' If that proportion
exceeds zero, at least one unit benefited --- which counts as a detected
effect in our framework. We apply this test via \ifblind a separate R package\else the \texttt{CMRSS} R package\fi
of \textcite{kim2025acic}.

\section{MDRC Application: Full Results}\label{sec:mdrc_full}

\begin{correctionbox}
Three results in this appendix are affected. The caption of
Table~\ref{tab:mdrc_application_full} reports every study's error load as
below 0.051 and reads that as natural gating; corrected loads exceed 1 in all
25 studies at the planning effect size $d = 0.20$, so no study is in that
regime (Correction notice, item~1). At the individual-block level this
appendix sets \MDRCTopDownSingles{} top-down detections against
\MDRCBUHommelBlocks{} for bottom-up Hommel and \MDRCInhSingles{} for the
inheritance procedure. The top-down count comes from the unadjusted
procedure, and a verified bound leaves at most 33 of the
\MDRCTopDownSingles{} under a corrected $\alpha$-schedule, below
\MDRCBUHommelBlocks{}, so the block-level ordering may reverse (item~3). The
``adaptive $+$ pruning'' row of Table~\ref{tab:mdrc_inheritance} ran the
schedule of Theorem~\ref{thm:fwer_budget_pruning} with the switching of
Corollary~\ref{cor:switching}, both withdrawn (item~2). The
outcome-distribution table (Table~\ref{tab:mdrc_credits}) is not affected.
The appendix is otherwise retained unchanged pending the full revision.
\end{correctionbox}

Table~\ref{tab:mdrc_credits} reports the distribution of the outcome variable
(credits earned in the first main session after the intervention) across all 25
studies.

\begin{table}[tb]
\centering
\begingroup\small
\begin{tabular}{cccccc}
  \toprule
  Study & Blocks & min & med & mean & max \\ \midrule
CUNY Start &  21 &   0 & 0.00 & 2.28 &  27 \\ 
  ASAP Ohio &   9 &   0 & 10.00 & 9.19 &  32 \\ 
  OD PBS + Advising &  11 &   0 & 6.00 & 5.97 &  27 \\ 
  ASAP CUNY &   5 &   0 & 11.75 & 10.36 &  31 \\ 
  EASE &  26 &   0 & 0.00 & 1.42 &  21 \\ 
  LC Career &  28 &   0 & 12.00 & 11.24 &  26 \\ 
  OD LC &   4 &   0 & 12.00 & 10.96 &  26 \\ 
  PBS OH &  11 &   0 & 8.00 & 8.09 &  29 \\ 
  OD Success &   2 &   0 & 3.00 & 3.92 &  17 \\ 
  ModMath &   4 &   0 & 6.00 & 6.01 &  26 \\ 
  DPP &  44 &   0 & 0.00 & 4.33 &  16 \\ 
  PBS + Supports &  15 &   0 & 9.00 & 8.92 &  43 \\ 
  LC English &   4 &   0 & 6.00 & 6.18 &  30 \\ 
  PBS + Math &   6 &   0 & 8.00 & 8.07 &  34 \\ 
  iPASS MCCC &  42 &   0 & 6.00 & 5.52 &  28 \\ 
  DCMP &  31 &   0 & 6.00 & 6.55 &  27 \\ 
  OD Advising + Incentive &   8 &   0 & 6.00 & 5.80 &  30 \\ 
  iPASS UNCC &  35 &   0 & 14.00 & 13.14 &  30 \\ 
  PBS Variations &   6 &   0 & 0.00 & 4.02 &  29 \\ 
  iPASS Fresno State &   8 &   0 & 12.00 & 11.84 &  27 \\ 
  AtD Success Course &   8 &   0 & 7.00 & 6.80 &  25 \\ 
  LC Reading &   7 &   0 & 7.00 & 7.28 &  29 \\ 
  PBS + Advising &   2 &   0 & 14.00 & 12.92 &  21 \\ 
  OD Success (Enhanced) &   2 &   0 & 1.00 & 3.08 &  17 \\ 
  LC English + Success &   4 &   0 & 6.50 & 6.13 &  23 \\ 
   \bottomrule
\end{tabular}
\endgroup
\caption{Distribution of the outcome variable (credits earned in the first main session after the intervention) across all 25 MDRC studies. The outcome is skewed toward 0: in most studies the median student earns few credits. See Table~\ref{tab:mdrc_application} for the testing results.} 
\label{tab:mdrc_credits}
\end{table}

Table~\ref{tab:mdrc_application_full} reports the structured testing results for
all 25 MDRC studies, including the 13 studies where neither overall test
rejected at $\alpha = 0.05$. Table~\ref{tab:mdrc_application} in the main text
presents the 12 studies with at least one significant overall test.

\begin{table}[tb]
\centering
\caption{Structured (top-down) vs.\ bottom-up testing in all 25 MDRC studies. See Table~\ref{tab:mdrc_application} in the main text for the subset with at least one significant overall test. The top-down column (`Unadj') reports the unadjusted procedure, which tests each reached block at nominal $\alpha$. Each study has its own error load $\sum_\ell G_\ell$; all 25 are below 0.051, far under the natural-gating threshold of one, so no per-level $\alpha$-adjustment is needed.\corrflag[items]{1 and 3} `Ratio' is top-down node detections divided by bottom-up Hommel detections.} 
\label{tab:mdrc_application_full}
\begingroup\footnotesize\setlength{\tabcolsep}{3pt}
\resizebox{\linewidth}{!}{\begin{tabular}{ccccc|cc|cc|cc}
  \toprule
  &&& \multicolumn{2}{c}{Overall Tests} &
    \multicolumn{2}{|c}{Nodes detected} &
    \multicolumn{2}{|c}{Bottom-up} &
    \multicolumn{2}{|c}{Single blocks}
    \\
    &&&&& \multicolumn{2}{|c}{Top-down} & \multicolumn{2}{|c}{blocks} & \multicolumn{2}{|c}{Top-down}\\ Study &
  Blocks &
  $\widehat{ITT}$ &
  wilcox &
  t-test &
  Unadj &
  Ratio &
  Hommel &
  BH &
  Unadj &
  Ratio
  \\ \midrule
CUNY Start &  21 & -6.05 & 0.00 & 0.00 &  35 & 1.7$\times$ &  21 &  21 &  21 & 1.0$\times$ \\ 
  ASAP Ohio &   9 & 2.23 & 0.00 & 0.00 &  11 & 2.8$\times$ &   4 &   5 &   5 & 1.2$\times$ \\ 
  OD PBS + Advising &  11 & 1.75 & 0.00 & 0.00 &   7 & 3.5$\times$ &   2 &   3 &   4 & 2.0$\times$ \\ 
  ASAP CUNY &   5 & 2.08 & 0.00 & 0.00 &   7 & 1.8$\times$ &   4 &   4 &   4 & 1.0$\times$ \\ 
  EASE &  26 & 0.21 & 0.00 & 0.00 &   6 & 6.0$\times$ &   1 &   1 &   2 & 2.0$\times$ \\ 
  LC Career &  28 & 0.89 & 0.03 & 0.04 &   4 & $\infty$ &   0 &   0 &   2 & $\infty$ \\ 
  OD LC &   4 & 1.28 & 0.00 & 0.00 &   3 & 3.0$\times$ &   1 &   2 &   2 & 2.0$\times$ \\ 
  PBS OH &  11 & 0.71 & 0.00 & 0.00 &   3 & 3.0$\times$ &   1 &   1 &   1 & 1.0$\times$ \\ 
  OD Success &   2 & -0.67 & 0.02 & 0.02 &   2 & $\infty$ &   0 &   0 &   1 & $\infty$ \\ 
  ModMath &   4 & 0.61 & 0.00 & 0.02 &   1 & $\infty$ &   0 &   0 &   0 & --- \\ 
  DPP &  44 & 0.57 & 0.04 & 0.06 &   1 & $\infty$ &   0 &   0 &   0 & --- \\ 
  PBS + Supports &  15 & 0.57 & 0.06 & 0.13 &   0 & --- &   0 &   0 &   0 & --- \\ 
  LC English &   4 & 0.63 & 0.06 & 0.04 &   0 & --- &   0 &   0 &   0 & --- \\ 
  PBS + Math &   6 & 0.48 & 0.07 & 0.17 &   0 & --- &   0 &   0 &   0 & --- \\ 
  iPASS MCCC &  42 & -0.29 & 0.07 & 0.08 &   0 & --- &   0 &   0 &   0 & --- \\ 
  DCMP &  31 & 0.39 & 0.12 & 0.20 &   0 & --- &   0 &   0 &   0 & --- \\ 
  OD Advising + Incentive &   8 & 0.30 & 0.15 & 0.19 &   0 & --- &   0 &   0 &   0 & --- \\ 
  iPASS UNCC &  35 & 0.20 & 0.15 & 0.18 &   0 & --- &   0 &   0 &   0 & --- \\ 
  PBS Variations &   6 & 0.28 & 0.16 & 0.42 &   0 & --- &   0 &   0 &   0 & --- \\ 
  iPASS Fresno State &   8 & 0.36 & 0.38 & 0.11 &   0 & --- &   0 &   0 &   0 & --- \\ 
  AtD Success Course &   8 & -0.36 & 0.43 & 0.36 &   0 & --- &   0 &   0 &   0 & --- \\ 
  LC Reading &   7 & 0.09 & 0.57 & 0.80 &   0 & --- &   0 &   0 &   0 & --- \\ 
  PBS + Advising &   2 & -0.03 & 0.62 & 0.89 &   0 & --- &   0 &   0 &   0 & --- \\ 
  OD Success (Enhanced) &   2 & -0.14 & 0.88 & 0.70 &   0 & --- &   0 &   0 &   0 & --- \\ 
  LC English + Success &   4 & -0.05 & 0.88 & 0.87 &   0 & --- &   0 &   0 &   0 & --- \\ 
   \bottomrule
\end{tabular}}
\endgroup
\end{table}

\subsection{Comparison with the Goeman--Finos inheritance procedure}
\label{sec:mdrc_inheritance_supp}

The same hierarchical procedure we apply to the simulated designs, the
Goeman--Finos inheritance procedure \autocite{goeman2012inheritance}, applies to the
25 MDRC studies. It assigns a $p$-value to every node and controls the FWER across
the whole tree, as top-down testing does. Because these are real data with no known
truth, we compare detection counts, not error rates: the number of nodes and of
individual blocks each method rejects, the same quantities
Table~\ref{tab:mdrc_application} reports for the top-down and bottom-up methods.
Table~\ref{tab:mdrc_inheritance} totals these across the 25 studies.

The inheritance procedure detects $\MDRCInhNodes$ nodes and $\MDRCInhSingles$
individual blocks, against $\MDRCTopDownNodes$ and $\MDRCTopDownSingles$ for the
top-down procedure and $\MDRCBUHommelBlocks$ blocks for bottom-up Hommel. Two
comparisons follow. At the node level the ordering is top-down ($\MDRCTopDownNodes$)
above inheritance ($\MDRCInhNodes$) above bottom-up, which tests only individual
blocks and poses no node-level hypothesis. At the individual-block level the
inheritance procedure ($\MDRCInhSingles$) falls just below bottom-up Hommel
($\MDRCBUHommelBlocks$) and well below the top-down procedure ($\MDRCTopDownSingles$).

On these real trees the inheritance procedure is far more competitive than on the
wide simulated Detroit Promise tree of Table~\ref{tab:dpp_inheritance}, where its
leaf cutoff --- $\alpha$ divided by the number of blocks below a node --- shrank so
much that it detected fewer blocks than even bottom-up Hommel. The MDRC trees are
narrower, so that cutoff is less severe: at the individual-block level the
inheritance procedure matches bottom-up Hommel, and it adds node-level detections
bottom-up cannot make. The top-down procedure, testing each reached node at the
nominal level rather than dividing $\alpha$ across the leaves, detects the most at
both levels.\corrflag[items]{1 and 3}

\begin{table}[tb]
\centering
\caption{The Goeman--Finos inheritance procedure \autocite{goeman2012inheritance} applied to all 25 MDRC studies, totaling the number of nodes and of individual blocks each method rejects across the studies ($\alpha = 0.05$). These are real data with no known truth, so the entries are detection counts, not error rates. `Nodes detected' counts every rejected node (the overall test, intermediate groups of blocks, and individual blocks); `Single blocks' counts only individual-block rejections. The bottom-up methods test only the individual blocks, so they pose no node-level hypothesis. At the node level the ordering is top-down $>$ inheritance $>$ bottom-up; at the individual-block level the inheritance procedure matches bottom-up Hommel while the top-down procedure detects the most. The top-down rows are identical because every study's error load is below one (Table~\ref{tab:mdrc_application_full}),\corrflag{1} so the adaptive adjustment reverts to the nominal level and changes no decision.}
\label{tab:mdrc_inheritance}
\begin{tabular}{lrr}
\toprule
Method & Nodes detected & Single blocks \\
\midrule
Top-down, 2 Conditions & 80 & 42 \\
Top-down, adaptive $+$ pruning & 80 & 42 \\
Goeman--Finos inheritance & 66 & 32 \\
Bottom-up Hommel & --- & 34 \\
Bottom-up BH & --- & 37 \\
\bottomrule
\end{tabular}
\end{table}

\section{A High-Error-Load Application: The National Job Corps Study}\label{sec:njcs}

\begin{correctionbox}
Two results in this section are affected. The opening paragraph restates the
incorrect 25-study error loads, and the paragraph headed ``The diagnostic
flags the design'' compares the Job Corps load against them (Correction
notice, item~1). The Job Corps load came from the same function and is
itself understated: recomputed under Definition~\ref{def:error-load} it is
38.2, not the \NJCSSumG{} printed here, at the same anticipated per-center
effect used throughout this section. Corrected MDRC loads run from 1.12 to
40.9, each at that application's own planning effect size $d = 0.20$. Both
sides of the comparison in this paragraph were therefore wrong, and the
corrected Job Corps load sits inside the corrected MDRC range rather than two
orders of magnitude above it. The diagnostic still flags this design, and
more strongly than the section reports. And the pruned
variant's simulated FWER of \NJCSPrunedFWER{} exceeds $\alpha = 0.05$: the
text below reads this as control held ``near the nominal level,'' but with
Theorem~\ref{thm:fwer_budget_pruning} withdrawn (item~2) we now read it as
the theorem's failure appearing in operation. The adaptive schedule without
pruning (FWER \NJCSAdaptiveFWER) is not affected by the withdrawal.
\end{correctionbox}

Every one of the 25 MDRC trials in Section~\ref{sec:mdrc} is in the
natural-gating regime: the error load is below~$0.051$, so the two conditions of
Section~\ref{sec:topdownbotup} already control the FWER and no adjustment is
needed. A fair question is whether the adaptive $\alpha$-adjustment of
Section~\ref{sec:strong_local_adj} is ever called for by a real design, and
whether it controls the FWER without sacrificing too much power when it is. We
answer both with a multisite trial whose
width places it firmly on the other side of the threshold: the National Job Corps
Study (NJCS), a randomized evaluation of the federal Job Corps training program
conducted at about one hundred centers nationwide \autocite{schochet2008does}.

\paragraph{The design, and what we calibrate against.}
The NJCS randomized eligible applicants to a Job Corps program group or a control
group; each applicant is associated with the center they would attend, so the
centers are the sites at which effects may differ \autocite{hong2025multisite}.
We organize the centers under their administrative regions --- Overall, then
regions, then centers --- and test top-down. The public-use NJCS file (openICPSR
study~113269) suppresses center identity to protect the centers' confidentiality,
so it cannot place applicants in their centers; recovering the center-level
structure a full re-analysis would require needs center-identified data that the
public-use file does not carry, and that \textcite{hong2025multisite}'s
center-level analysis necessarily used. We therefore calibrate a simulation to the
study's \emph{published} design moments rather than re-running the procedure on
center-level data: \NJCSNcenters\ centers across \NJCSNregions\ administrative
regions (the region count is the Job Corps administrative grouping, which
\textcite{hong2025multisite} do not report), per-center sample sizes spanning
\NJCSNmin\ to \NJCSNmax\ (total $\NJCSNtotal$), and a between-center spread of
intent-to-treat (ITT) effects. We anticipate a mean per-center impact of
\$\NJCSMeanDollar\ per week and a between-center standard deviation of
\$\NJCSSdDollar; for comparison, \textcite{hong2025multisite} estimate the average
ITT at about \$13 per week (the weighted estimate in the public-use data is
\$\NJCSDataITTwt) and put the between-center standard deviation at \$21--\$23,
depending on whether the residualized or the raw center effect is used. About a
fifth of centers have negative effects, and the effects are scattered across
regions rather than concentrated in one --- the pattern visible in
\textcite{hong2025multisite}'s Figure~3, and the opposite of the single-college
concentration of the Detroit Promise design (Section~\ref{sec:dpp_example}).

We can check the \emph{outcome}-level moments directly against the public-use
microdata, since the outcome --- self-reported year-4 weekly earnings, the
quantity \textcite{hong2025multisite} analyze --- is not suppressed. Across the
\NJCSDataNctrl\ control-group respondents, year-4 weekly earnings average
\$\NJCSDataCtrlMean\ with a standard deviation of \$\NJCSDataCtrlSd\ (coefficient
of variation \NJCSDataCtrlCV; \NJCSDataFracZeroPct\% report zero earnings and the
upper tail is long, with skewness \NJCSDataSkew). The \$\NJCSOutcomeSd\ outcome
standard deviation and the \$\NJCSMeanDollar\ anticipated mean impact we use below
both fall within the range these data imply. The right-skewed, non-negative shape
is also why we draw control outcomes from a gamma rather than a normal
distribution: a normal would place mass below zero and understate the tail.

\paragraph{The diagnostic flags the design.}
Computed from the design alone --- the per-depth sample sizes and the anticipated
per-center effect (the \$\NJCSMeanDollar\ anticipated mean weekly impact over the
\$\NJCSOutcomeSd\ outcome standard deviation, a Cohen's $d$ of about \NJCSMeanD)
--- the error load
is $\sum_\ell G_\ell = \NJCSSumG$, well above one and two orders of magnitude above
the largest MDRC value\corrflag{1}. The diagnostic calls for adjustment before any outcome is
examined.

\paragraph{What adjustment buys.}
Table~\ref{tab:njcs} reports, across $\NJCSNsim$ simulations, the FWER, the
distribution of affected-center detections (the mean and the proportions of
simulations detecting at least one and at least two of the $\NJCSNcentersNonnull$
affected centers), and the number of affected regions detected, for the top-down
procedure (unadjusted, adaptive, and adaptive with pruning) against the
Goeman--Finos inheritance procedure \autocite{goeman2012inheritance} --- a
hierarchical FWER procedure that, like ours, tests internal nodes --- and the
bottom-up Hommel and Benjamini--Hochberg adjustments. Four findings stand out. First, the unadjusted
procedure now \emph{does} inflate the FWER, to $\NJCSUnadjFWER$: with the effects
scattered, the powerful root and region tests reject and expose true-null centers
under every active region, exactly the exposure the error load anticipates. This
is the case the Detroit Promise design could not provide, where concentrated
effects kept the realized FWER below $\alpha$ despite an error load above one. Second,
the adaptive $\alpha$-schedule restores control: its FWER falls to $\NJCSAdaptiveFWER$, and
branch pruning --- which recovers most of the lost power --- holds the FWER near
the nominal level ($\NJCSPrunedFWER$).\corrflag{2} Pruning spends the error budget
efficiently, so its FWER is just at $\alpha$ rather than far below it, which is
what lets it detect more than the strictly conservative alternatives. Third, the
pruned top-down procedure makes more true center detections than either alternative
--- $\NJCSPrunedLeafTR$ centers, against $\NJCSHommelLeafTR$ for bottom-up Hommel and
$\NJCSInheritanceLeafTR$ for the hierarchical inheritance procedure, both of which
leave much of the error budget unspent (their FWERs, $\NJCSHommelFWER$ and
$\NJCSInheritanceFWER$, fall well below $\alpha$) and so detect less. These means
are small fractions of the $\NJCSNcentersNonnull$ affected centers and can read as
near-failure, but the distribution behind them is less bleak: the pruned procedure
detects at least one affected center in $\NJCSPrunedLeafPGeOne$ of simulations and
two or more in $\NJCSPrunedLeafPGeTwo$, against $\NJCSHommelLeafPGeOne$ and
$\NJCSHommelLeafPGeTwo$ for bottom-up Hommel --- both find some affected center in
most trials, but the pruned procedure finds two or more about twice as often. The
top-down advantage is therefore not an artifact of comparing against a
hierarchy-ignoring procedure: it holds against one that uses the same tree.
Fourth, the top-down procedure identifies affected regions, a question bottom-up
testing never asks. By testing region-level hypotheses on the way down, the
FWER-valid\corrflag{2} pruned procedure rejects at least one affected region in
$\NJCSPrunedRegionPGeOne$ of simulations ($\NJCSPrunedRegionTR$ of the
$\NJCSNregionsNonnull$ regions on average), pointing to which broad groups of
centers carry effects even when the individual centers within them are hard to
identify one at a time. Bottom-up testing tests only centers, so an affected
region is never a hypothesis it evaluates. These results are
robust to the planning effect: re-running the $\alpha$-schedule at a more conservative
planning value --- the anticipated mean plus one between-center standard deviation
($d \approx \NJCSPlanningD$), which the conservative-power requirement of
Remark~\ref{rem:sensitivity} would favor --- leaves the pruned FWER and detection
counts essentially unchanged, because the predictable budget-recovery re-spends to
the same level.\corrflag{2}

\begin{table}[tb]
\centering
\caption{Detection of affected centers and regions across 10,000 simulations of a design calibrated to the National Job Corps Study (99 centers in 9 regions; error load $\sum_\ell G_\ell = 5.5$, far above the natural-gating threshold of one). The unadjusted top-down procedure inflates the FWER; the adaptive and pruned schedules restore control,\corrflag{2} and the pruned procedure detects more affected centers than either the hierarchical Goeman--Finos inheritance procedure or bottom-up Hommel. Benjamini--Hochberg controls the false discovery rate, not the FWER. `Centers' is the mean number of truly affected centers detected; `$\geq 1$' and `$\geq 2$' are the proportions of simulations detecting at least one and at least two affected centers -- the distribution behind a mean that is otherwise easy to read as near-zero. `Regions' is the mean number of truly affected regions (intermediate nodes) detected, a hypothesis bottom-up testing never poses (shown `---').}\label{tab:njcs}
\begin{tabular}{lccccc}
\toprule
Method & FWER & Centers & $\geq 1$ & $\geq 2$ & Regions \\
\midrule
Top-down, unadjusted & 0.362 & 4.97 & 0.965 & 0.914 & 3.00 \\
Top-down, adaptive & 0.007 & 0.59 & 0.457 & 0.112 & 1.25 \\
Top-down, adaptive $+$ pruning & 0.058 & 1.39 & 0.703 & 0.438 & 1.25 \\
Goeman--Finos inheritance & 0.002 & 0.39 & 0.347 & 0.041 & 1.24 \\
Bottom-up Hommel & 0.012 & 0.95 & 0.696 & 0.214 & --- \\
Bottom-up Benjamini--Hochberg & 0.033 & 1.42 & 0.726 & 0.383 & --- \\
\bottomrule
\end{tabular}
\end{table}

\paragraph{Why the depth-sequential weights.}
The pruned $\alpha$-schedule allocates the error budget across depths by a
\emph{depth-sequential} rule: beginning with the full budget, it spends a fixed
fraction of whatever budget remains at each depth and carries the rest forward
(Remark~\ref{rem:weight_choices}). We use this rule rather than equal weights
($w_\ell = 1/(L-1)$) or error-load-proportional weights for two reasons. First,
the pruning guarantee (Theorem~\ref{thm:fwer_budget_pruning}) requires the weights
to be \emph{predictable} --- each depth's share fixed before that depth's tests
are seen, using only the testing history at shallower depths --- and the
depth-sequential rule satisfies this by construction, because the share depends
only on the budget remaining after shallower depths. An allocation that instead
set a depth's weight from the realized pruned tree at that depth would not be
predictable, and the theorem's guarantee would not cover it. Second, the rule
adapts to pruning: when whole branches fail to reject and drop out, the remaining
budget concentrates on the narrower surviving subtree, recovering power that a
fixed allocation leaves unused.

\paragraph{Takeaway.}
The NJCS shows the diagnostic working at the other end of its range. Where the
MDRC trials need no adjustment,\corrflag{1} a wide, well-powered multisite
design is flagged as needing it; the unadjusted procedure indeed inflates; and the
adaptive, pruned $\alpha$-schedule restores control\corrflag{2} while still finding more affected sites
than either a flat or a hierarchical FWER procedure.

\section{Using the Accompanying R Package}\label{sec:software}
\ifblind
The software appendix demonstrating the accompanying R package on its example data is omitted to preserve anonymity and will be restored on publication.
\else

The \texttt{manytestsr} R package implements all four regimes described in
Section~\ref{sec:strong_local_adj}. This section walks through the typical
workflow on the package's built-in example dataset --- preparing data,
assessing the error load, running the tree-structured testing procedure, and
interpreting results. Every block of code below is paired with the actual
console output produced by running it, so readers can verify the exact API and
return values. The script that produced these outputs is shipped with the paper
as \texttt{Analysis/appendix\_e\_examples.R}. Full documentation and
additional vignettes ship with the package at
\texttt{https://github.com/bowers-illinois-edu/manytestsr}. The Lean~4
source files that machine-check the algebraic cores of the FWER-control
theorems ship with the paper's source files under \texttt{Theory/lean/}.

\subsection{Installation and Data Preparation}

Install \texttt{manytestsr} from GitHub and load the example data:

\begin{verbatim}
## Install from GitHub
remotes::install_github("bowers-illinois-edu/manytestsr")
library(manytestsr)
library(data.table)
library(dplyr)

data(example_dat, package = "manytestsr")
idat <- as.data.table(example_dat)
\end{verbatim}

The package expects two data frames: \texttt{idat} (individual-level data with
treatment, outcome, and block membership) and \texttt{bdat} (block-level
summaries with size \texttt{nb}, treated share \texttt{pb}, and harmonic weight
\texttt{hwt}). The block structure encodes the tree hierarchy through a dotted
factor (here \texttt{place\_year\_block}, e.g.\ \texttt{"A.1.B080"}) that
\texttt{splitSpecifiedFactor()} reads to partition the data at each depth.

\begin{verbatim}
> head(idat)
      id  year   trt    Y1    Y2   trtF place_year_block  place blockF
   <int> <int> <int> <num> <num> <fctr>           <char> <char> <fctr>
1:     1     1     0     0     0      0         A.1.B082      A   B082
2:     2     3     0     0    12      0         B.3.B094      B   B094
3:     3     1     0     0     0      0         C.1.B097      C   B097
4:     4     1     0     6     0      0         C.1.B097      C   B097
5:     5     1     0     7    11      0         B.1.B089      B   B089
6:     6     1     1     0     0      1         A.1.B080      A   B080

bdat <- idat %>%
  group_by(blockF) %>%
  summarize(
    nb = n(),
    pb = mean(trt),
    hwt = (nb / nrow(idat)) * (pb * (1 - pb)),
    place = unique(place),
    year = unique(year),
    place_year_block = factor(unique(place_year_block)),
    .groups = "drop"
  ) %>% as.data.table()

> c(n_individuals = nrow(idat), n_blocks = nrow(bdat))
n_individuals      n_blocks
         1268            44
\end{verbatim}

\subsection{Assessing the Error Load}

\begin{correctionbox}
The error-load figures printed in this subsection and the next
(\texttt{sum\_G} of 0.02782891 and 0.04104213) come from the superseded
computation of Correction notice, item~1: the \texttt{nodesize} column holds
harmonic weights rather than headcounts (root 0.22540897 in place of the 1268
individuals), the power is one-tailed, the preliminary
\texttt{find\_blocks()} pass supplies only the reached subtree (5 nodes on a
44-block tree) rather than the full design tree, and the summation is not the
one in Definition~\ref{def:error-load}. As of \texttt{manytestsr}~0.0.4.1008,
\texttt{compute\_error\_load()} stops with an error on weight-scale node
sizes, so this transcript is not reproducible under the current package. The
printed \texttt{needs\_adjustment = FALSE} values, the reading that the
example tree is in the natural-gating regime, and the sentence extending that
conclusion to all 25 MDRC studies are withdrawn. The corrected workflow
appears in the fully corrected version.
\end{correctionbox}

Before running the full procedure, check whether the tree falls in the natural
gating regime. The error load is computed from design quantities alone --- the
per-node sizes and harmonic weights, together with a design-stage effect size ---
and does not use the realized $p$-values, so the adjust-or-not decision is fixed
by the design rather than the observed outcomes.\corrflag{1} We obtain \texttt{node\_dat}
from a preliminary \texttt{find\_blocks()} pass and call
\texttt{compute\_error\_load()} with an estimated effect size:

\begin{verbatim}
## Preliminary pass: build the node structure used by compute_error_load().
prelim <- find_blocks(
  idat = idat, bdat = bdat,
  blockid = "blockF",
  splitfn = splitSpecifiedFactor,
  pfn = pIndepDist,
  fmla = Y1 ~ trtF | blockF,
  splitby = "place_year_block",
  parallel = "no", thealpha = 0.05
)

## Natural-gating check at a small effect size (Cohen's d = 0.10).
el_small <- compute_error_load(node_dat = prelim$node_dat,
                               delta_hat = 0.10 / 2)

> el_small$sum_G
[1] 0.02782891
> el_small$needs_adjustment
[1] FALSE

## Same check at a much larger effect size (Cohen's d = 0.80).
el_big <- compute_error_load(node_dat = prelim$node_dat,
                             delta_hat = 0.80 / 2)

> el_big$sum_G
[1] 0.04104213
> el_big$needs_adjustment
[1] FALSE
\end{verbatim}

Both \texttt{sum\_G} values are far below 1, so this tree is firmly in the
natural-gating regime: no alpha adjustment is needed
(Proposition~\ref{prop:strong_fwer_adj}). All 25 MDRC studies in the main paper
fall in this same regime.\corrflag{1} Trees with \texttt{needs\_adjustment = TRUE} require
the adaptive $\alpha$-schedule shown in the next subsection.

\subsection{Running the Procedure}

The main function is \texttt{find\_blocks()}, which performs the top-down
split-and-test procedure. The basic call uses nominal $\alpha$ at every depth:

\begin{verbatim}
results <- find_blocks(
  idat = idat, bdat = bdat,
  blockid = "blockF",
  splitfn = splitSpecifiedFactor,
  pfn = pIndepDist,
  fmla = Y1 ~ trtF | blockF,
  splitby = "place_year_block",
  parallel = "no", thealpha = 0.05
)

> nrow(results$node_dat)             ## total nodes
[1] 5
> sort(unique(results$node_dat$depth))
[1] 1 2 3
> sum(results$node_dat$testable)     ## nodes that opened (p <= alpha)
[1] 1
> head(results$node_dat[, .(nodenum, parent, depth, nodesize, p, a, testable)])
   nodenum parent depth   nodesize           p     a testable
     <int>  <int> <int>      <num>       <num> <num>   <lgcl>
1:       1      0     1 0.22540897 0.025808348  0.05     TRUE
2:       3      1     2 0.10303381 0.009180512  0.05       NA
3:       2      1     2 0.12237516 0.077261731  0.05       NA
4:       4      3     3 0.04599040 0.134086224  0.05       NA
5:       5      3     3 0.05704341 0.115312234  0.05       NA
\end{verbatim}

For trees outside the natural-gating regime, wrap the depth-indexed adaptive
$\alpha$-schedule with \texttt{alpha\_adaptive\_tree()} and pass the resulting closure
to \texttt{find\_blocks()} as \texttt{alphafn}:

\begin{verbatim}
alphafn_adapt <- alpha_adaptive_tree(
  node_dat = prelim$node_dat,
  delta_hat = 0.80 / 2
)

## Inspect the depth-indexed alpha schedule the closure will use.
> compute_adaptive_alphas_tree(node_dat = prelim$node_dat,
+   delta_hat = 0.80 / 2, thealpha = 0.05)
   1    2    3
0.05 0.05 0.05
attr(,"error_load")$sum_G
[1] 0.04104213
attr(,"error_load")$needs_adjustment
[1] FALSE
\end{verbatim}

Because the example tree satisfies natural gating,\corrflag{1} the $\alpha$-schedule returns
to nominal $\alpha = 0.05$ at every depth. On a tree where
\texttt{needs\_adjustment = TRUE}, this same call returns a strictly smaller
$\alpha_\ell$ at deeper levels.

\subsection{Interpreting Results}

\texttt{report\_detections()} returns the block-level detections in FWER mode.
Multiple blocks under the same parent share that parent's adjusted $p$-value
\texttt{pfinalb}, reflecting that the procedure rejected the joint null at
the parent rather than separating the children:

\begin{verbatim}
detections <- report_detections(results$bdat, fwer = TRUE,
                                alpha = 0.05, blockid = "blockF")

> sum(detections$hit)               ## number of detected blocks
[1] 9
> detections[hit == TRUE, .(blockF, pfinalb, hit)]
   blockF   pfinalb    hit
   <fctr>     <num> <lgcl>
1:   B080 0.1340862   TRUE
2:   B081 0.1340862   TRUE
3:   B082 0.1340862   TRUE
4:   B083 0.1340862   TRUE
5:   B084 0.1153122   TRUE
6:   B085 0.1153122   TRUE
7:   B086 0.1153122   TRUE
8:   B087 0.1153122   TRUE
9:   B088 0.1153122   TRUE
\end{verbatim}

To visualize the testing tree:

\begin{verbatim}
tree <- make_results_tree(results, block_id = "blockF")

> names(tree)
[1] "nodes"        "graph"        "test_summary"
> names(tree$nodes)
 [1] "parent"      "p"           "a"           "group_id"    "node_id"
 [6] "batch"       "testable"    "nodenum"     "depth"       "nodesize"
[11] "alpha3"      "name"        "parent_name" "node_number" "hit"
[16] "node_type"   "blocks"      "node_label"  "num_blocks"  "nonnull"

## Render with:  make_results_ggraph(tree$graph)
\end{verbatim}

The tree visualization shows each node's $p$-value and rejection status,
making the gating structure visible: branches where the parent was not
rejected appear grayed out, and the viewer can trace which paths the
procedure descended.
\fi

\section{Detection-Count Distributions in the Detroit Promise Simulation}\label{sec:dpp_pmf_supp}

The main text summarizes how many of the nine affected blocks the top-down and
bottom-up procedures detect in the Detroit Promise simulation as means and, in
Table~\ref{tab:big_sim_xtab}, as the probabilities of detecting at least one and
at least two blocks. Figure~\ref{fig:dpp_pmf} shows the full distribution behind
those summaries at $d = 0.30$. Both procedures leave most simulations with no
detection, but the shapes differ. The top-down ``2 Conditions'' procedure places
real mass on detecting two to seven blocks at once --- it recovers the HFCC
cluster whenever the root and college tests reject --- and detects at least one
block in $\DPPHighThreeRulesLeafPGeOne$ of simulations. Bottom-up Hommel reaches
at most one block in practice and detects any block in only
$\DPPHighBUHommelLeafPGeOne$ of simulations. The means
($\DPPHighThreeRulesLeafTR$ versus $\DPPHighBUHommelLeafTR$) compress this
difference in shape into a single number that can read as ``neither procedure
works.'' The distribution shows instead that the top-down procedure works in a
specific way: it recovers the affected cluster about a third of the time, while
the bottom-up procedure almost never detects more than an isolated block.

\begin{figure}[tb]
\centering
\includegraphics[width=\linewidth]{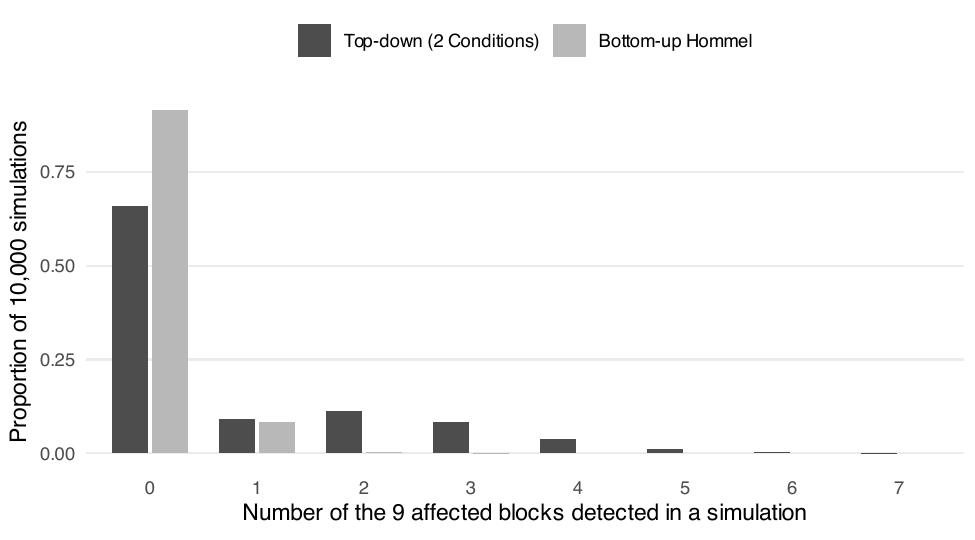}
\caption{Distribution of the number of the nine affected blocks detected per
simulation in the Detroit Promise design at $d = 0.30$ (10,000 simulations), for
the top-down ``2 Conditions'' procedure (unadjusted) and bottom-up Hommel. Both
concentrate mass at zero, but the top-down procedure recovers the affected
cluster --- two to seven blocks at once --- in a sizable fraction of simulations,
while bottom-up detection rarely exceeds a single block. This is the distribution
that the means and the at-least-$k$ probabilities in
Table~\ref{tab:big_sim_xtab} summarize.}
\label{fig:dpp_pmf}
\end{figure}

\subsection{Comparison with the Goeman--Finos inheritance procedure}
\label{sec:dpp_inheritance_supp}

The main text contrasts the top-down procedure with bottom-up Hommel. A third
method is the Goeman--Finos inheritance procedure
\autocite{goeman2012inheritance}, a hierarchical procedure that, like top-down
testing, assigns a $p$-value to every node and controls the FWER across the whole
tree. Table~\ref{tab:dpp_inheritance} runs all four procedures on the simulated
Detroit Promise design at $d = 0.30$, scoring the three tree procedures on the
same materialized trees so the node- and leaf-level numbers are comparable. Two
points stand out. First, the unadjusted ``2 Conditions'' gate controls the FWER
among individual blocks ($\DPPHighTwoCondLeafFWER$) but not among nodes: counting
the rejection of a whole null college as the false rejection it is raises the FWER
to $\DPPHighTwoCondNodeFWER$, above $\alpha$, exactly as the error load of
$\DPPHighErrorLoad$ predicts; the adaptive schedule brings the node FWER back to
$\DPPHighAdaptNodeFWER$. Second, the inheritance procedure controls the FWER but
divides $\alpha$ by the block count at the leaves, so on this wide tree it detects
only $\DPPHighInhLeafTR$ blocks --- fewer than even bottom-up Hommel --- while the
adaptive top-down procedure, at the same node-level FWER, detects several times
more. The inheritance procedure is a general-purpose method for any pre-specified
hierarchy of hypotheses; it was not designed for the block-randomized, data-split
trees here, where the top-down procedure can exploit the design's concentration of
power. The comparison measures the gain from that specialization, not a deficiency
in the inheritance procedure.

\begin{table}[tb]
\centering
\caption{The Goeman--Finos inheritance procedure \autocite{goeman2012inheritance} compared with the top-down procedure and bottom-up Hommel on the simulated Detroit Promise design at Cohen's $d = 0.30$ (10,000 simulations, $\alpha = 0.05$; 9 of 44 blocks non-null, all in one college). `Node FWER' counts a false rejection at any node, including the rejection of a whole null college; `Leaf FWER' counts only false rejections of individual blocks. `Nodes detected' and `Blocks detected' are the mean numbers of truly non-null nodes (at any level) and blocks rejected per simulation. The three tree methods are scored on the same materialized trees; bottom-up Hommel tests only the 44 blocks and so has no node-level entries. At this error load (8.85) the unadjusted procedure's node FWER is inflated above $\alpha$ -- rejecting a null college is a false rejection that no leaf records -- while the adaptive schedule restores control; the inheritance procedure controls the FWER but detects fewer blocks than even bottom-up Hommel.}
\label{tab:dpp_inheritance}
\begin{tabular}{lrrrr}
\toprule
Method & Node FWER & Leaf FWER & Nodes detected & Blocks detected \\
\midrule
Top-down, 2 Conditions & 0.077 & 0.038 & 2.13 & 0.80 \\
Top-down, adaptive $\alpha$ & 0.019 & 0.003 & 1.34 & 0.22 \\
Goeman--Finos inheritance & 0.019 & 0.000 & 1.09 & 0.04 \\
Bottom-up Hommel & --- & 0.024 & --- & 0.09 \\
\bottomrule
\end{tabular}
\end{table}

\printbibliography[title=References]

\end{document}